\DeclareMathOperator{\lcm}{lcm}
\newcommand{\Aa}{\color{cyan}{a_0}}
\newcommand{\Ab}{\color{cyan}{a_1}}
\newcommand{\Ac}{\color{cyan}{a_2}}
\newcommand{\Ad}{\color{cyan}{a_3}}
\newcommand{\Ae}{\color{cyan}{a_4}}
\newcommand{\Af}{\color{cyan}{a_5}}
\newcommand{\Ag}{\color{cyan}{a_6}}
\newcommand{\Ah}{\color{cyan}{a_7}}
\newcommand{\Ai}{\color{cyan}{a_8}}
\newcommand{\Aj}{\color{cyan}{a_9}}
\newcommand{\Ak}{\color{cyan}{a'_0}}
\newcommand{\Al}{\color{cyan}{a'_1}}
\newcommand{\Ba}{\color{blue}{b_0}}
\newcommand{\Bb}{\color{blue}{b_1}}
\newcommand{\Bc}{\color{blue}{b_2}}
\newcommand{\Bd}{\color{blue}{b_3}}
\newcommand{\Be}{\color{blue}{b_4}}
\newcommand{\Bf}{\color{blue}{b_5}}
\newcommand{\Bg}{\color{blue}{b_6}}
\newcommand{\Bh}{\color{blue}{b_7}}
\newcommand{\Ca}{\color{violet}{c_0}}
\newcommand{\Cb}{\color{violet}{c_1}}
\newcommand{\Cc}{\color{violet}{c_2}}
\newcommand{\Cd}{\color{violet}{c_3}}
\newcommand{\Ce}{\color{violet}{c_4}}
\newcommand{\Cf}{\color{violet}{c_5}}
\newcommand{\Cg}{\color{violet}{c_6}}
\newcommand{\Ch}{\color{violet}{c_7}}
\newcommand{\Da}{\color{magenta}{d_0}}
\newcommand{\Db}{\color{magenta}{d_1}}
\newcommand{\Dc}{\color{magenta}{d_2}}
\newcommand{\Dd}{\color{magenta}{d_3}}
\newcommand{\De}{\color{magenta}{d_4}}
\newcommand{\Df}{\color{magenta}{d_5}}
\newcommand{\Dg}{\color{magenta}{d_6}}
\newcommand{\Dh}{\color{magenta}{d_7}}
\newcommand{\Di}{\color{magenta}{d_8}}
\newcommand{\Dj}{\color{magenta}{d_9}}
\newcommand{\Dk}{\color{magenta}{d'_{0}}}
\newcommand{\Dl}{\color{magenta}{d'_{1}}}
\newcommand{\mi}[1]{m_{#1}}
\newtheorem{theorem}{Theorem}
\newtheorem{lemma}{Lemma}
\newtheorem{remark}{Remark}
\newtheorem{proposition}{Proposition}
\newtheorem{corollary}{Corollary}
\newtheorem{definition}{Definition}
\newtheorem{example}{Example}
\begin{document}

\title{Color Multiset Codes based on Sunmao Construction
}

\author{
{Wing Shing Wong,~\IEEEmembership{IEEE Life Fellow},
Chung Shue Chen,~\IEEEmembership{IEEE Senior Member}, 
Yuan-Hsun Lo~\IEEEmembership{IEEE~Member}
}
\thanks{The research was partially funded by the National Science and Technology Council of Taiwan under Grant NSTC 114-2628-M-153-001-MY3. 
\textit{(Corresponding author: Yuan-Hsun Lo)}
}
\thanks{W. S. Wong is with the Department of Information Engineering, The Chinese University of Hong Kong (CUHK), Shatin, Hong Kong SAR. Email: wswong@ie.cuhk.edu.hk}
\thanks{C. S. Chen is with Nokia Bell Labs,  Paris-Saclay Center, 12 rue Jean Bart, Massy 91300, France. Email: chung\_shue.chen@nokia-bell-labs.com}
\thanks{Y.-H. Lo is with the Department of Applied Mathematics, National Pingtung University, Taiwan.  Email: yhlo0830@gmail.com}
}

\maketitle 

\begin{abstract}
We present results on coding using multisets instead of ordered sequences.  The study is motivated by a moving object tracking problem in a sensor network and can find applications in settings where the order of the symbols in a codeword cannot be maintained or observed.  In this paper a multiset coding scheme is proposed on source data that can be organized as a flat or cyclic multi-dimensional integer lattice (grid).   A 
fundamental idea in the solution approach is to decompose the original source data  grid into sub-grids.  The original multiset coding problem can then be restricted to each of the sub-grid.  Solutions for the sub-grids are subsequently piece together to form the desired solution.  We name this circle of idea as sunmao construction in reference to woodwork construction method with ancient origin.   Braid codes are specific solutions defined using the sunmao construction.  They are easy to define for multi-dimensional grids.  Moreover for a code of a given code set size and multiset cardinality, if we measure coding efficiency by the number of distinct symbols required, then braid codes have asymptotic order equal to those that are optimal.   We also show that braid codes have interesting inherent error correction properties.
\end{abstract}

\begin{IEEEkeywords}
Industrial IoT, object tracking, localization, multiset combinatorial coding. 
\end{IEEEkeywords}

\section{Introduction}
\label{section:introduction}
{\it Color multiset coding} is defined over a source symbol set arranged in an $n$-dimensional cyclic or flat integer lattice, exploiting the grid topology for coding and decoding.  It was introduced and analyzed in 
\cite{ISIT24,Paper1,ICALP24}.   This paper aims to provide a systematic
color multiset code construction approach based on a concept referred to here as {\it sunmao construction}.  The naming is
inspired by the abstract resemblance to a {\it sunmao} joint, also known as a mortise-and-tenon joint, in architecture with ancient origin.

First of all, we briefly recall the idea of color multiset coding for the readers' convenience.  
Let $G$ be an $n$-dimensional cyclic grid (integer lattice)  with size defined by ${\bf M}=(M_1,\ldots, M_n)$ so that the grid contains exactly $M_1\times \dots \times M_n$ elements.  Let $\bf j$ be a grid point, which can also be regarded as an element in $\mathbb{Z}^n$.  Given ${\bf m}=(m_1,\ldots,m_n) \leq {\bf M}$, define the ${\bf m}$-{\it block} tagged at grid point $\bf j$ by the set of neighboring points, $B_{\bf m}({\bf j}) \triangleq \{ {\bf k} ~mod~ {\bf M}:{\bf j \leq k < j+m}, {\bf k}\in\mathbb{Z}^n\}$. 
We refer to $\textbf{M}$ and $\textbf{m}$ as the {\it grid size} and {\it block size} respectively.  
The parameters and notation will be formally defined in Section~\ref{section:notation}.
Denote the volume of $\bf m$, $\prod_{i=1}^{n}m_i$, by $\nu({\bf m})$ .

Let $\cal{A}$ be an alphabet.   We recall that a multiset is a set in which an element is allowed to appear multiple times \cite{multiset89}.  Let ${\cal{A}}_{l}$ denote the multiset in which all elements of $\cal{A}$ appear  $l$ times.

Let $\Phi$ be a mapping from $G$ to $\cal{A}$. 
For presentation convenience, we refer the alphabet symbols as colors.  A color multiset code is defined by mapping a source symbol $\bf j$ to the multiset 
\[
\Phi(B_{\bf m}({\bf j})) = \{ \Phi({\bf k}): {\bf k} \in B_{\bf m}({\bf j}) \} \subset {\mathcal{A}_{\nu(\bf m)}}.
\]
Note that all codewords have cardinality $\nu({\bf m})$.

Such a code has an interesting property --- neighboring source symbols are mapped to codewords that only differ slightly.   
For example, when $G$ is a 1-dimensional (1D) cyclic grid of size $M$ with block size $m$, the codewords for the grid points $x$ and $x+1$ differ by one element if $\Phi(x)\ne \Phi(x+m ~{\rm mod}~ M)$, or are identical otherwise.  
It follows that color multiset coding offers a method to construct  {\it combinatorial Gray codes} \cite{CGC97} when the source set is imposed with a grid topology.

It is worth noting that a more traditional approach when $G$ is a 1D grid is to define the codeword at $x$ by the sequence, $(\Phi(x),\ldots,\Phi(x+m-1 ~{\rm mod}~ M))$.  
However, there are applications in which using multisets as codewords are natural.  Object tracking over a proximity sensor network provides one such example\cite{ISIT24}.

Consider a mobile object tracking problem over a cyclic $n$-dimensional grid, $G$.  We assume the grid contains grid points of the form ${\bf x}=(j_1\delta,..., j_n\delta)$, where $j_i$'s are integers satisfying $0 \leq j_i < M_i$ 
for $(M_1,\ldots, M_n)={\bf M}$.
At each time slot $t$, the tracked object can randomly appear at a grid point in $G$. Proximity sensors are placed at all grid points so that for a positive integer vector ${\bf m} =(m_1,...,m_n)$, the sensor located at $\bf x$ can detect the object if it is located in the area $\{{\bf z}=(z_1,..., z_n) : (l_i+m_i-0.5)\delta \leq z_i  < l_i+m_i+0.5)\delta, \forall i \}$. 
If each sensor is equipped with a color LED indicator, which is turned on when the object is detected, then to
a distant observer who can only resolve light frequency and intensity but not light position, the multisets of lit colors provide the sole data to locate the moving object.
In \cite{Paper1}, the resulting multiset code is referred to as a \textit{multiset combinatorial Gray code} (MCGC).  It is easy to see that the above coding problem can be reformulated to the current context in this paper,

In addressing the object tracking problem it is natural to search for solutions that require the minimal number of light colors.  In other words, a natural performance goal is to find color coding schemes that can guarantee each {\bf m}-block in $G$ maps to a unique color multiset by using the minimum number of color symbols.
For certain system parameters, non-singular codes with minimum number of  colors are known \cite{Paper1}, but for systems with general parameters, this is an open problem.

It is worth noting that if the codewords are defined as $m$-sequences, the de Bruijn sequences~\cite{de1946combinatorial,Etzion84}
can offer optimal solutions to the problem.   Hence, our work can be viewed as a generalization of Bruijn sequences.
The color multiset code is also related to the idea
of universal cycles~\cite{CDG92} and universal cycles for multisets~\cite{HJZ09}.
The concept of $M$-sequences \cite{Kumar92} also involves multisets, but the studied objects are nevertheless viewed as color sequences, similar to de Bruijn sequences.

Obviously, there are other approaches to define
MCGC on multi-dimensional grids.  For example, there are
some related results in the literature extending de Bruijn sequences beyond 1D grids.  These include the following:
A two-dimensional generalization of the binary de Bruijn sequences is the prefect map \cite{Etzion88,Paterson94}, which is  used in range-finding, data scrambling, position location \cite{Position-Sensing93,Bruckstein12} and image vision 
applications \cite{PAGES2005}. 
Higher-dimensional extension of prefect maps and de Bruijn sequences can also be defined. 
Construction methods of orientable sequences, that is, cyclic binary sequences in which sub-sequences of length $m$ occurs at most once in either direction can be found in \cite{Orientable-Sequences22}. 
However, these combinatorial objects do not directly lead to MCGC.
Methods and constructions to generate binary codes for correction of a multidimensional  clustered error through component code can also be found in \cite{Error-Correction09,Error-correction98}. 

\subsection*{Contribution and Organization}
\label{section:contribution}

Given a grid $G$ and block size $\bf M$, our goal is to find efficient non-singular multiset color coding algorithms on $G$.  Here, efficiency is not measured by the absolute number of colors required.  Instead. for a given block size, we analyze the asymptotic order of the number of colors required as the grid volume, $\nu({\bf M})=\prod_{i=1}^{n} M_i$, tends to infinity and adopt this as the performance measure in this paper.

The main contributions of the paper are summarized below:
\begin{itemize} 
    \item We introduce a recursive approach, known as {\it sunmao construction}, to synthesis multiset color code by first decomposing a given $\bf m$-block into component sub-blocks and decompose grid $G$ accordingly to sub-grids.   The final solution is then obtained by piecing color multiset codes on the sub-grids together.
    \item We introduce a class of non-singular multiset color
codes for 1D grids based on the sunmao construction.  The resulting codes are known as {\it braid codes}.
    \item When a sunmao construction is built on sub-blocks that are all singletons, we refer the construction as {\it unitary}.  Using unitary sunmao construction and the concept of {\it product code}, which was introduced in \cite{Paper1}, we generalize braid code from 1D to multi-dimensional grids.
    \item  We show that multi-dimensional unitary braid codes are non-singular and their required number of colors achieves the asymptotically minimal order as the grid size tends to infinity,
    \item We show that there is a fast algorithm for decoding braid codes.
    \item We show that 1D braid codes possess natural error correction properties.

\end{itemize}

The rest of the paper is organized as follows.
In Section~\ref{section:system-model}, we introduce the mathematical system model analyzed in this paper.  
The sunmao construction approach is explained in section~\ref{section:MT}.  
Braid codes for 1D grids are defined in Section~\ref{section:Braid 1}.
These constructions are extended to general $n$-dimensional grids in Section~\ref{section:Braid n}.  
The asymptotic order of the number of colors required to construct a non-singular unitary braid code is analyzed in Section~\ref{section:order}. 
We present a decoding algorithm for braid codes in Section~\ref{section:decoding} that does not rely on a lookup table.
We present some basic error correction properties of 1D braid codes in Section~\ref{section:non-product}.
Finally, we conclude the paper in Section~\ref{section:conclusion}. 

\section{System Model} 
\label{section:system-model}

\subsection{Mathematical Notation}
\label{section:notation}
Let $S$ be a multiset with distinct elements, $1, \ldots, k$,
and denote the multiplicity of element $i$ by $l_i$. 
We can also represent $S$ by $(l_1,\ldots,l_k)$.   We use bold fonts to represent a vector or matrix. 
For ${\bf x}=(x_1,\ldots,x_n), {\bf y}=(y_1,\ldots,y_n)$, $\bf x \leq y$ is equivalent to $x_i\leq y_i$ for all $i$ and
${\bf x}< {\bf y}$ if $x_i<y_i$ for all $i$.  
To highlight coordinate component $i$ in a vector, $\bf x$, we  represent it as $(x_i;{\bf x}^-_i)$, where ${\bf x}^-_i$ is the $(n-1)$-dimensional vector with the $i$-th component removed.
For example if ${\bf x}=(x_1, x_2, x_3,x_4)$, then ${\bf x}^-_3=(x_1,x_2,x_4)$ and we can represent $\bf x$ as $(x_3;{\bf x}^-_3)$.
${\bf 1}_n$ and ${\bf 0}_n$ denote the $n$-dimensional unit vector and zero vector respectively.   
  
Let $\mathbb{Z}^+$ denote the set of all positive integers.
For $n\in\mathbb{Z}^+$, let $\mathcal{Z}_n$ be the set consisting of $\{0,1,\ldots,n-1\}$ and $\mathbb{Z}_n=\{0,1,\ldots,n-1\}$ denote the ring of residues modulo $n$.
Note that $\mathcal{Z}_n$ differs from $\mathbb{Z}_n$: $\mathcal{Z}_n$ simply denotes the set of elements of $\mathbb{Z}_n$, without any ring structure implied.

\subsection{Grids and Product Grids}
\label{section:math-definition}
Given a positive integer $n$ and ${\bf M}=(M_1,...,M_n)$ with $M_1,...,M_n \in\mathbb{Z}^+$, define an $n$-dimensional integer lattice by $G_{\bf M} \triangleq \mathcal{Z}_{M_1}\times \cdots \times \mathcal{Z}_{M_n}=\{{\bf j}=(j_1,...,j_n):\,j_i\in \mathcal{Z}_{M_i} \}$,  and define an $n$-dimensional cyclic integer lattice by  $G^c_{\bf M} \triangleq \mathbb{Z}_{M_1}\times \cdots \times \mathbb{Z}_{M_n}=\{{\bf j}=(j_1,...,j_n):\,j_i\in\mathbb{Z}_{M_i} \}$.
The expression ${\bf j}\bmod {\bf M}$ stands for $j_i \bmod{M_i}$ for all $i$,
so $G^c_{\bf M}$ can also be defined as $\{{\bf j}\bmod {\bf M}:\, j_i\in\mathbb{Z}\}$. 
For simplicity, we refer to $G_{\bf M}$ as a \textit{flat grid} and $G^c_{\bf M}$ as a \textit{cyclic grid}.  

Given cyclic grids, $G^c_{{\bf M}_1}=\mathbb{Z}_{M_{1,1}}\times \cdots \times \mathbb{Z}_{M_{1,n_1}}$ and $G^c_{{\bf M}_2}=\mathbb{Z}_{M_{2,1}}\times \cdots \times \mathbb{Z}_{M_{2,n_2}}$, of dimension $n_1$ and $n_2$ respectively, the {\it product grid} of $G^c_{{\bf M}_1}$ and $G^c_{{\bf M}_2}$, $G^c_{{\bf M}_1} \times G^c_{{\bf M}_2}$, of dimension $n=n_1+n_2$ is defined to be: $\{(z_1,...,z_n):  z_i \in \mathbb{Z}_{M_{1,i}} {\rm ~if~} 1 \leq i \leq n_1, z_i \in \mathbb{Z}_{M_{2,i}} {\rm ~if~} n_1+1 \leq i \leq n \}.$
Product grid for flat grids can be defined similarly.

\begin{definition}
For ${\bf m}=(m_1,...,m_n) < {\bf M}$, an $\bf m$-$\it block$ at a grid point $\bf x$ of a cyclic grid $G^c_{\bf M}$ is a subset consisting of points
$\{ {\bf j} \mod {\bf M}: {\bf x \leq j < x+m}, {\bf j}\in \mathbb{Z}^n \}$. 
An $\bf m$-$\it block$ at $\bf x$ of a flat grid, which is defined only for ${\bf 0}_n \leq \bf x < M-m$, is the subset $\{ \bf j:x \leq j < x+m , {\bf j}\in \mathbb{Z}^n\}$. 
\end{definition}

Denote an $\bf m$-block by $B_{\bf m}( \bf x)$ or simply as $B(\bf x)$ or $B$ if there is no ambiguity. 
The volume of an $\bf m$-block is $v({\bf m})=\prod_{i=1}^{n}m_i$.
We refer to the set of grid points at which an $\bf m$-block is well defined as a \textit{coding area}.
Note that the coding area for $G^c_{\bf M}$ is identical to the grid.
For a flat grid, it is easy to see that all grid points in the coding area can be uniquely identified by $\bf m$-blocks.  
As for a cyclic grid, except for the trivial case $m_i= M_i$ for some $i$, all elements in $G^c_{\bf M}$ can also be identified by the ${\bf m}$-blocks they are tagged at.

\subsection{Color Multiset Code}

We formalize the coding algorithm, which is referred to as
{\it color multiset coding} or color coding for short, by a mapping, $\Phi$, from $G^c_{\bf M}$ (or $G_{\bf M}$) to a set of $k$ symbols. 
We refer to the symbols as colors and use $[k]\triangleq\{0,1,\ldots,k-1\}$ to represent them if there is no ambiguity. 
We call $\Phi$ a color mapping and use $\mathcal{C}_{{\bf M};k}$ to denote the
collection of color mappings on $G^c_{\bf M}$ (or $G_{\bf M}$) with $k$ colors.

\begin{definition}
For a given color mapping $\Phi\in\mathcal{C}_{{\bf M};k}$, the \textit{color code}word of a grid point, $\bf x$, is the multiset $\{\Phi({\bf j}): {\bf j} \in B_{\bf m}({\bf x})\}$.
It is $\bf m$-\textit{distinguishable} if all color codewords in the coding area are uniquely defined.
\end{definition}

A color mapping problem aims to find a distinguishable color code for a given grid and block size.   
In the rest of this paper, we focus our attention on cyclic grids. Most of the results on cyclic grids can be extended to flat grids with minor adaptation.
Denote by $K^c_{\bf M}(\bf m)$ the minimal number of colors required for the existence of an ${\bf m}$-distinguishable color code on $G_{\bf M}^c$.

\section{The Sunmao Construction}
\label{section:MT}
The complexity of constructing distinguishable color codes grows with the grid dimension, grid size and block size. One approach to reduce complexity is to decompose a grid with a given block size into sub-grids with smaller block sizes.  
Although this approach can be applied to grids of any dimension, we start our discussion with 1D grids first.  

Given $M, m \in \mathbb{Z}^+$ such that $m$ divides $M$, consider a
color mapping problem on a 1D grid $G^c_M$ with block size $m$. 
Let $I \geq 2$ be an integer.
We decompose $m$ into $I$ sub-block sizes so that
$m=m_1+\dots+m_I$, where $m_1,...,m_I \in \mathbb{Z}^+$.  
Let $d_1=0$ and for $1 < i \leq I$, define $d_{i+1}=d_i+m_i$.  
Note that $d_{I+1}=m$.  Define $M_i=m_iM/m \in \mathbb{Z}$ so that
$M=M_1+\dots+M_I$.

Decompose $G^c_M$  into $I$ sub-grids according to the sub-block sizes by first partitioning $G^c_M$ into $I$ mutually exclusive subsets, $S_1,\dots,S_I$, where $S_i$ consists of the following points:
\[
\{ x\in \mathcal{Z}_M: d_{i} \leq x {\rm ~mod~} m < d_{i+1} \}.
\]
For $1\leq i \leq I$, an element $x\in S_i$ can be expressed as $jm+d_{i}+x_r$, with  $0\leq j <M/m$ and $0\leq x_r <m_i$.
Note that $|S_i|=m_iM/m=M_i$.
The function, $\theta_i$, mapping $x$ to $jm_{i}+x_r$ defines a bijective map from $S_i$ to $\mathbb{Z}_{M_i}$. We can thus identify $S_i$ with grid $G^c_{M_i}$ via $\theta_i$.  
When we refer to $x$ in the text, it can be interpreted
either as an element in $G^c_M$, $jm+d_i+x_r$, or as an element in $G^c_{M_i}$, $jm_i+x_r$, depending on the context.

We refer to $\{G^c_{M_1},\ldots,G^c_{M_I}\}$ as the {\it sunmao decomposition} of $G^c_M$.
Conversely, given such a collection of grids, the {\it sunmao synthesis}, $G^c_{M_1}\oplus\cdots \oplus G^c_{M_I}$ refers to be the cyclic grid $G^c_M$ with partition $\{S_i\}$ such that $S_i=\theta_i^{-1}(G^c_{M_i})$.

$G^c_M$ can be partitioned into $M/m$ $m$-blocks tagged at grid points $am$ for some $a \in \mathbb{Z}_{M/m}$. We refer to such blocks as {\it aligned blocks}.  Each aligned $m$-block clearly contains one and only one aligned $m_i$-block of $G^c_{M_i}$ located at $am+d_i$ on $G^c_M$ and $am_i$ on $G^c_{M_i}$.  
This result is generalized to an arbitrary block in the following lemma.
The proof is straightforward and deferred to Appendix~\ref{appendix:sunmao 1D}.

\begin{lemma}\label{lemma:sunmao 1D}
Consider a 1D grid $G^c_M$ with block size $m$ such that
$m$ divides $M$.  Given a sunmao decomposition of $G^c_M$ into
sub-grids $\{G^c_{M_1},\ldots,G^c_{M_I}\}$, the following properties hold:
\begin{enumerate}

\item Each $m$-block of $G^c_M$  consists of one and only one $m_i$
block of $G^c_{M_i}$, $1\leq i \leq I$.
\item Let $B$ be the $m$-block tagged at $ x =jm+d_{i}+x_r\in \mathbb{Z}_M$, $0 \leq j < M/m$ and $0 \leq x_r  < m_i$. (In the following statements, the index $j$ ranges from 0 to $M/m-1$ and $j+1$ stands for $j+1 {\rm ~mod~} M/m$.)
\begin{enumerate}
\item If $x_r=0$ and $i=1$, then the decomposed block consists of one aligned $m_l$-block tagged at $jm_l\in G^c_{M_l}$ for $l, 1 \leq l \leq I$.
    \item If $x_r=0$ and $1<i\leq I$, then the decomposed block consists of one aligned $m_l$-block tagged at $jm_l\in G^c_{M_l}$ for $l$, $i \leq l \leq I$ and at $(j+1)m_l\in G^c_{M_l}$ for $l, 1 \leq l <i$.  
    \item If $0 < x_r  < m_i$, then the decomposed block consists of  one $m_i$-block tagged at $jm_i+x_r\in G^c_{M_i}$  and one aligned $m_l$-block at $jm_l\in G^c_{M_l}$ for $l, i<l  \leq I$ and one at $(j+1)m_l\in G^c_{M_l}$ for $l, 1 \leq l <i$. 
\end{enumerate}
\item If $ x,  x' \in \mathbb{Z}_M$ are distinct grid points, then the collection of decomposed blocks of $B(x)$ and $B(x')$ are distinct.
\end{enumerate}
\end{lemma}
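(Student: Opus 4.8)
The plan is to establish the three parts in the order (2) $\Rightarrow$ (1) $\Rightarrow$ (3), since the explicit case analysis in part (2) makes the other two almost immediate. The single computational fact underlying everything is that a window of $m$ consecutive integers (taken mod $M$) meets each residue class mod $m$ exactly once; consequently $|B\cap S_l|=m_l$ for every $l$, and the only real question is \emph{where} the bijection $\theta_l$ sends these $m_l$ points. Because $B$ has length $m$ and the aligned blocks have length $m$, the window $B$ tagged at $x=jm+d_i+x_r$ meets at most two aligned blocks: it occupies aligned block $j$ from offset $d_i+x_r$ to its end and wraps into aligned block $j+1$ up to offset $d_i+x_r$. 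This is the only wrap I need to track.

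For part (2) I would fix $x=jm+d_i+x_r$ and, for each sub-grid index $l$, decide which copy of piece $l$ lies inside $B$ by comparing the piece endpoints $d_l,d_{l+1}$ against the window endpoints $d_i+x_r$ and $m+d_i+x_r$. For $l>i$ the piece $[jm+d_l,jm+d_{l+1})$ of block $j$ sits entirely inside $B$ (since $d_l\ge d_{i+1}>d_i+x_r$ while $d_{l+1}\le m$), and $\theta_l$ sends it to the aligned block at $jm_l$; for $l<i$ that piece of block $j$ lies before $x$, so instead the piece of block $j+1$ is captured and $\theta_l$ sends it to the aligned block at $(j+1)m_l$. The three displayed cases are then just the status of the \emph{pivot} piece $l=i$: when $x_r=0$ it is aligned, giving (a) if additionally $i=1$ so there is no wrap, and (b) if $i>1$; when $x_r>0$, piece $i$ is split between block $j$ (its tail, offsets $x_r,\dots,m_i-1$) and block $j+1$ (its head, offsets $0,\dots,x_r-1$). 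The key observation for (c) is that $\theta_i$ glues these two arcs into one contiguous run $jm_i+x_r,\dots,jm_i+m_i+x_r-1$ in $\mathbb{Z}_{M_i}$, i.e.\ a single $m_i$-block tagged at $jm_i+x_r$. Part (1) is then immediate: in every case each sub-grid $G^c_{M_l}$ contributes exactly one $m_l$-block.

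For part (3) I would prove the stronger statement that $x$ is recoverable from the unordered collection of decomposed blocks, i.e.\ that the decomposition map is injective. Reading the collection, I first look for a sub-block that is \emph{not} aligned (its tag is not a multiple of the relevant $m_l$); by part (2) at most one such block can occur, it must be the pivot, and it reveals $i$, $j$ and $x_r$ through $\text{tag}=jm_i+x_r$, whence $x=jm+d_i+x_r$ (case (c)). If every sub-block is aligned, the tags take the form $j'm_l$; if a single common $j'$ works for all $l$ then $x=j'm$ (case (a)), and otherwise the indices split into a value $j$ on the high range $l\ge i$ and $j+1$ on the low range $l<i$, which pins down both $j$ and the threshold $i$, giving $x=jm+d_i$ (case (b)). Here I would invoke $M/m\ge 2$ (forced by $m<M$ in the block definition) so that $j$ and $j+1$ are genuinely distinct mod $M/m$ and the split is unambiguous; this also separates case (a) from case (b).

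The inequality checks assigning each piece to block $j$ or block $j+1$ are routine and I expect to dispatch them quickly. The main obstacle is the careful handling of the pivot piece in case (c): verifying that the wrap-around tail and head really concatenate into a single contiguous $m_i$-block under $\theta_i$, and keeping the index arithmetic correct modulo $M/m$ at the $j\to j+1$ boundary (including the endpoint behaviour at $i=1$ and $i=I$, and the fact that $m_i=1$ forces $x_r=0$ so case (c) cannot arise for a singleton piece). Once this gluing is confirmed, the reconstruction in part (3) is forced and injectivity follows.
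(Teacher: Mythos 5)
Your proposal is correct and takes essentially the same route as the paper's proof: both establish part (2) by computing $B\cap S_l$ case by case (including the key gluing step where $\theta_i$ maps the wrapped tail-plus-head of the pivot piece to the single non-aligned $m_i$-block at $jm_i+x_r$), deduce part (1) immediately, and then use the case structure of (2) to show the collection of sub-blocks determines $(j,i,x_r)$ and hence $x$. Your reconstruction phrasing of part (3) and the explicit observation that $m<M$ forces $M/m\geq 2$ are merely more explicit renderings of the paper's terse contradiction argument.
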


Next, we show how to construct a color code for $G^c_M$ based on color codes on $G^c_{M_i}$.  Let $\Phi_i$ be a color mapping from $G^c_{M_i}$ to a set of distinct colors, $C_i$. The colors sets $C_i$ are required to be mutually exclusive.  

\begin{definition}
Let $C_M=\bigcup_{1 \leq i \leq I} C_i$.  Define a \emph{sunmao color mapping}, $\Phi$, from $G^c_M$ to $C_M$ by $\Phi(x)=\Phi_i(x)$ if $x\in G_{M_i}$.  We represent $\Phi$ by $(\Phi_1,\ldots,\Phi_I)$.
\end{definition}

\begin{proposition}\label{prop:sunmao code}
$\Phi$ is $m$-distinguishable if all $\Phi_i$'s are $m_i$-distinguishable, for $1\leq i \leq I$.
\end{proposition}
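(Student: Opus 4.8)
The plan is to exploit the pairwise disjointness of the color sets $C_1,\dots,C_I$, which lets the sunmao codeword of any grid point be read off sub-grid by sub-grid, after which the combinatorial facts already established in Lemma~\ref{lemma:sunmao 1D} finish the argument with no further work. I would prove the contrapositive form of $m$-distinguishability: if two grid points $x,x'\in\mathbb{Z}_M$ have equal color codewords under $\Phi$, then $x=x'$.

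First I would fix a grid point $x$ and examine its codeword $\{\Phi(\mathbf{j}):\mathbf{j}\in B(x)\}$. By part~(1) of Lemma~\ref{lemma:sunmao 1D}, the block $B(x)$ contains exactly one $m_i$-block of the sub-grid $G^c_{M_i}$ for each $i$; denote it by $B^{(i)}(x)$. Since $\Phi$ restricts to $\Phi_i$ on the copy of $G^c_{M_i}$ sitting inside $G^c_M$, and since the palettes $C_i$ are mutually exclusive, the multiset codeword splits as the multiset union $\bigsqcup_{i=1}^{I}\{\Phi_i(\mathbf{j}):\mathbf{j}\in B^{(i)}(x)\}$. Disjointness is the crucial point: it guarantees that the portion of the $\Phi$-codeword drawn from the palette $C_i$ is exactly the $\Phi_i$-codeword of the single $m_i$-block $B^{(i)}(x)$, so no mixing or cancellation across sub-grids can occur.

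Next, assuming $x$ and $x'$ share the same $\Phi$-codeword, I would restrict both codewords to each palette $C_i$ in turn. The displayed splitting then forces the $\Phi_i$-codeword of $B^{(i)}(x)$ to equal that of $B^{(i)}(x')$ for every $i$. Because each $\Phi_i$ is $m_i$-distinguishable, equal $\Phi_i$-codewords can only come from $m_i$-blocks tagged at the same grid point of $G^c_{M_i}$; hence $B^{(i)}(x)=B^{(i)}(x')$ for all $i$, i.e. the two collections of decomposed blocks coincide. Part~(3) of Lemma~\ref{lemma:sunmao 1D} asserts that distinct grid points yield distinct collections of decomposed blocks, so this equality forces $x=x'$, completing the proof.

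The main obstacle I anticipate is bookkeeping rather than depth: one must check carefully that filtering the $\Phi$-codeword by the palette $C_i$ returns precisely the $\Phi_i$-codeword of a single $m_i$-block, and not a partial block or a union of blocks. This rests jointly on the disjointness of the $C_i$ and on part~(1) of Lemma~\ref{lemma:sunmao 1D}. The various tag positions enumerated in part~(2) of the lemma (tags at $j$ versus $j+1$, and the boundary cases $x_r=0$) do not affect the argument, since $m_i$-distinguishability recovers whatever tag is present regardless of which case applies; everything else is a direct application of the lemma.
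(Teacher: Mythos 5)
Your proposal is correct and follows essentially the same route as the paper's proof: both rest on the disjointness of the palettes $C_i$, the $m_i$-distinguishability of the component codes, and parts (1) and (3) of Lemma~\ref{lemma:sunmao 1D}. The only difference is organizational—the paper argues by contradiction (distinct tags yield some sub-grid whose decomposed sub-blocks, hence images, differ), while you run the same implications in the contrapositive direction—so the two arguments are logically identical.
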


\begin{proof}
Suppose $B(x)$ and $B(x')$ have identical color multisets
and $x\ne x'$.
By Lemma \ref{lemma:sunmao 1D}, there is a sub-grid $G^c_{M_l}$ on
which $B(x)$ and $B(x')$ have different decomposed sub-blocks.  Under $\Phi_l$, the image sets must be different.  The color sets for different sub-grids are mutually exclusive, so
$B(x)$ and $B(x')$ cannot have identical color multisets, a contradiction,  
\end{proof}
The converse of Proposition \ref{prop:sunmao code} is not
valid. Indeed, the sunmao construction is useful exactly
because we can construct distinguishable color codes
based on non-distinguishable component codes.
Details are described in the following section.

It is possible to extend the sunmao construction to $n$-dimensional grids.  However, for the purpose of this paper, it suffices to focus on the following special case.

Given vectors ${\bf M}=(M_1,\ldots ,M_n)$ and ${\bf m}=(m_1, \ldots ,m_n)$ with $m_i,M_i,M'_i=M_i/m_i \in \mathbb{Z}^+$ for all $i$, consider a grid $G^c_{\bf M}=\mathbb{Z}_{M_1} \times \ldots \times \mathbb{Z}_{M_n}$ with block size $\bf m$.
Let $\{S_{\bf J}: {\bf 0}_n\leq  {\bf J}=(j_1,\ldots,j_n) < {\bf m}\}$ be a partition of $G^c_{\bf M}$ defined by:
$S_{\bf J}=\{{\bf J}+(l_1m_1,\ldots ,l_nm_n): 0\leq l_i <M'_i, \forall i, 1\leq i \leq  n\}$.
The mapping $\theta_{\bf J}:S_{\bf J}\to G^c_{\bf M'}=\mathbb{Z}_{M'_1}\times\ldots\times\mathbb{Z}_{M'_n}$ then can be defined by $\theta_{\bf J}\left((j_1+l_1m_1,\ldots,j_n+l_nm_n)\right)=(l_1,\ldots,l_n)$, 
where ${\bf M'}=(M'_1,\ldots,M'_n)$.  
Note that $\theta_{\bf J}$ is bijective, and thus $S_{\bf J}$ and $G^c_{\bf M'}$ are regarded as equivalent grids.
We use $G^c_{\bf M';J}$ to emphasize the cyclic grid that is associated with $S_{\bf J}$.

\begin{definition}
The collection of cyclic grids, $\{G^c_{\bf M';J}:\,{\bf 0}_n\leq {\bf J} < {\bf m}\}$ is defined to be the {\it unitary sunmao decomposition} of $G^c_{\bf M}$.
\end{definition}

Note that the total number of sub-grids in the above decomposition is $v({\bf m})$ and each of them has a grid size $\mathbb{Z}_{M'_1}\times\ldots\times\mathbb{Z}_{M'_n}$.  The following result is trivial to establish, but crucial to our proposed construction.

\begin{proposition}

In a unitary sunmao decomposition of an $n$-dimensional grid $G^c_{\bf M}$, each $\bf m$-block of $G^c_{\bf M}$ consists of one and only one element from each of the component sub-grids.
\end{proposition}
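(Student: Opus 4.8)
The plan is to exhibit, for each fixed grid point ${\bf x}$, an explicit bijection between the points of the block $B_{\bf m}({\bf x})$ and the index set $\{{\bf J}:{\bf 0}_n\le {\bf J}<{\bf m}\}$ that labels the component sub-grids. Since $|B_{\bf m}({\bf x})|=v({\bf m})$ equals the number of sub-grids, producing such a bijection is exactly what is needed: it forces each sub-grid to meet the block in precisely one point.

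First I would record the membership criterion implicit in the definition of the partition: a point ${\bf y}=(y_1,\ldots,y_n)\in G^c_{\bf M}$ lies in $S_{\bf J}$ if and only if ${\bf J}={\bf y}\bmod {\bf m}$, read coordinatewise. Indeed, $S_{\bf J}$ consists precisely of the points whose $i$-th coordinate has the form $j_i+l_i m_i$ with $0\le l_i<M'_i$, that is, whose $i$-th coordinate is congruent to $j_i$ modulo $m_i$ with $0\le j_i<m_i$. Hence the sub-grid $G^c_{\bf M';J}$ containing ${\bf y}$ is recovered simply by reducing each coordinate of ${\bf y}$ modulo the corresponding $m_i$.

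Next I would parametrize the block. Every point of $B_{\bf m}({\bf x})$ has the form $({\bf x}+{\bf t})\bmod {\bf M}$ for a unique ${\bf t}$ with ${\bf 0}_n\le {\bf t}<{\bf m}$, and by the criterion above its sub-grid label is $(({\bf x}+{\bf t})\bmod {\bf M})\bmod {\bf m}$. The one arithmetic point to invoke here is the standing hypothesis $m_i\mid M_i$: because $M_i$ is a multiple of $m_i$, reducing modulo $M_i$ and then modulo $m_i$ coincides with reducing modulo $m_i$ directly. Thus the label of this block point collapses to $({\bf x}+{\bf t})\bmod {\bf m}$, eliminating any interaction between the cyclic wrap-around at scale ${\bf M}$ and the partition at scale ${\bf m}$.

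Finally I would verify the bijection coordinatewise. The map ${\bf t}\mapsto ({\bf x}+{\bf t})\bmod {\bf m}$ factors as a product over $i$ of the single-coordinate maps $t_i\mapsto (x_i+t_i)\bmod m_i$, each of which is translation by $x_i$ on $\{0,\ldots,m_i-1\}$ and hence a bijection of that set onto itself; a product of bijections is a bijection. Therefore, as ${\bf t}$ ranges over $\{{\bf 0}_n\le {\bf t}<{\bf m}\}$, the label $({\bf x}+{\bf t})\bmod {\bf m}$ ranges over $\{{\bf J}:{\bf 0}_n\le {\bf J}<{\bf m}\}$, attaining each value exactly once, so each of the $v({\bf m})$ sub-grids contributes exactly one element to $B_{\bf m}({\bf x})$. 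There is essentially no hard step: the content of the proposition is entirely elementary once $m_i\mid M_i$ is used to collapse the double reduction, after which the claim is just the observation that a coordinatewise translation permutes a box of residues.
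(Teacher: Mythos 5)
Your proof is correct. Note that the paper itself gives no argument for this proposition --- it is stated with the remark that it is ``trivial to establish'' --- and your write-up is precisely the routine verification that remark leaves implicit: identify the sub-grid label of a point as its coordinatewise residue ${\bf y}\bmod {\bf m}$, use the standing hypothesis $m_i \mid M_i$ to collapse the double reduction $(\cdot \bmod M_i)\bmod m_i$ so the cyclic wrap-around is harmless, and observe that ${\bf t}\mapsto({\bf x}+{\bf t})\bmod {\bf m}$ is a coordinatewise translation, hence a bijection of the box of residues, so each of the $v({\bf m})$ sub-grids meets the block exactly once.
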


\section{Braid Code for 1D Grids}
\label{section:Braid 1}

To obtain an $m$-distinguishable color code via sunmao decomposition, we propose a construction
we refer to as {\it braid code}.
We will show that the number of color symbols required by
an important class of braid codes, the unitary braid codes, has the same asymptotic order as that of the optimal codes when the grid size tends to infinity.

\subsection{Construction Based on Repetitive Code}
Let $\ell$ be a positive integer that divides $M$. 
We regard grid $G^c_{M}$ as a concatenation of $M/{\ell}$ copies 
of $G^c_{\ell}$ in the subsequent discussion. 

\begin{definition}\label{definition:repetitive code}
Given a color code on $G^c_\ell$ defined by a color mapping, $\Gamma$, the \emph{repetitive code} on $G^c_M$, defined by the color mapping $\Phi$, is constructed by replicating $\Gamma$ to all of $G^c_{M}$, that is, $\Phi(x) \triangleq\Gamma(x \mod{\ell})$.
We refer to $G^c_\ell$  as a \emph{generator} of $G^c_M$ and $\Phi$ as being based on $\Gamma$.
\end{definition}

Given a sunmao decomposition of a 1D grid $G^c_M$ with sub-block sizes, $m_i$ and suppose there is an $m_i$-distinguishable color code on $G^c_{\ell_i}$ for some integer, $\ell_i$ for all $i$, it is possible to synthesize an $m$-distinguishable code for $G^c_M$ provided the system parameters satisfy certain conditions to be
detailed below.  The resulting
codes are referred to as {\it braid codes}.

Let $\lcm$ and $\gcd$ denote the least common multiple and the greatest common divisor, respectively.
The family of braid codes is defined as follows.

\begin{definition}\label{definition:braid code}
Consider a 1D grid $G^c_M$ with block size $m$, where $m$ divides $M$.  Let $\{G^c_{M_1},\ldots,G^c_{M_I}\}$ be a sunmao decomposition of $G^c_M$ so that $G^c_{M_i}$ has block size $m_i$.  
For $1 \leq i \leq I$, let $\Phi_i$ be a repetitive map on $G^c_{M_i}$ based on an $m_i$-distinguishable color code on a generator, $G^c_{\ell_i}$, with $\gcd(m_i,\ell_i)=c_i$, $\ell_i=gc_iq_i$, where $g$ is a positive integer satisfying $gc_i>m_i$ for all $i$.
Moreover, the grid size satisfies $M=mgQ$, where $Q=\lcm(q_1,\cdots,q_I)$.
A braid code on $G^c_M$ associated with  $\{G^c_{M_1},\ldots,G^c_{M_I}\}$ is defined by the color mapping, $\Phi_M=(\Phi_1, \ldots, \Phi_I)$.

\end{definition}

Note that under the stated conditions, $M_i$ is divisible by $l_i$ so the repetitive codes are well defined.  Note
also the condition $\gcd(m_i,\ell_i)=c_i$ is equivalent to $\gcd(m_i/c_i,gq_i)=1$.
Moreover, $g>1$ since $c_i$ is a factor of $m_i$.

There are two special classes of braid codes.
\begin{enumerate}
\item If we set $c_i=m_i$ for each $i$, then $\ell_i=gm_iq_i$ and $g>1$. The resulting braid code is referred to as \textit{class 1}.
\item If we set $c_i=1$ for each $i$, then $\ell_i=gq_i$ and $g>m_i$. The resulting braid code is referred to as \textit{class 2}.
\end{enumerate}

We note that the Synthetic Construction given in \cite{Paper1} is a special case of the braid code of class 1.
More precisely, it is the case that $I=2$ and $\gcd(q_1,q_2)=1$.

\begin{example}\label{ex:braid_code} \rm
Consider $M=75, m=5, I=2$ and $m_1=2, m_2=3$.
It follows that $M_1=30, M_2=45$ and the sub-grids are, respectively, $G^c_{30}$ and $G^c_{45}$.
As $M/m=15$, we may let $g=3,Q=5$ or $g=5, Q=3$.
Consider the following three braid codes.

(1) When $g=3,Q=5$, we pick $c_1=2,q_1=1$ and $c_2=3,q_2=5$, leading to $\ell_1=6$ and $\ell_2=45$.
The repetitive code, $\Phi_1$, on $G^c_{30}$ is based on a $2$-distinguishable color mapping, $\Gamma_1$, on $G^c_{6}$.
According to \cite{Paper1}, one needs a minimum of 3 colors for $\Gamma_1$.
Using the color set $\{a_1,a_2,a_3\}$, one candidate for $\Gamma_1$ is $a_1a_1a_2a_2a_3a_3$.
Replicating this pattern to all $G^c_{30}$ leads to the repetitive code $\Phi_1$:
\begin{equation}\label{eq:ex_braid_case1-1}
a_1a_1a_2a_2a_3a_3\ a_1a_1a_2a_2a_3a_3\ a_1a_1a_2a_2a_3a_3\ a_1a_1a_2a_2a_3a_3\ a_1a_1a_2a_2a_3a_3. 
\end{equation}
Here we use the convention that a color mapping on a grid is represented by a sequence of the images of the grid points.
Similarly, we need a $3$-distinguishable color mapping for $G^c_{45}$, denoted by $\Phi_2$.
Note that $\Gamma_2=\Phi_2$ in this case.
According to \cite{Paper1}, the minimum number of colors is $6$.
Using $\{b_1,b_2,b_3,b_4,b_5,b_6\}$ as a color set, $\Phi_2$ can be chosen by
\begin{equation}\label{eq:ex_braid_case1-2}
\begin{split}
&b_1b_1b_1b_2b_2b_2b_3b_3b_3\ b_1b_1b_6b_6b_3b_1b_5b_5b_2\ b_2b_4b_5b_3b_5b_3b_2b_4b_4 \\
&b_3b_3b_6b_2b_1b_4b_1b_4b_6\ b_2b_6b_2b_5b_1b_4b_3b_6b_5.
\end{split}
\end{equation}
Combining \eqref{eq:ex_braid_case1-1}--\eqref{eq:ex_braid_case1-2}, the resulting braid code on $G^c_{75}$ is:
\begin{align*}
&a_1a_1b_1b_1b_1\ a_2a_2b_2b_2b_2\ a_3a_3b_3b_3b_3\ a_1a_1b_1b_1b_6\ a_2a_2b_6b_3b_1 \\
&a_3a_3b_5b_5b_2\ a_1a_1b_2b_4b_5\ a_2a_2b_3b_5b_3\ a_3a_3b_2b_4b_4\ a_1a_1b_3b_3b_6 \\
&a_2a_2b_2b_1b_4\ a_3a_3b_1b_4b_6\ a_1a_1b_2b_6b_2\ a_2a_2b_5b_1b_4\ a_3a_3b_3b_6b_5.
\end{align*}

(2) When $g=3,Q=5$, if $c_1=1,q_1=1$ and $c_2=3,q_2=5$, then we get $\ell_1=3$ and $\ell_2=45$.
The repetitive code, $\Phi_1$, on $G^c_{30}$ is based on a 2-distinguishable color mapping, $\Gamma_1$, on $G^c_3$.
By~\cite{Paper1} again, 3 colors are needed.
Similar (1), with the color set $\{a_1,a_2,a_3\}$, the repetitive code $\Phi_1$ is then given by
\begin{equation}\label{eq:ex_braid_case2}
a_1a_2a_3a_1a_2a_3\ a_1a_2a_3a_1a_2a_3\ a_1a_2a_3a_1a_2a_3\ a_1a_2a_3a_1a_2a_3\ a_1a_2a_3a_1a_2a_3.
\end{equation}
By using the same $\Phi_2$ for $G^c_{45}$ as given in~\eqref{eq:ex_braid_case1-2}, the resulting braid code is
\begin{align*}
&a_1a_2b_1b_1b_1\ a_3a_1b_2b_2b_2\ a_2a_3b_3b_3b_3\ a_1a_2b_1b_1b_6\ a_3a_1b_6b_3b_1 \\
&a_2a_3b_5b_5b_2\ a_1a_2b_2b_4b_5\ a_3a_1b_3b_5b_3\ a_2a_3b_2b_4b_4\ a_1a_2b_3b_3b_6 \\
&a_3a_1b_2b_1b_4\ a_2a_3b_1b_4b_6\ a_1a_2b_2b_6b_2\ a_3a_1b_5b_1b_4\ a_2a_3b_3b_6b_5.
\end{align*}

(3) When $g=5,Q=3$, we can pick $c_1=1,q_1=3$ and $c_2=1,q_2=1$, leading to $\ell_1=15$ and $\ell_2=5$.
Using color set $\{a_1,a_2,a_3,a_4,a_5\}$ for $G^c_{15}$ and $\{b_1, b_2,b_3\}$ for $G^c_{5}$, the two based color mappings can set to be
\begin{align*}
&\Gamma_1: a_1a_1a_2a_2a_3a_3a_4a_4a_5a_5a_1a_3a_5a_2a_4 \\
&\Gamma_2: b_1b_1b_2b_2b_3.
\end{align*}
The two repetitive codes are given by
\begin{equation}\label{eq:ex_braid_case3-1}
\Phi_1:\ a_1a_1a_2a_2a_3a_3\ a_4a_4a_5a_5a_1a_3\ a_5a_2a_4a_1a_1a_2\ a_2a_3a_3a_4a_4a_5\ a_5a_1a_3a_5a_2a_4,
\end{equation}
and
\begin{equation}\label{eq:ex_braid_case3-2}
\begin{split}
\Phi_2:\ 
&b_1b_1b_2b_2b_3b_1b_1b_2b_2\ b_3b_1b_1b_2b_2b_3b_1b_1b_2\ b_2b_3b_1b_1b_2b_2b_3b_1b_1\\
&b_2b_2b_3b_1b_1b_2b_2b_3b_1\ b_1b_2b_2b_3b_1b_1b_2b_2b_3.
\end{split}
\end{equation}
Then, the resulting braid code on $G^c_{75}$ is:
\begin{align*}
&a_1a_1b_1b_1b_2\ a_2a_2b_2b_3b_1\ a_3a_3b_1b_2b_2\ a_4a_4b_3b_1b_1\ a_5a_5b_2b_2b_3 \\
&a_1a_3b_1b_1b_2\ a_5a_2b_2b_3b_1\ a_4a_1b_1b_2b_2\ a_1a_2b_3b_1b_1\ a_2a_3b_2b_2b_3 \\
&a_3a_4b_1b_1b_2\ a_4a_5b_2b_3b_1\ a_5a_1b_1b_2b_2\ a_3a_5b_3b_1b_1\ a_2a_4b_2b_2b_3.
\end{align*}
\end{example}

The three braid codes in Example~\ref{ex:braid_code} are all $5$-distinguishable on $G^c_{75}$.
The first and second braid codes use $9$ color symbols, while the third one uses $8$ only.
Note that the first one is class 1 and the third one is class 2.

\begin{lemma}\label{lemma:code distance}
Let $\Phi$ define a braid code.  Consider two $m_i$-blocks of $G^c_{M_i}$,
$B(x)$ and $B(x')$.
\begin{enumerate}
\item If $\Phi_i(B(x))=\Phi_i(B(x'))$, then the distance between $x$ and $x'$ over $G^c_{M_i}$ is divisible by $\ell_i=gc_iq_i$.
\item If both blocks are aligned on $G^c_{M_i}$, then $\Phi_i(B(x))=\Phi_i (B(x'))$ only if the distance between $x$ and $x'$ over $G^c_{M_i}$ is divisible by $gm_iq_i$.
\end{enumerate}
\end{lemma}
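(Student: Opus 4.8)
The plan is to reduce both claims to the $m_i$-distinguishability of the generator mapping $\Gamma_i$ on $G^c_{\ell_i}$, supplemented in the second part by the arithmetic relations among $m_i, c_i, g, q_i$. The crucial first observation is that the repetitive structure makes the codeword of any $m_i$-block of $G^c_{M_i}$ readable directly off the generator: since $\Phi_i(z)=\Gamma_i(z \bmod \ell_i)$ and $\ell_i \mid M_i$ (because $M_i=m_igQ$, $q_i\mid Q$, and $c_i\mid m_i$ give $\ell_i=gc_iq_i \mid m_igQ$), reducing mod $M_i$ and then mod $\ell_i$ agrees with reducing mod $\ell_i$ directly. Hence $\Phi_i(B(x))$ coincides with the generator codeword $\Gamma_i(B_{m_i}(x \bmod \ell_i))$ in $G^c_{\ell_i}$. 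I would also record that $\ell_i=gc_iq_i \ge gc_i > m_i$, so $G^c_{\ell_i}$ genuinely supports $m_i$-blocks and $m_i$-distinguishability is meaningful there.

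Part (1) is then immediate. Assume $\Phi_i(B(x))=\Phi_i(B(x'))$. By the observation above, the generator codewords at the points $x \bmod \ell_i$ and $x' \bmod \ell_i$ coincide. Since $\Gamma_i$ is $m_i$-distinguishable and the coding area of a cyclic grid is the whole grid, these two grid points of $G^c_{\ell_i}$ must be equal, i.e. $x \equiv x' \pmod{\ell_i}$. As $\ell_i \mid M_i$, the cyclic difference $(x-x')\bmod M_i$ is a well-defined multiple of $\ell_i$, which is the claim.

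For Part (2) I would specialize to aligned blocks, so $x=am_i$ and $x'=a'm_i$ for integers $a,a'$. Part (1) already yields $\ell_i \mid (x-x')$, i.e. $gc_iq_i \mid (a-a')m_i$. Writing $m_i=c_i(m_i/c_i)$ and cancelling the common factor $c_i$ gives $gq_i \mid (a-a')(m_i/c_i)$. At this point I invoke the coprimality recorded after Definition~\ref{definition:braid code}, namely that $\gcd(m_i,\ell_i)=c_i$ is equivalent to $\gcd(m_i/c_i,\,gq_i)=1$; this lets me strip off the factor $m_i/c_i$ to conclude $gq_i \mid (a-a')$, whence $gm_iq_i \mid (a-a')m_i = x-x'$. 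Since $q_i\mid Q$ we also have $gm_iq_i \mid M_i$, so the divisibility of the cyclic distance is again well-posed.

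There is no deep obstacle here: both parts are essentially bookkeeping layered on top of the distinguishability of $\Gamma_i$. The only points that require genuine care are (a) confirming $\ell_i \mid M_i$ (and $gm_iq_i \mid M_i$ in Part~2) so that ``distance over $G^c_{M_i}$ divisible by $\cdots$'' is a meaningful statement in the cyclic setting, and (b) applying the coprimality hypothesis in exactly the equivalent form $\gcd(m_i/c_i,\,gq_i)=1$, which is precisely what upgrades the weaker conclusion $\ell_i \mid (x-x')$ of Part~(1) into the sharper aligned bound $gm_iq_i \mid (x-x')$ of Part~(2).
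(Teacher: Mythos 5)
Your proof is correct and follows essentially the same route as the paper's: part (1) reduces to the $m_i$-distinguishability of the generator code on $G^c_{\ell_i}$ via the repetitive structure, and part (2) combines this with alignment and the hypothesis $\gcd(m_i,\ell_i)=c_i$. The only cosmetic difference is that the paper packages the part-(2) arithmetic as the identity $\lcm(\ell_i,m_i)=gm_iq_i$, while you unfold the same computation by cancelling $c_i$ and applying Euclid's lemma with $\gcd(m_i/c_i,\,gq_i)=1$; your version also makes explicit the divisibility bookkeeping ($\ell_i\mid M_i$, $gm_iq_i\mid M_i$, and $\ell_i>m_i$) that the paper leaves implicit.
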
 
\begin{proof}
Suppose $\Phi_i(B(x))=\Phi_i(B(x'))$.
Since $\Phi_i$ is a repetitive code based on an $m_i$-distinguishable color code on $G^c_{\ell_i}$, so the distance between $x$ and $x'$ over $G^c_{M_i}$ is divisible by $\ell_i$.
The first statement holds. 
Moreover, when $B(x)$ and $B(x')$ are aligned $m_i$-blocks, it is clear that the distance between $x$ and $x'$ over $G^c_{M_i}$ is also divisible by $m_i$.
By (i), the distance is divisible by $\lcm(\ell_i,m_i)$, which is equal to $gm_iq_i$ since $c_i$ divides $m_i$.  
\end{proof}

We now prove that the braid codes are $m$-distinguishable.

\begin{theorem}\label{theorem:construction}
The braid code given by Definition~\ref{definition:braid code} is $m$-distinguishable.
\end{theorem}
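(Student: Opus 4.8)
The plan is to argue by contradiction: assume two distinct grid points $x \ne x'$ of $G^c_M$ have identical color codewords under $\Phi_M$, and force $x = x'$. The first step is purely structural. Since the color sets $C_1,\ldots,C_I$ are mutually disjoint and, by Lemma~\ref{lemma:sunmao 1D}(1), each $m$-block of $G^c_M$ decomposes into exactly one $m_l$-block on each sub-grid $G^c_{M_l}$, equality of the two codewords as multisets forces, color class by color class, $\Phi_l(B_l(x)) = \Phi_l(B_l(x'))$ for every $l$, where $B_l(\cdot)$ denotes the decomposed $m_l$-block. This is the same disjointness mechanism used in Proposition~\ref{prop:sunmao code}, and it splits the single global equality into $I$ independent per-sub-grid matching conditions.

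Next I would make the positions explicit. Writing $x = jm + d_i + x_r$ and $x' = j'm + d_{i'} + x'_r$ with $0 \le x_r < m_i$ and $0 \le x'_r < m_{i'}$, Lemma~\ref{lemma:sunmao 1D}(2) gives closed forms for the tag points $t_l(x), t_l(x')$ of the decomposed blocks on each $G^c_{M_l}$: an aligned multiple of $m_l$ at ``level'' $j$ or $j+1$ for $l \ne i$, and $jm_i + x_r$ on the single split grid $l = i$ (and symmetrically for $x'$). The crucial point is that, for each of $x$ and $x'$, at most one decomposed block is non-aligned. Consequently, for every $l \notin \{i,i'\}$ both blocks are aligned and Lemma~\ref{lemma:code distance}(2) supplies the strong divisibility $gm_lq_l \mid t_l(x') - t_l(x)$, equivalently $gq_l$ divides the level difference; for the at most two indices $l \in \{i,i'\}$ only the weaker $\ell_l = gc_lq_l \mid t_l(x') - t_l(x)$ of Lemma~\ref{lemma:code distance}(1) is guaranteed.

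The third step combines these congruences to pin down the levels and remainders jointly. Setting $\Delta = j' - j$, the strong aligned conditions give $gq_l \mid \Delta$ (or $gq_l \mid \Delta + 1$ across the cut when $i \ne i'$) for $l \notin \{i,i'\}$, while the weak conditions on the split grids, after substituting the tag points, read as congruences of the form $gc_lq_l \mid \Delta\, m_l + (\text{remainder})$. Because $Q = \lcm(q_1,\ldots,q_I)$ and $M/m = gQ$, combining the available divisibilities via least common multiples forces $\Delta$ to be a multiple of the full period $gQ$, and since $|\Delta| < gQ$ this gives $j = j'$; clearing the stray factor $m_l$ here uses $\gcd(m_l/c_l, gq_l) = 1$, the reformulation of $\gcd(m_l,\ell_l) = c_l$. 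The remainders are then controlled by $gc_l > m_l$: since $gc_lq_l > m_l > |x'_r - x_r|$, the residual congruence on the split grid collapses to $x_r = x'_r$. With $j = j'$, $x_r = x'_r$ and $i = i'$ all decomposed blocks coincide, so Lemma~\ref{lemma:sunmao 1D}(3) yields $x = x'$, the required contradiction.

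The main obstacle I anticipate is the bookkeeping of the level shift $j \mapsto j+1$ that occurs across the split index when $i \ne i'$, interacting with the fact that precisely the two sub-grids $i$ and $i'$ enjoy only the weaker modulus $gc_lq_l$. When aligned sub-grids exist on both sides of the cut, the contradiction is immediate from $g \ge 2$, as one cannot have $g \mid \Delta$ and $g \mid \Delta + 1$ simultaneously. The delicate cases are the boundary configurations — most notably $I = 2$ with $\{i,i'\} = \{1,2\}$ — where no aligned sub-grid is available, so the impossibility of $i \ne i'$ must be extracted from the two weak congruences alone through a careful residue computation relying on $\gcd(m_l/c_l, gq_l) = 1$ together with the range bounds $|x_r|, |x'_r| < m_l < gc_l$.
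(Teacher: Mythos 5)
Your first two steps are sound, and your cases with at least one aligned--aligned sub-grid match (your (a)--(c)) can indeed be closed exactly as you sketch; the paper's own case analysis is the same in spirit. The genuine gap is in your third step, and it sits precisely where you predicted: when $x_r>0$, $x'_r>0$ and $i\ne i'$ with $I=2$, every sub-grid supplies only the weak congruence of Lemma~\ref{lemma:code distance}(i), and your assertion that ``combining the available divisibilities via least common multiples forces $\Delta$ to be a multiple of the full period $gQ$'' has nothing to support it. Worse, the ``careful residue computation'' you defer cannot exist, because the statement is false in that configuration. Take $I=2$, $m_1=m_2=3$ (so $m=6$), $c_1=c_2=1$, $q_1=q_2=1$, $g=4$; then $\ell_1=\ell_2=4$, $Q=1$, $M=24$, and every hypothesis of Definition~\ref{definition:braid code} holds ($\gcd(m_i,\ell_i)=\gcd(3,4)=1=c_i$, $gc_i=4>3=m_i$, $M=mgQ$). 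Use the $3$-distinguishable generators $\Gamma_1=a_1a_2a_3a_4$ and $\Gamma_2=b_1b_2b_3b_4$ on $G^c_4$. The resulting braid code on $G^c_{24}$ is the sequence
\[
a_1a_2a_3\,b_1b_2b_3\;\; a_4a_1a_2\,b_4b_1b_2\;\; a_3a_4a_1\,b_3b_4b_1\;\; a_2a_3a_4\,b_2b_3b_4,
\]
and the $6$-blocks tagged at $x=2$ and $x'=10$ have the identical codeword $\{a_1,a_3,a_4,b_1,b_2,b_3\}$. In your notation this is $\Delta=1$, $x_r=2$, $x'_r=1$: the two weak congruences $(\Delta+1)m_1\equiv x_r \pmod{gc_1q_1}$ and $\Delta m_2\equiv -x'_r\pmod{gc_2q_2}$ read $6\equiv 2$ and $3\equiv-1 \pmod 4$, both true. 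So your proof is not completable as written.

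For comparison, the paper's proof never meets this case, but only because of an illegitimate step: it declares ``without loss of generality $x'=0$'' by cyclically shifting indices. A shift by an amount not congruent to some $d_{i'}$ modulo $m$ does not carry the sunmao partition to one of the same form (the sub-block containing the cut wraps into two non-contiguous pieces sharing one color set), so the shifted map is no longer a braid code and the subsequent case analysis does not apply to it. The only legitimate normalization is $x'\in[0,m_{i'})$ after rotating the decomposition, i.e.\ the paper's argument silently covers only pairs in which at least one of the two points sits at a sub-block boundary --- exactly your cases (a)--(c). Your refusal to normalize is what exposed the missing case, and the correct conclusion is that Theorem~\ref{theorem:construction} as stated needs an extra hypothesis: it does hold for class~1 codes, where $c_i=m_i$ makes $c_i\mid x_r$ with $0<x_r<m_i=c_i$ impossible, so every non-aligned matching is excluded; the counterexample above is a class~2 code with the resonance $\ell_1=\ell_2$. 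Your instinct that the boundary configuration is the crux was exactly right --- but it is a counterexample region, not a provable case.
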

\begin{proof}
Let $\Phi$ define a braid code.
Suppose that the $m$-blocks at $x$ and at $x'$ have identical images, we claim that $x=x'$.

Since the grid and sub-grids are all cyclic, without lost of generality we can shift the indices of $x$ and $x'$ over $G^c_M$ so that $x'=0$ and $x=jm+d_i+x_r$ for some $0\leq j <J=M/m=gQ$, $1 \leq i \leq I$ and $0 \leq x_r < m_i$.
Note that all sub-blocks tagged at $x'$ are aligned and begin at position $0$ within their respective sub-grids.
Denote the $m$-block of $G^c_{M}$ at $x$ by $B$.
We further denote by $B_k$ the associated $m_k$-block of $G^c_{M_k}$, for $k=1,\ldots,I$.

We first consider the case where $0<x_r<m_i$, that is, $B_i$ is not aligned.
Observe that $B_i$ begins at the point $jm_i+x_r$ of $G^c_{M_i}$.
By Lemma~\ref{lemma:code distance}(i),
\begin{equation}\label{eq:braid-thm_1}
(jm_i+x_r)/(gc_i) \in \mathbb{Z}^+.
\end{equation}
If $i\geq 2$, $B_1$ exactly coincides with the aligned $m_1$-block beginning at $(j+1)m_1$ (instead of $jm_1$) of $G^c_{M_1}$.
According to Lemma~\ref{lemma:code distance}(ii), it
follows that $gm_1q_1$ divides $(j+1)m_1$, which implies $g$ divides $(j+1)$. Therefore, $gc_i$ is a factor of $(j+1)m_i$ and 
\begin{align*}
(j+1)m_i=jm_i+x_r+m_i-x_r
\end{align*}
is a multiple of $gc_i$.
By \eqref{eq:braid-thm_1}, it follows that $gc_i$ divides $(m_i-x_r)$.
However, $0<m_i-x_r<m_i$, which is strictly less than $gc_i$ by definition.  This is a contradiction.

If $i=1$, the aligned $m_2$-block $B_2$ begins at $jm_2$ of $G^c_{M_2}$.
By Lemma~\ref{lemma:code distance}(ii) again,  $gm_2q_2$ divides $jm_2$, implying $g$  divides $j$ and thus $gc_i$
divides $jm_i$.   It follows from \eqref{eq:braid-thm_1} that $gc_i$ is a factor of $x_r$, which is impossible due to $0<x_r<m_i<gc_i$.

It suffices to consider the case that $x_r=0$, that is, $B$ is an aligned $m$-block at $jm+d_i$ and all its sub-blocks are aligned sub-blocks of the corresponding sub-grids.

If $2\leq i\leq I$, the aligned block $B_i$ is at $jm_i$ and the aligned block $B_{i-1}$ is at $(j+1)m_{i-1}$.
Similar to previous arguments, we have $gq_i$ divides $j$ and $gq_{i-1}$ divides $(j+1)$, 
which is impossible since $g>1$ is the common factor of $gq_i$ and $gq_{i-1}$.

If $i=1$, that is, $x=jm$, then each aligned sub-block $B_k$ is at $jm_k$ for $k=1,\ldots,I$.
In this case, by Lemma~\ref{lemma:code distance}(ii), $gm_kq_k$ divides $jm_k$, which implies $gq_k$ is
a factor of $j$.  Therefore, $j$ divides $\lcm(gq_1,\ldots,gq_I)=gQ$, which implies $j=0$ and $x=x'$.
This completes the proof.
\end{proof}

For a sunmao decomposition into unitary block sizes, so that $m_i=1$ for all $i$, there is only one class of braid code. 
We call such a braid code a {\it unitary braid code}.
In this case, $\ell_i=gq_i$ for all $1\leq i \leq I$, $g\geq 2$, and $M=gmQ$, where $Q=\lcm(q_1, \ldots, q_I )$.

We restate here some results for unitary braid code from Lemma \ref{lemma:sunmao 1D} and Lemma~\ref{lemma:code distance} for ease of reference.

\begin{corollary}\label{corol:braid code 2}
If  $\Phi$  defines a unitary braid code on $G^c_M$ then the image set under $\Phi$ of any $m$-block of $G^c_M$ contains only elements with multiplicity 1.
\end{corollary}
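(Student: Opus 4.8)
The plan is to combine two facts that are already in hand: the structural decomposition of an $m$-block supplied by Lemma~\ref{lemma:sunmao 1D}, and the mutual exclusivity of the colour palettes built into the definition of the sunmao colour mapping. The whole statement will then follow by a short counting argument, with no heavy computation required.

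First I would record what ``unitary'' buys us. Since $m_i=1$ for every $i$, the decomposition $m=m_1+\dots+m_I$ forces $I=m$, so there are exactly $m$ component sub-grids $G^c_{M_1},\ldots,G^c_{M_m}$, each of size $M/m$. Now fix an arbitrary $m$-block $B$ of $G^c_M$ and apply Lemma~\ref{lemma:sunmao 1D}(1): $B$ consists of one and only one $m_i$-block of $G^c_{M_i}$ for each $i$. Because $m_i=1$, each such $m_i$-block is a single grid point. Hence $B$ meets every sub-grid in precisely one point, and these $m$ points exhaust $B$ (their count matches the block size $m$).

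It remains to read off the colours. By the definition of the sunmao colour mapping, $\Phi$ restricted to $G^c_{M_i}$ takes values only in $C_i$, and the palettes $C_1,\ldots,C_m$ are pairwise disjoint. The unique point of $B$ lying in $G^c_{M_i}$ therefore receives a colour in $C_i$; since points drawn from distinct sub-grids draw colours from disjoint sets, the $m$ colours appearing in $\Phi(B)$ are pairwise distinct. Consequently every element of the image multiset $\Phi(B)$ has multiplicity $1$, which is the claim.

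I do not expect a genuine obstacle here, as the corollary is essentially a bookkeeping consequence of ``one point per sub-grid'' together with ``disjoint palettes.'' The only place that merits a moment's care is verifying that each sub-grid contributes exactly one point of $B$ and not two or more; this is precisely the content of Lemma~\ref{lemma:sunmao 1D}(1) once specialised to $m_i=1$, so invoking that lemma cleanly closes the only potential gap.
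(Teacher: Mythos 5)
Your proof is correct and follows essentially the same route the paper intends: the paper gives no explicit proof, presenting the corollary as a restatement of Lemma~\ref{lemma:sunmao 1D} for the unitary case, which is exactly what you do --- one point of the block per sub-grid (Lemma~\ref{lemma:sunmao 1D}(1) with $m_i=1$, $I=m$) combined with the pairwise disjointness of the palettes $C_i$ required by the sunmao colour mapping. Note that Lemma~\ref{lemma:code distance}, which the paper also cites in this context, is not actually needed for this corollary (it underlies Corollary~\ref{corol:braid code 1} instead), so omitting it is not a gap.
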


\begin{corollary}\label{corol:braid code 1}
Let $\Phi$ define a unitary braid code on $G^c_{M_1}\times\dots\times G^c_{M_I}$ and for $i, 1\leq i \leq I$, let $G^c_{\ell_i}$ be the generator of $G^c_{M_i}$, then $x, x'\in G^c_{M_i}$ maps to the same color symbol if and only if their distance, $|x-x'|$, in $G^c_{M_i}$ is divisible by $\ell_i$. 
\end{corollary}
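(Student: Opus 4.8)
The plan is to reduce the statement to the defining property of a repetitive code together with the observation that, in the unitary setting, the generator carries a particularly rigid color mapping. First I would unpack what ``unitary'' forces on each component: since $m_i=1$, an $m_i$-block of $G^c_{\ell_i}$ is a single grid point, and its color codeword is the singleton $\{\Gamma_i(x)\}$. Hence requiring the generator code to be $1$-distinguishable is exactly requiring $\Gamma_i$ to be injective on $G^c_{\ell_i}$; equivalently, $\Gamma_i$ assigns $\ell_i$ distinct colors, one to each of the $\ell_i$ points.

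With this in hand, both directions follow from the repetitive identity $\Phi_i(x)=\Gamma_i(x \bmod \ell_i)$ of Definition~\ref{definition:repetitive code}. For the converse, suppose $|x-x'|$ is divisible by $\ell_i$; then $x \equiv x' \pmod{\ell_i}$, so $x \bmod \ell_i = x' \bmod \ell_i$, and therefore $\Phi_i(x)=\Phi_i(x')$ immediately. For the forward direction, suppose $\Phi_i(x)=\Phi_i(x')$; then $\Gamma_i(x \bmod \ell_i)=\Gamma_i(x' \bmod \ell_i)$, and injectivity of $\Gamma_i$ yields $x \bmod \ell_i = x' \bmod \ell_i$, i.e. $\ell_i \mid (x-x')$. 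Alternatively, the forward direction is precisely Lemma~\ref{lemma:code distance}(i) specialized to $m_i=1$ (where $c_i=m_i=1$ forces $\ell_i=gq_i$), so I could simply cite it rather than re-derive it.

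Finally I would address the one point that requires a word of care, namely that ``distance in $G^c_{M_i}$ divisible by $\ell_i$'' is well-posed on a cyclic grid. Because $\ell_i$ divides $M_i$ (as already noted after Definition~\ref{definition:braid code}, $M_i$ is divisible by $\ell_i$), the two cyclic representatives $(x-x') \bmod M_i$ and $(x'-x) \bmod M_i$ sum to $M_i$, so one is divisible by $\ell_i$ if and only if the other is; the divisibility condition is therefore independent of the chosen representative and of the sign convention for distance. I do not anticipate any genuine obstacle here: the content of the corollary is entirely a restatement of the generator's injectivity propagated by replication, and the only thing to verify is that the modular arithmetic is consistent, which the divisibility $\ell_i \mid M_i$ guarantees.
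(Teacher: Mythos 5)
Your proof is correct and takes essentially the same route as the paper: the paper states this corollary as a direct restatement of Lemma~\ref{lemma:code distance}(i) for the unitary case (where $m_i=1$ forces $c_i=1$ and $\ell_i=gq_i$, and $1$-distinguishability of the generator is exactly injectivity of $\Gamma_i$), which is your forward direction, while the converse is the immediate consequence of the repetitive identity $\Phi_i(x)=\Gamma_i(x \bmod \ell_i)$ that you give. Your closing remark that $\ell_i \mid M_i$ makes the cyclic-distance divisibility condition well-posed is a correct and harmless addition not spelled out in the paper.
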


Corollary \ref{corol:braid code 2} guarantees that all codewords of a unitary braid code do not contain repeated colors.  Hence, they are more closely related to universal cycles \cite{HJZ09} rather than de Bruijn sequences.

\subsection{Performance Comparison of Braid Codes}

One can see from Example~\ref{ex:braid_code} that the structure of the braid codes may vary according to
the choice of parameters $g,Q,c_i$ and $q_i$.
As they are all $m$-distinguishable by Theorem~\ref{theorem:construction}, it is of interest to investigate which choice of parameters results in the braid code that uses the least color symbols.

Fix the grid size $M$ and each sub-block size $m_i$, $1\leq i\leq I$ and $I\geq 2$.
By the construction of braid codes, the number of color symbols required is
\begin{equation}\label{eq:braid-code-color-number}
K^c_{\ell_1}(m_1)+K^c_{\ell_2}(m_2)+\cdots+K^c_{\ell_I}(m_I),
\end{equation}
where each $\ell_i=gc_iq_i$ for some $g,c_i,q_i$ satisfying $M/m=g\lcm(q_1,\ldots,q_I)$ and $gc_i>m_i$ for all $i$.
We assume without loss of generality that $m_1\leq m_2\leq \cdots\leq m_I$.
As the value in~\eqref{eq:braid-code-color-number} is invariant under permutations of $\ell_{i},\ldots,\ell_{j}$ when $m_i=\cdots=m_j$, $i<j$, we fix the order $\ell_{i}\leq\cdots\leq\ell_j$ and call it the \emph{canonical} ordering.
A canonical list of generators $\boldsymbol{\ell}=(\ell_1,\ldots,\ell_I)$ is called \emph{optimal} if it minimizes~\eqref{eq:braid-code-color-number}.
Note that an optimal list of generators may not be unique, but the corresponding minimum value is unique.

We will show that, in many cases, the class 2 braid code requires the smallest number of color symbols when the sub-block size does not exceed $3$.
Before that, we need the following lemma, whose proof is deferred to Appendix~\ref{appendix:braid-code-color-number}.

\begin{proposition}\label{prop:braid-code-color-number}
For any optimal canonical list of generators $\boldsymbol{\ell}=(\ell_1,\ldots,\ell_I)$, if $m_I\leq 3$, then $\ell_1\leq \ell_2\leq \cdots\leq\ell_I$.
\end{proposition}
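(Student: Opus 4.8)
The plan is to read the minimization of the color count \eqref{eq:braid-code-color-number} as a rearrangement problem at a fixed value of $g$, and to remove ``inversions'' by a local exchange. I fix $g$ to the value used by the given optimal list, so that $Q=M/(mg)$ and the requirement $\lcm(q_1,\ldots,q_I)=Q$ are fixed as well, and I only rearrange how the generators $\ell_i=gc_iq_i$ are matched to the ordered block sizes $m_1\le\cdots\le m_I\le 3$. Since the canonical convention already sorts generators within a group of equal $m_i$, it suffices to show that an optimal list cannot contain indices $i<j$ with $m_i<m_j$ but $\ell_i>\ell_j$.

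The analytic engine is two properties of $K^c_\ell(m)$ for $m\in\{1,2,3\}$, both read off from the explicit values established in \cite{Paper1}, which I would tabulate first. The first is that $K^c_\ell(m)$ is non-decreasing in $\ell$. The second is a \emph{decreasing differences} property in the pair $(\ell,m)$: for $\ell<\ell'$ and $m<m'\le 3$,
\[
K^c_{\ell'}(m')-K^c_{\ell}(m')\ \le\ K^c_{\ell'}(m)-K^c_{\ell}(m).
\]
Qualitatively, $K^c_\ell(1)=\ell$ grows linearly, $K^c_\ell(2)$ grows like $\sqrt{2\ell}$ (from a maximal closed-trail count in the loop-complete graph on the colors), and $K^c_\ell(3)$ grows still more slowly, so a larger block absorbs an increase of $\ell$ with a smaller increase in the number of colors; this is precisely the displayed inequality, and verifying it reduces to a finite comparison of the three growth patterns. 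The inequality says exactly that placing the smaller generator on the smaller block never increases \eqref{eq:braid-code-color-number}.

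To remove a putative inversion $(i,j)$ I would pass to a feasible list that places the smaller generator on the smaller block. Because the grid size is fixed and $g$ is common to all blocks, a generator cannot be moved in isolation: it is pinned to the form $gc_kq_k$ with $c_k=\gcd(m_k,\ell_k)$, and the tail factors must keep $\lcm(q_1,\ldots,q_I)=Q$. The safe move is to interchange the tail factors $q_i\leftrightarrow q_j$, which leaves the lcm invariant, and then to re-select $c_i\in\{1,m_i\}$ and $c_j\in\{1,m_j\}$ so that the forced identities $c=\gcd(m,\ell)$ and the coprimality conditions $\gcd(m/c,gq)=1$ are restored on both blocks. The hypothesis $m_I\le 3$ is what makes this finite: each divisor $c$ ranges over the two-element set $\{1,m\}$, and the choice $c=m$ makes $m/c=1$ so that coprimality is automatic, while the residual situations forcing $c=1$ (hence $g$ large and $gq$ coprime to $m$) are disposed of by the alternate choice. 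After feasibility is restored, the decreasing-differences inequality bounds the new color count by the old one.

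I expect the feasibility bookkeeping, rather than the inequality, to be the main obstacle, and the shared factor $g$ cuts both ways. It is what forces the delicate re-selection of $c$ and $q$ above, but it is also what forbids the degenerate ties that would otherwise defeat the claim: a \emph{cost-neutral} inversion across two different block sizes would require two generators that are simultaneously multiples of the same $g$, divisible as $gc$ with $c=\gcd(m,\ell)$, and numerically close enough to leave the color count unchanged. I anticipate showing, by a finite case check over the pairs $(m_i,m_j)\in\{1,2,3\}^2$ with $m_i<m_j$, that these demands are incompatible, so that every inversion can in fact be removed with a \emph{strict} decrease of \eqref{eq:braid-code-color-number}. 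This contradicts optimality and therefore forces $\ell_1\le\cdots\le\ell_I$ for every optimal canonical list.
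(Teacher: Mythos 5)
Your high-level route is the same as the paper's: read \eqref{eq:braid-code-color-number} as an assignment cost and remove an inversion ($\ell_t>\ell_{t+1}$ with $m_t<m_{t+1}$) by a local exchange, whose profitability comes from $K^c_\ell(m)$ growing more slowly in $\ell$ as $m$ increases. The problem is that the two steps you defer are the whole proof, and one of them, as stated, cannot work. The exchange inequality \emph{is} the mathematical content of the proposition: the paper's appendix proof consists precisely of verifying, from the exact values $M^c_1(k)$, $M^c_2(k)$, $M^c_3(k)$ of \cite{Paper1}, the \emph{strict} inequality $K^c_{\ell_t}(\alpha)+K^c_{\ell_{t+1}}(\beta)>K^c_{\ell_{t+1}}(\alpha)+K^c_{\ell_t}(\beta)$ for each pair $(\alpha,\beta)\in\{(1,2),(1,3),(2,3)\}$, which requires parity and mod-$3$ case splits and, for $(2,3)$, a monotonicity argument for an explicit algebraic function --- not a ``finite comparison of the three growth patterns.'' Moreover, you state the inequality non-strictly ($\leq$), and a non-strict inequality cannot contradict optimality: if the swap merely ties, the inverted list remains optimal and the proposition is falsified by it. Your plan to restore strictness by a finite case check over $(m_i,m_j)\in\{1,2,3\}^2$ does not repair this, because for each such pair the generators still range over infinitely many values; ruling out cost-neutral inversions is exactly as hard as proving the strict inequality itself.

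Your second gap is that the feasibility repair disconnects the move from your inequality. After swapping $q_i\leftrightarrow q_j$ and re-selecting $c_i\in\{1,m_i\}$, $c_j\in\{1,m_j\}$, the new generators are $gc'_iq_j$ and $gc'_jq_i$, which in general are \emph{not} $\ell_j$ and $\ell_i$. Concretely, take $m_i=2$, $m_j=3$, $g=2$, $(c_i,q_i)=(2,3)$, $(c_j,q_j)=(3,1)$, so $(\ell_i,\ell_j)=(12,6)$: the plain swap $(6,12)$ is infeasible (it would need $q'_i=6/(2\gcd(2,6))\notin\mathbb{Z}$), and your move instead produces $(4,18)$. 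The decreasing-differences inequality compares the two assignments of the \emph{same} pair of generator values, so it says nothing about $(4,18)$ versus $(12,6)$; the comparison you actually need is never formulated, and you yourself flag this bookkeeping as the main obstacle. To your credit, the feasibility worry is genuine --- the paper's own proof swaps the $\ell$'s directly and silently assumes the swapped list is still a valid list of generators, which the example above shows can fail --- but as written your fix does not close the argument either: you would need a version of the inequality robust to the change of generator values (e.g.\ combined with monotonicity of $K^c_\ell(m)$ in $\ell$, in the style of the paper's proof of Theorem~\ref{thm:braid-code-color-number}), and that is not in the proposal.
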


For simplicity, we consider the special case in which the smallest prime factor of $M/m$ is larger than $m_I$.

\begin{theorem}\label{thm:braid-code-color-number}
Let $\boldsymbol{\ell}=(\ell_1,\ldots,\ell_I)$ be an optimal canonical list of generators.
If $m_I\leq 3$ and the smallest prime factor of $M/m$ is larger than $m_I$, then $\boldsymbol{\ell}$ corresponds to a class 2 braid code.
\end{theorem}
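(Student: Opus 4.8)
The plan is to prove the statement by showing that the quantities $c_i=\gcd(m_i,\ell_i)$ attached to the optimal list $\boldsymbol{\ell}$ must all equal $1$, which by the class definitions is exactly the assertion that $\boldsymbol{\ell}$ corresponds to a class 2 braid code. First I would record two consequences of the hypothesis that the smallest prime factor $p$ of $M/m$ exceeds $m_I$. Since $g$ divides $M/m=gQ$, every prime factor of $g$ is at least $p>m_I$; and $g=1$ is infeasible, because it forces $gc_i=c_i>m_i$, which is impossible as $c_i\mid m_i$. Hence $g>m_I\geq m_i$ for every $i$, so $gc_i>m_i$ holds automatically for any $c_i\geq 1$ (in particular $c_i=1$ is always admissible). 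Moreover $q_i\mid Q\mid M/m$, so $g$ and each $q_i$ are products of primes exceeding $m_I$, whence $\gcd(m_i,gq_i)=1$. Since $c_i\mid m_i$ and $m_i\leq 3$, the only admissible values are $c_i\in\{1,m_i\}$, and $\boldsymbol{\ell}$ is class 2 precisely when $\gcd(m_i,\ell_i)=1$ for all $i$.

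The core is an exchange argument against optimality. Suppose, for contradiction, that $\boldsymbol{\ell}$ is not class 2; then some index $i$ has $c_i=m_i\in\{2,3\}$, so $m_i\mid\ell_i$ and $\ell_i=gm_iq_i$. Define a new list $\boldsymbol{\ell}'$ that agrees with $\boldsymbol{\ell}$ in every coordinate except the $i$-th, where I replace $\ell_i$ by $\ell_i/m_i=gq_i$; equivalently, I reset $c_i$ to $1$ while keeping $g$ and every $q_j$ fixed. Because $g$, $Q=\lcm(q_1,\ldots,q_I)$, and hence $M=mgQ$ are unchanged, and because $g>m_i$ and $\gcd(m_i,gq_i)=1$ by the first paragraph, the list $\boldsymbol{\ell}'$ again satisfies every requirement of Definition~\ref{definition:braid code}; thus it is a legitimate list of generators (canonical after re-sorting the equal-$m_i$ block, which leaves the color count invariant). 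The change in~\eqref{eq:braid-code-color-number} is exactly $K^c_{\ell_i/m_i}(m_i)-K^c_{\ell_i}(m_i)$.

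It therefore suffices to establish the strict monotonicity
\[
K^c_{n}(m)<K^c_{mn}(m),\qquad m\in\{2,3\},
\]
for the relevant generator sizes $n=gq_i$, which are coprime to $m$. Granting this, $\boldsymbol{\ell}'$ uses strictly fewer colors than $\boldsymbol{\ell}$, contradicting the optimality of $\boldsymbol{\ell}$; hence every $c_i=1$ and $\boldsymbol{\ell}$ is class 2. I expect this strict inequality to be the main obstacle. Morally it holds because $G^c_{mn}$ carries $m$ times as many $m$-blocks as $G^c_{n}$, so the minimal palette grows like $(m!\,\ell)^{1/m}$ and multiplying $\ell$ by $m$ inflates it by the factor $m^{1/m}>1$; the delicate point is that $K^c_\ell(m)$ is a nondecreasing step function with plateaus, and for very small generators a plateau can erase the gain (for instance $K^c_3(2)=K^c_6(2)=3$). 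I would close this gap by invoking the explicit formulas for $K^c_\ell(m)$ with $m\leq 3$ from \cite{Paper1}, using the prime-factor hypothesis (which forces $n=gq_i$ to be coprime to $m_i$ and bounded below by $p>m_I$) to restrict the range of $n$, and verifying the strict inequality for the finitely many smallest admissible generator sizes by direct computation; this last case analysis, rather than the asymptotic regime, is where the real work lies.
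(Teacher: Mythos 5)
Your overall skeleton is the same as the paper's: reduce the claim to showing $c_i=1$ for all $i$, observe that the prime-factor hypothesis forces $g>m_I$ and $\gcd(m_i,gq_i)=1$, and then run an exchange argument that trades an offending generator $\ell_t=gm_tq_t$ for $gq_t$ and appeals to optimality. The paper organizes the exchange slightly differently: it takes the \emph{smallest} index $t$ with $c_t>1$; for $t=1$ it performs exactly your single-coordinate replacement, while for $t>1$ it invokes Proposition~\ref{prop:braid-code-color-number} (sortedness of an optimal canonical list) to build a shifted list with $\ell'_s=gq_t$ and $\ell'_i=gq_{i-1}$ for $s+1\leq i\leq t$, so that every coordinate weakly decreases and the comparison of~\eqref{eq:braid-code-color-number} proceeds term by term using only the monotonicity of $\ell\mapsto K^c_\ell(m)$. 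In both cases, however, contradicting optimality requires a \emph{strict} decrease somewhere, and the paper simply asserts ``fewer color symbols'' at that point.

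That strict decrease is exactly the step you flag as ``the real work,'' and the gap is genuine: the inequality $K^c_n(m)<K^c_{mn}(m)$ your plan needs is not merely delicate, it is false for parameters that the hypotheses of the theorem permit, so the finite case-check you propose cannot close it. Take $m_t=m_I=2$, $g=3$, $q_t=1$ (every prime factor of $M/m$ equal to $3>m_I$): then $n=gq_t=3$ and, by~\eqref{eq:largest-length-2}, $K^c_3(2)=K^c_6(2)=3$ --- your own plateau example, sitting squarely inside the admissible range. A complete instance: $I=2$, $m_1=m_2=2$, $m=4$, $M=12$, so necessarily $g=3$ and $q_1=q_2=1$, and the three canonical lists $(3,3)$, $(3,6)$, $(6,6)$ all cost $3+3=6$ colors; hence $(6,6)$ (class 1) and $(3,6)$ (neither class) are optimal canonical lists that are not class 2, and no exchange can strictly beat them. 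Consequently your argument can only deliver the weaker conclusion that \emph{some} optimal canonical list is class 2 --- the non-strict bound $K^c_{gq_t}(m_t)\leq K^c_{gm_tq_t}(m_t)$ already gives that, by iterating your exchange --- but not the statement as written, which quantifies over \emph{every} optimal canonical list and fails in the instance above. (To be fair, the paper's own proof elides the same strictness issue; but as a standalone proof, yours is incomplete at precisely the point you identified, and the completion route you sketch would uncover a counterexample rather than finish the argument.)
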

\begin{proof}
Recall that $\ell_i$ is in the form $gc_iq_i$ for all $i$, where $\lcm(\ell_1,\ldots,\ell_I)=M/m=gQ$.
It suffices to show that $c_1=\cdots=c_I=1$.
As the smallest prime factor of $M/m$ is larger than $m_I$, we have $g>m_I$.

Suppose to the contrary that $t$ is the smallest index such that $c_t>1$.
If $t=1$, then $(gq_1,\ell_2,\ldots,\ell_I)$ is a list of generators which requires fewer color symbols.
If $t>1$, by Proposition~\ref{prop:braid-code-color-number}, we have $gq_{t-1}\leq gc_tq_t$. ($c_{t-1}=1$ in this sub-case.)
Let $s$ be the smallest index such that $q_s\geq q_t$.
Note that such an $s$ is well-defined.
Now, let 
\begin{align*}
    \ell'_i = \begin{cases}
        gq_t, & \text{if }i=s, \\
        gq_{i-1}, & \text{if }s+1\leq i\leq t, \\
        \ell_i, & \text{otherwise}.
    \end{cases}
\end{align*}
One can check that $\lcm(\ell'_1,\ldots,\ell'_I)=\lcm(\ell_1,\ldots,\ell_I)$.
It is easy to see that $\ell'_i\leq \ell_i$ for all $i$, and thus $\boldsymbol{\ell'}=(\ell'_1,\ldots,\ell'_I)$ requires fewer color symbols.
\end{proof}

\begin{remark}\rm
The proof of Proposition~\ref{prop:braid-code-color-number} relies on the exact values of $K_m(\ell)$, which have been determined in~\cite{Paper1} for $1\leq m\leq 3$.
However, since~\cite{Paper1} also provides the asymptotic behavior $K_m(\ell)=\Theta(\ell^{1/m})$, Proposition~\ref{prop:braid-code-color-number} holds asymptotically for any $m$.
In this context, Theorem~\ref{thm:braid-code-color-number} also remains valid asymptotically for all $m$, thus extending the original result beyond the restriction $m_I\leq 3$.
\end{remark}

\subsection{Construction for a General Size Unitary Braid Code}
\label{section: 1D generalized braid}

The previous construction method for braid code is defined only for grids of size $M=gmQ$, where $g\geq 2$ and $Q$ is the $lcm$ of a collection of $m$ positive integers.   If $M_r$ is strictly less than $M$, a flat grid $G_{M_r}$ can be viewed as a sub-grid of $G_{M}$.  It is easy to see that the restriction of a braid code on $G_{M}$ to $G_{M_r}$ defines an $m$-distinguishable code in the coding area of the sub-grid. The situation for a cyclic grid is more complicated. 

In Example \ref{ex:braid_code} three braid codes on $G^c_{75}$ are presented.  If we restrict the first code to $G^c_{19}$, the resulting code is not 5-distinguishable. Note that in this case $M_r=19$ is not a multiple of $m=5$.  If we restrict the  third code to $G^c_{30}$, the resulting code is also not 5-distinguishable and in this case $m$ divides $M_r$.  For unitary braid code, the following result holds if $m$ does not divide
$M_r$.

\begin{proposition}\label{prop:restriction}
The restriction of a color mapping of a unitary braid code on $G^c_M$ to a sub-grid of size $M_r$ defines an $m$-distinguishable code if $m$ does not divide $M_r$.
\end{proposition}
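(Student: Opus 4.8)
The plan is to view the restricted code as the color mapping $\Phi$ acting on the points $\{0,1,\ldots,M_r-1\}$, now with $m$-blocks that close up cyclically at $M_r$ rather than at $M$. I would call a block at $x$ \emph{internal} if $x\le M_r-m$ (so it is $\{x,\ldots,x+m-1\}$ with no wrap) and \emph{wrap-around} if $M_r-m<x<M_r$. The whole argument rests on one structural feature of a unitary braid code: the $m$ sub-grids are exactly the residue classes $S_i=\{p:p\equiv i-1\ (\mathrm{mod}\ m)\}$, and the color sets $C_1,\ldots,C_m$ attached to them are mutually disjoint. Consequently the color multiset of a block determines, for each $i$, how many of its points lie in $S_i$; I will call this count vector the \emph{residue signature} of the block, and two blocks with different signatures automatically have different multisets.

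First I would dispose of the internal blocks. An internal block at $x$ is literally the $m$-block $B(x)$ of $G^c_M$ with the same colors, so Theorem~\ref{theorem:construction} separates distinct internal blocks. Moreover $m$ consecutive integers meet each residue class once, so every internal block has signature $(1,\ldots,1)$. I then check that, because $m\nmid M_r$, \emph{no} wrap-around block can have signature $(1,\ldots,1)$: writing $\rho=M_r\bmod m\ne 0$ and letting $r=M_r-x\in\{1,\ldots,m-1\}$ be the length of the pre-wrap run, the after-wrap residues are exactly $\{0,\ldots,m-r-1\}$ while the before-wrap residues form an arc of length $r$ ending at $\rho-1\pmod m$; these two arcs tile $\mathbb Z_m$ once each only if the before-wrap arc equals $\{m-r,\ldots,m-1\}$, which forces $\rho\equiv 0$. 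Hence every wrap-around block carries a signature other than $(1,\ldots,1)$ and is separated from all internal blocks.

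The remaining and principal case is to separate two wrap-around blocks at $x\ne x'$, and this is where $m\nmid M_r$ does the real work. Writing $r=M_r-x$, $r'=M_r-x'$ and assuming $r<r'$, I would split each block into a shared part and an ``extra'' part: both blocks contain the common points $\{M_r-r,\ldots,M_r-1\}\cup\{0,\ldots,m-r'-1\}$; block $x$ carries in addition the after-points $\{m-r',\ldots,m-r-1\}$, and block $x'$ the before-points $\{M_r-r',\ldots,M_r-r-1\}$, each extra set of size $r'-r$. Since the common colors coincide, equality of the two color multisets forces equality of the extra multisets, hence equality of their residue arcs $R_1=\{m-r',\ldots,m-r-1\}$ and $R_2=\{\rho-r',\ldots,\rho-r-1\}\ (\mathrm{mod}\ m)$. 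These are two arcs of the same length $r'-r<m$ in $\mathbb Z_m$, so they agree only if they share a starting point, i.e.\ $m\equiv\rho\ (\mathrm{mod}\ m)$, again forcing $\rho\equiv 0$ and contradicting $m\nmid M_r$. Therefore $r=r'$ and $x=x'$.

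I expect the wrap-around-versus-wrap-around comparison to be the main obstacle, since it is the only place where distinguishability of the underlying $G^c_M$ code gives no help: the two blocks need not correspond to any blocks of $G^c_M$. The clean resolution is the common/extra decomposition, which pits the $r'-r$ points near $0$ against the $r'-r$ points near $M_r$ and reduces everything to the elementary fact that a length-$(r'-r)$ residue arc anchored at $0$ and one anchored at $\rho$ coincide only when $\rho\equiv 0\ (\mathrm{mod}\ m)$. Throughout, the disjointness of the $C_i$ --- equivalently Corollary~\ref{corol:braid code 2}, which guarantees that each codeword carries one color per residue class with multiplicity one --- is precisely what converts claims about colors into claims about residues mod $m$, and is the feature that makes the unitary hypothesis essential.
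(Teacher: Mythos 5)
Your proof is correct, and its skeleton matches the paper's own proof in Appendix~\ref{appendix:restriction}: both arguments exploit the fact that in a unitary braid code the sub-grids are precisely the residue classes mod $m$ with mutually disjoint color sets, so that a codeword determines the residue-class count vector (your ``signature'', the paper's vector representation of an ``$m$-fragment multiset''); both dispose of non-wrapping blocks via the distinguishability of the standard code on $G^c_M$ together with the observation that their signature is ${\bf 1}_m$; and both then reduce the problem to showing that the $m-1$ wrap-around blocks have signatures distinct from ${\bf 1}_m$ and from each other. Where you genuinely diverge is in that last step. The paper writes $M_r=lm+d$ with $0<d<m$ and explicitly enumerates all $m-1$ wrap-around signature vectors, split into the two cases $d\leq m-d$ and $d>m-d$ (the paper itself calls this ``tedious case by case checking''). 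You instead compare two wrap-around blocks by cancelling their common points and noting that the two leftover ``extra'' parts are residue arcs of the same length $r'-r<m$ in $\mathbb{Z}_m$, one anchored at $0$ and one at $\rho=M_r\bmod m$, which can coincide only if $\rho\equiv 0\pmod m$; the same arc argument shows no wrap-around block has signature ${\bf 1}_m$. This cancellation-plus-arc argument is shorter, uniform (no case split on $d$ versus $m-d$), and makes transparent exactly where the hypothesis $m\nmid M_r$ enters, whereas the paper's enumeration has the modest merit of exhibiting all signature vectors concretely. One cosmetic point: the disjointness of the color sets $C_i$ is a defining feature of the sunmao construction rather than literally Corollary~\ref{corol:braid code 2} (which is a consequence of it), but nothing in your argument depends on that attribution.
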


The proof of this proposition is simply based on tedious case by case checking and is relegated to Appendix \ref{appendix:restriction}.

For $M_r/m=J \in \mathbb{Z}$ and $J \geq 2$., we propose a simple modification of a unitary braid code on $G^c_M$ to obtain an $m$-distinguishable on $G^c_{M_r}$.  (Note that for $J=1$ there is no $m$-distinguishable code for $G^c_{M_r}$.)
Let $C_i$ denote the color set for sub-grid $G^c_{M_i}$ for $1\leq  i \leq I$. 

Since the image sets of $B_m(0)$ and $B_m((J-1)m)$ are different, there is a grid point, $j$, such that
$\Phi(j) \ne \Phi((J-1)m+j)$.   By cyclically shifting the grid indices, we can assume without lost of generality that $\Phi(0) \ne \Phi((J-1)m)$.
Define $\Phi'$ to be the restriction of $\Phi$ to $G^c_{M_r}$ except that for $1\leq i <m$, $\Phi'((J-1)m+i)$ is mapped to $c^*\triangleq \Phi((J-1))m$, which by choice is distinct from $\Phi(0)$. 

\begin{proposition}\label{prop:general size}
$\Phi'$, obtained from modifying a unitary braid code, defines an $m$-distinguishable color code on $G^c_{M_r}$.
\end{proposition}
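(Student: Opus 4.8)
The plan is to exploit the rigid residue structure of a unitary braid code. Since $m_i=1$ for all $i$ and hence $I=m$, the color of a grid point $x$ lies in the color set $C_{(x\bmod m)+1}$, and any $m$-block $B_m(x)$ is a window of $m$ consecutive points that meets each residue class modulo $m$ exactly once. Thus, consistently with Corollary~\ref{corol:braid code 2}, the original image $\Phi(B_m(x))$ contains exactly one color from each of $C_1,\ldots,C_m$. The modification recolors the $m-1$ points $(J-1)m+1,\ldots,M_r-1$, whose residues are precisely $1,\ldots,m-1$, to the single color $c^*\in C_1$; the residue-$0$ points $0,m,\ldots,(J-1)m$ are never touched. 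I would record two consequences: every block still contains an untouched residue-$0$ point contributing a genuine $C_1$-color, and recoloring a point of residue $r\ge 1$ deletes the unique $C_{r+1}$-color from that block while adding one more copy of $c^*$.

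Next I would introduce, for each starting point $x$, the overlap count $k(x)=|B_m(x)\cap\mathcal M|$, where $\mathcal M=\{(J-1)m+1,\ldots,M_r-1\}$ is the recolored arc. The central bookkeeping claim is that $k(x)$ is readable from the image: it equals the number of classes among $C_2,\ldots,C_m$ that are absent from $\Phi'(B_m(x))$, since the $k(x)$ recolored points of the block have distinct residues and each kills a distinct one of these classes. Consequently two blocks with equal $\Phi'$-image share the same value of $k$, so it suffices to separate blocks of equal overlap. A short sliding-window computation, using that $B_m(x)$ and $\mathcal M$ are arcs of lengths $m$ and $m-1$, shows that the blocks with $k(x)=0$ are exactly those with $x\in\{0,\ldots,(J-2)m+1\}$, all non-wrapping and entirely inside the untouched region, while for each $k\in\{1,\ldots,m-1\}$ there are exactly two blocks: one ``entering'' $\mathcal M$ from the left at $x=(J-2)m+k+1$ and one ``wrapping'' at $x=M_r-k$.

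Finally I would separate blocks of equal overlap. For $k=0$ the images coincide with the $\Phi$-images of non-wrapping blocks lying inside $\{0,\ldots,M_r-1\}$, so distinctness is inherited from the $m$-distinguishability of $\Phi$ on $G^c_M$ (Theorem~\ref{theorem:construction}). For $1\le k\le m-2$ the entering block omits the classes of residues $\{1,\ldots,k\}$ whereas the wrapping block omits those of residues $\{m-k,\ldots,m-1\}$; these sets are unequal, so some $C_j$ is present in one image and absent from the other, forcing the multisets apart. The only delicate case is $k=m-1$, where both blocks omit all of $C_2,\ldots,C_m$: the entering block is $B_m((J-1)m)$, whose image is the monochromatic multiset $\{c^*\}^m$, while the wrapping block $B_m((J-1)m+1)$ has image $\{c^*\}^{m-1}\cup\{\Phi(0)\}$. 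These differ precisely because the shift was chosen so that $\Phi(0)\neq\Phi((J-1)m)=c^*$, which is exactly where that hypothesis is consumed. I expect this extremal case to be the main obstacle and the crux of the argument: everywhere else distinguishability is inherited from $\Phi$ or forced by the absent-class invariant, but the two full-overlap blocks can be told apart only by the deliberate choice of $c^*$.
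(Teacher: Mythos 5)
Your proof is correct: the residue bookkeeping, the census of affected blocks (exactly one ``entering'' and one ``wrapping'' block for each overlap value $k\in\{1,\ldots,m-1\}$, and $k=0$ for all others), and the three-way case split all check out, and the inheritance step for the $k=0$ blocks is valid since those blocks do not wrap in either grid, so their $\Phi'$-images coincide with their $\Phi$-images on $G^c_M$. Your route differs from the paper's in which part of the codeword is read. The paper's proof ignores the classes $C_2,\ldots,C_m$ entirely and distinguishes all $2m-2$ affected blocks by the single invariant $C_1\cap\Phi'(B)$: for the block tagged at $(J-2)m+i$ this multiset consists of $i$ copies of $c^*$ and nothing else when $2\le i\le m$, and of $2m-i$ copies of $c^*$ plus the one element $\Phi(0)\ne c^*$ when $m<i\le 2m-1$; these $2m-2$ signatures are pairwise distinct, and affected blocks are separated from unaffected ones because the latter contain exactly one $C_1$-element counting multiplicity while the former contain at least two. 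You instead use the complementary data --- which of $C_2,\ldots,C_m$ are absent --- as the primary invariant, pair the affected blocks by the overlap count $k$, and consult the $C_1$-part only for the extremal pair $k=m-1$. The two arguments are of comparable length, but the trade-off is real: the paper gets one uniform signature that settles all affected blocks at once with no pairing step, whereas your version pinpoints exactly where the hypothesis $\Phi(0)\ne c^*$ (secured by the cyclic shift) is indispensable, namely in separating $B_m((J-1)m)$ from $B_m((J-1)m+1)$; in the paper's proof that hypothesis is instead consumed across every wrapping-block signature, which somewhat obscures the fact that it is only this one collision that forces the shift to be chosen at all.
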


\begin{proof}
Only $m$-blocks containing grid points from $(J-1)m+1$ to $Jm-1$  have image sets that are different from the original braid code. We only need to show that the image sets of $m$- blocks tagged at grid points
ranging from $(J-2)m+2$ to $Jm-1$ are different from each other and are different from those blocks that do not contain these listed points. The latter claim is easy to see since the two types of image sets contain different number of elements from $C_1$, counting multiplicity. To prove the first claim, first note that $C_1 \cap \Phi'(B_m(J-2)m+i)$ contains $i$ copies of $c^*$  for $2\leq i \leq m$ and no other elements from $C_1$. For $m<i \leq 2m-1$, $C_1 \cap \Phi'(B_m(J-2)m+i)$ contains $2m-i$ copies of $c^*$ and an additional element of $C_1$ different from $c^*$.  So these images sets are all distinct and $\Phi'$ defines an $m$-distinguishable code. 
\end{proof}

One can check that for a general braid code, if we replace $c^*$ with a color that is distinct from all elements in all $C_i$'s, the resulting color code is also $m$-distinguishable.  

We continue to refer to the restricted or modified codes
as braid codes; the original constructions
satisfying specific parameter conditions will be 
referred to as {\it standard codes}.

\section{Braid Code for General Dimension}
\label{section:Braid n}

Let $G^c_{\bf M}$ an $n$-dimensional grid.  We can extend the definition of braid code to $G^c_{\bf M}$ by using {\it product color code}, which was first introduced in \cite{Paper1}.  We present the basic idea of a product color code here in a slightly more general context.

First, we introduce some relevant 
mathematical notation.
Let ${\bf k}=(k_1,\ldots,k_n)$ and $\mathcal{C}_n$ be a set of  $v({\bf k})$ color symbols.  
We represent $\mathcal{C}_n$ as $[v({\bf k}) ]$ and index its elements by $n$-tuples, $(c_1,\ldots,c_n)$.   
Hence, we can regard $\mathcal{C}_n$ as $[k_1]\times\ldots\times [k_n]$.

\subsection{Product Color Code}
\label{section:product-code}

Let $\Phi^{(i)}\in\mathcal{C}_{G^c_i,k_i }$ define a color code from $G^c_i$ to $[k_i]$.  For the product grid $G^c \triangleq G^c_1 \times \ldots \times G^c_n$ define a {\it product color code} (or a product code, for short) by a color mapping, 
$\Phi \in\mathcal{C}_{G^c,|\mathcal{C}_n| }$, so that:
\[
\Phi({\bf x})=(\Phi^{(1)}(x_1),\ldots,\Phi^{(n)}(x_n)).
\]

Let ${\bf m}=(\mi{1},\ldots,\mi{n})$.  
An $\bf m$-block, $B$, of $G^c_{\bf M}$ can be represented in product form as $B_1\times\ldots\times B_n$, in which $B_i$ is an $m_i$-block of $G^c_{M_i}$.
 

The following result is due to~\cite[Proposition 1]{Paper1}.

\begin{proposition}[\!\!\!\cite{Paper1}]\label{prop:product-code}
The product code, $\Phi$, on $G^c_{\bf M}$ is ${\bf m}$-distinguishable if and only if
$\Phi^{(i)} $ is $\mi{i}$-distinguishable for all $i, 1\leq i \leq n$.
\end{proposition}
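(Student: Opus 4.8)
The plan is to reduce everything to a single combinatorial observation about multisets of tuples, namely that the color codeword of a grid point under the product code is the ``Cartesian product multiset'' of its $n$ coordinate codewords, and that this product multiset remembers each of its factors. I would write the codeword of $\mathbf{x}=(x_1,\ldots,x_n)$ as the multiset, taken over $B(\mathbf{x})=B_1(x_1)\times\cdots\times B_n(x_n)$, of the tuples $(\Phi^{(1)}(j_1),\ldots,\Phi^{(n)}(j_n))$, and let $S_i(x_i)=\{\Phi^{(i)}(j):j\in B_i(x_i)\}$ denote the $i$-th coordinate codeword, a multiset of cardinality $m_i$. Because the components of a tuple range independently over $B_1(x_1),\ldots,B_n(x_n)$, the multiplicity of a color tuple $(c_1,\ldots,c_n)$ in the product codeword equals $\prod_{i=1}^{n}(\text{multiplicity of }c_i\text{ in }S_i(x_i))$. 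Establishing this is just unwinding the definition of the product map, and it is the only structural fact the argument needs.

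The key step is then a recovery lemma: the product codeword determines each factor $S_i(x_i)$. First I would project the product multiset onto its $i$-th coordinate; the multiplicity of a color $c$ in this projection is $(\text{multiplicity of }c\text{ in }S_i(x_i))\cdot\prod_{j\neq i}m_j$, since each occurrence of $c$ in $S_i(x_i)$ pairs with every tuple formed from the remaining coordinates, and $\sum_{c_j}(\text{multiplicity of }c_j\text{ in }S_j(x_j))=m_j$. As $\prod_{j\neq i}m_j$ is a fixed positive constant, dividing the projected multiplicities by it recovers $S_i(x_i)$ exactly. Consequently, two product codewords are equal if and only if all $n$ of their factor codewords coincide. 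I expect this recovery step to be the main, if mild, obstacle, since it is the one place where one must verify that no information is lost in passing to the product; once it is in hand, both implications follow immediately.

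For the ``if'' direction, suppose every $\Phi^{(i)}$ is $m_i$-distinguishable and take distinct grid points $\mathbf{x}\neq\mathbf{x}'$ in the coding area. They differ in some coordinate $i$, so $x_i\neq x_i'$, and $m_i$-distinguishability yields $S_i(x_i)\neq S_i(x_i')$; by the recovery lemma the two product codewords, having distinct $i$-th factors, must differ, so $\Phi$ is $\mathbf{m}$-distinguishable. For the ``only if'' direction I would argue by contraposition: if some $\Phi^{(i)}$ fails to be $m_i$-distinguishable, choose $x_i\neq x_i'$ with $S_i(x_i)=S_i(x_i')$ and freeze the remaining coordinates at any common values to form $\mathbf{x},\mathbf{x}'$ differing only in position $i$. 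All $n$ factor codewords then agree, so the product codewords coincide even though $\mathbf{x}\neq\mathbf{x}'$, contradicting $\mathbf{m}$-distinguishability. Since we work over cyclic grids, the entire product grid is the coding area, so freezing coordinates raises no domain issues.
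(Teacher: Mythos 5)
Your proof is correct and takes essentially the same route as the paper's: the coordinate-projection step with multiplicities scaled by $\prod_{j\neq i}m_j$ is exactly the recovery argument the paper uses for the \emph{if} direction, and the coordinate-freezing contraposition (two points differing only in position $i$ yielding identical product codewords) is precisely the paper's \emph{only if} argument. No gaps; nothing further is needed.
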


For grids with unitary block size, ${\bf 1}_n$, product code plays a special role as highlighted by the following result.

\begin{proposition}\label{prop:unitary product code}
Let $G^c_{\bf M}$ be an $n$-dimensional grid with unitary block size, ${\bf 1}_n$.  if $\Phi$ defines a ${\bf 1}_n$-distinguishable color code, the code can be represented as a product code.
\end{proposition}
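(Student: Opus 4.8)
The plan is to translate ${\bf 1}_n$-distinguishability into a concrete structural property of $\Phi$ and then hand-build a product code matching it. First I would observe that when ${\bf m}={\bf 1}_n$ the block $B_{{\bf 1}_n}({\bf x})$ reduces to the single grid point $\{{\bf x}\}$, so the color codeword at ${\bf x}$ is just the singleton $\{\Phi({\bf x})\}$. Thus ${\bf 1}_n$-distinguishability says precisely that distinct grid points receive distinct colors, i.e.\ that $\Phi$ is injective on $G^c_{\bf M}$. This is the only place where the hypothesis is used.

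Next I would exhibit a product code with the right combinatorial shape. For each $i$ take the component map $\Phi^{(i)}\colon\mathbb{Z}_{M_i}\to[M_i]$ to be the identity $\Phi^{(i)}(x_i)=x_i$; each is trivially injective and hence $1$-distinguishable, so by Proposition~\ref{prop:product-code} the product code $\Psi$ with $\Psi({\bf x})=(x_1,\ldots,x_n)$ is ${\bf 1}_n$-distinguishable. Its color alphabet is $[M_1]\times\cdots\times[M_n]$, which has exactly $M_1\cdots M_n$ symbols --- the same number of colors that the injective map $\Phi$ consumes.

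The final step is to identify $\Phi$ with $\Psi$ after a relabeling of colors. Since $\Phi$ is injective, the assignment $\beta(c_1,\ldots,c_n)\triangleq\Phi\big((c_1,\ldots,c_n)\big)$ is a well-defined bijection from the product alphabet $[M_1]\times\cdots\times[M_n]$ onto the image of $\Phi$, and $\Phi({\bf x})=\beta\big(\Psi({\bf x})\big)$ holds for every ${\bf x}$. Hence $\Phi$ and $\Psi$ partition $G^c_{\bf M}$ into exactly the same (singleton) color classes and differ only by the renaming $\beta$; in this sense $\Phi$ is the product code $\Psi$, which is the claim.

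I do not expect a genuine obstacle --- this is a structural observation rather than a hard estimate. The one point that needs to be stated cleanly is the convention that \emph{represented as a product code} is meant up to a bijective relabeling of the color symbols, since an arbitrary distinguishable $\Phi$ need not have its colors presented as $n$-tuples in advance. Once that convention is fixed, the identity components give the decomposition immediately; if one instead insists on a canonical choice of the $\Phi^{(i)}$ built directly from $\Phi$, that requires a little more bookkeeping but no new idea.
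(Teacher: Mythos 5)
Your proposal is correct and follows essentially the same route as the paper's own proof: both reduce ${\bf 1}_n$-distinguishability to injectivity of $\Phi$ and then identify the $\nu({\bf M})$-element image with the product color set $[M_1]\times\cdots\times[M_n]$ so that, after this relabeling, $\Phi({\bf x})=(x_1,\ldots,x_n)$. Your explicit bijection $\beta$ and the remark about the ``up to relabeling'' convention merely spell out what the paper compresses into ``identify color $\Phi({\bf x})$ with ${\bf x}$.''
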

\begin{proof}
The code defined by $\Phi$ is ${\bf 1}_n$-distinguishable if and only if $\Phi$ maps every grid point to a unique color.  So 
the minimal image set contains $\nu({\bf M})$ colors.  We can represent such an image set as a product color set and identify color $\Phi(\bf x)$  with $\bf x$.  In other words
\[
\Phi({\bf x})=(x_1,\ldots,x_n)  \in \mathcal{C}_n.
\]
\end{proof}

\subsection{$n$-dimensional Unitary Sunmao Code}
\label{section:repetitive code}

Although simple, product code does not
appear to be efficient.  Yet, they play an important
role in unitary sunmao decomposition for multi-dimension grids.

Consider an $n$-dimensional grid  $G^c_{\bf M}=\mathbb{Z}_{M_1} \times \ldots \times \mathbb{Z}_{M_n}$ with block size defined by ${\bf m}=(\mi{1}, \ldots ,\mi{n}), \mi{i} \in \mathbb{Z}^+$, where $M'_i=M_i/\mi{i}$ is an integer strictly greater than 1.  
Let ${\bf M'}=(M'_1,\ldots,M'_n)$.
Let $\mathcal{U}=\{G^c_{\bf M';J}: {\bf 0}_n \leq {\bf J}=(J_1,\ldots,J_n) < {\bf m}\}$ be a unitary sunmao decomposition of $G^c_{\bf M}$.
In other words, $B_{\bf m}(\bf x)$ contains one and only one element from each sub-grid $G^c_{\bf M';J}\cong S_{\bf J}$, which is located at ${\bf x+J}$. 
There are a total of $v({\bf m})$ sub-grids in the decomposition.  
All of which are of the same size and equivalent to $\mathbb{Z}_{M'_1} \times \ldots \times \mathbb{Z}_{M'_n}$. 
Let $\Phi_{\bf J}:G^c_{\bf M';J} \rightarrow \mathcal{C}_{\bf J}$ be a color mapping that defines a product code on $G^c_{\bf M';J}$ with $|\mathcal{C}_{\bf J}|=k_{\bf J}=k^{(1)}_{\bf J}\ldots k^{(n)}_{\bf J}$ for some $k^{(i)}_{\bf J}$'s.  
Note that sets $\mathcal{C}^{(i)}_{\bf J}$ and $\mathcal{C}^{(i')}_{{\bf J}'}$  are required to be disjoint if ${\bf J}\ne{\bf J}'$ or $i \ne i'$. 

We define an $n$-dimensional {\it unitary sunmao code}, or unitary code, for short, as follows.  Let ${\bf x}=(x_1,\ldots,x_n)$ be a grid point in $G^c_{\bf M}$.  We can represent $x_i$ as $j_i\mi{i}+r_i$, with $0\leq j_i <M'_i$ 
and $0\leq r_i <\mi{i}$.  Let ${\bf j}({\bf x})=(j_1,\ldots,j_n)$ and ${\bf r}({\bf x})=(r_1,\ldots,r_n), $ then the color mapping is defined as
\[
\Phi({\bf x}) \triangleq\Phi_{\bf r (x)}({\bf j}(x))\triangleq(\Phi^{(1)}_{\bf r(x)}(j_1),\ldots,\Phi^{(n)}_{\bf r(x)}(j_n)).
\]

The color set of $\Phi$, $\mathcal{C}$, is equal to $\bigcup_{\bf J} \mathcal{C}_{\bf J}$ with cardinality $\sum_{\bf J}k_{\bf J}$.  
For each $\bf J$, ${\bf 0}_n\leq {\bf }J < {\bf m}$, $\mathcal{C}_{\bf J}$ is an $n$-dimensional product set, we define an operator, $\mathcal{P}^{(i)}_{\bf J}$ from $\mathcal{C}_{\bf J}$ to $\mathcal{C}^{(i)}_{\bf J}$ by coordinate projection so that $\mathcal{P}^{(i)}_{\bf J}(\Phi_{\bf J}({\bf x}))=\Phi^{(i)}_{\bf J}(x_i)$.
Let $\mathcal{C}_i=\bigcup_{\bf J}\mathcal{C}^{(i)}_{\bf J}$. 
We then define a projection operator, $\mathcal{P}_i:\mathcal{C}\rightarrow\mathcal{C}_i$ by
\[
\mathcal{P}_i(\Phi({\bf x}))=
\Phi^{(i)}_{\bf r(x)}({j_i}).
\]
Given an $m$-block, $B$, define $\mathcal{P}_i(\Phi(B))$ as the collection of projected images set of $B$ onto the $i$-th coordinate
of the color product set.
It is straightforward to check that for ${\bf x}=(j_im_i+r_i;{\bf x}^-_i)$,
\begin{equation}\label{eq:nD braid code}
\begin{split}
\mathcal{P}_i&(\Phi(B({\bf x}))) = \\
&\left(
\bigcup_{r_i \leq s < m_i} \Omega^{(i)}_s (j_i) \right)
\bigcup \left(
\bigcup_{0 \leq s < r_i}\Omega^{(i)}_s (j_i+1) \right),
\end{split}
\end{equation}
where $\Omega^{(i)}_r = \bigcup_{{\bf 0}_{n-1} \leq {\bf J}^-_i \leq {\bf m}^-_i}
\Phi^{(i)}_{(r;{\bf J}^-_i)}$.

The following result is a crucial property of a unitary code.
The proof is very similar in idea to that of Proposition~\ref{prop:product-code} and is provided in Appendix~\ref{appendix:unitary code} for the sake of completeness.

\begin{proposition}\label{prop:unitary code}
An $n$-dimensional unitary code defined by $\Phi$ is $\bf m$-distinguishable if and only if for any $B({\bf x})$, an $m$-block tagged at ${\bf x}=(x_1,\ldots,x_n)\in G^c_{\bf M}$,  and any $l,1\leq l \leq n$, $\mathcal{P}_l(\Phi(B({\bf x})))$ uniquely determines $x_l$.
\end{proposition}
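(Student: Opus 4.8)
The plan is to reduce $\bf m$-distinguishability of the unitary code to a one-coordinate-at-a-time injectivity statement, exactly mirroring the argument for product codes in Proposition~\ref{prop:product-code}, with the disjointness of the color sets $\{\mathcal{C}^{(i)}_{\bf J}\}$ doing the decisive work. The starting point is a structural observation: by the unitary sunmao decomposition, the block $B({\bf x})$ meets each sub-grid $G^c_{\bf M';J}\cong S_{\bf J}$ in exactly one point, so the codeword $\Phi(B({\bf x}))$ is a multiset of $v({\bf m})$ tuples, one contributed by each $\bf J$ with ${\bf 0}_n\leq {\bf J}<{\bf m}$, and the tuple coming from $S_{\bf J}$ lies entirely in $\mathcal{C}_{\bf J}=\mathcal{C}^{(1)}_{\bf J}\times\cdots\times\mathcal{C}^{(n)}_{\bf J}$.

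The key lemma I would establish first is a reassembly statement: the multiset $\Phi(B({\bf x}))$ is completely determined by the collection of its projections $\mathcal{P}_1(\Phi(B({\bf x}))),\ldots,\mathcal{P}_n(\Phi(B({\bf x})))$. One direction, that equal codewords give equal projections, is immediate. For the converse, fix a coordinate $l$; since the sets $\mathcal{C}^{(l)}_{\bf J}$ are pairwise disjoint across $\bf J$, every element of $\mathcal{P}_l(\Phi(B({\bf x})))\subset\bigcup_{\bf J}\mathcal{C}^{(l)}_{\bf J}$ carries an unambiguous tag $\bf J$, and because exactly one point of $B({\bf x})$ lies in $S_{\bf J}$ there is exactly one such element per $\bf J$. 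Hence $\mathcal{P}_l(\Phi(B({\bf x})))$ recovers, for every $\bf J$, the $l$-th color of the tuple that $S_{\bf J}$ contributes. Running this over all $l$ reconstructs each full tuple $(\cdot,\ldots,\cdot)\in\mathcal{C}_{\bf J}$, hence the entire codeword. The usual obstruction to recovering a multiset of tuples from its coordinate marginals, namely the ambiguous re-pairing of coordinates, is exactly what the disjointness eliminates, since each $\mathcal{C}_{\bf J}$ block of the codeword consists of a single element.

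I would also record the fact, visible from \eqref{eq:nD braid code}, that $\mathcal{P}_l(\Phi(B({\bf x})))$ depends on $\bf x$ only through $x_l$ via its decomposition $x_l=j_lm_l+r_l$; the remaining coordinates of $\bf x$ are summed over in forming the $\Omega^{(l)}_s$ and hence drop out. With the lemma and this fact in hand, both implications follow cleanly. For sufficiency, if $\Phi(B({\bf x}))=\Phi(B({\bf x'}))$ then all projections agree, so by hypothesis each $\mathcal{P}_l$ forces $x_l=x'_l$, giving ${\bf x}={\bf x'}$. For necessity I argue contrapositively: if some $\mathcal{P}_l$ fails to determine $x_l$, choose $x_l\neq x'_l$ with equal $l$-th projections and build ${\bf x},{\bf x'}$ agreeing in all other coordinates; then every $\mathcal{P}_{l'}$ with $l'\neq l$ agrees, since it depends only on the common $x_{l'}$, so all projections agree and the reassembly lemma yields $\Phi(B({\bf x}))=\Phi(B({\bf x'}))$ with ${\bf x}\neq{\bf x'}$, contradicting distinguishability.

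The step I expect to be the main obstacle is the reassembly lemma, specifically making precise that disjointness across $({\bf J},i)$ together with the one-point-per-sub-grid property lets the coordinate marginals be re-paired uniquely into the correct per-$\bf J$ tuples. Everything else is bookkeeping once $\mathcal{P}_l$ is known to be a function of $x_l$ alone.
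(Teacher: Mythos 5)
Your proposal is correct and follows essentially the same route as the paper's proof: the paper also handles the ``only if'' direction contrapositively via a hybrid point that keeps $x_l$ but takes the remaining coordinates from ${\bf x}'$, and its point-by-point matching of the two blocks across sub-grids (using the disjointness of the color sets and the one-point-per-sub-grid property) is precisely your reassembly lemma specialized to that pair. Both of your auxiliary facts --- that $\mathcal{P}_l(\Phi(B({\bf x})))$ depends on ${\bf x}$ only through $x_l$, which is the paper's equation~\eqref{eq:nD braid code}, and that the coordinate marginals re-pair uniquely into the per-${\bf J}$ tuples --- are sound, so your argument goes through as stated.
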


Although $\mathcal{P}_i(\Phi(B({\bf x})))$ can uniquely determine $x_i$ in the 1D sub-grid $\mathbb{Z}_{M_i}$, the multiset does not constitute a valid color codeword, since its cardinality is $v({\bf m})$, which is different from $m_i$ unless $m_j=1$ for $j\ne i$.   Nevertheless, it is possible to construct a family of unitary sunmao codes for $\mathbb{Z}_{M_i}$ from $\Phi$.

\subsection{$n$-Dimensional Repetitive Code}
Consider a general $n$-dimensional grid  $G^c_{\bf M}=\mathbb{Z}_{M_1} \times \ldots \times \mathbb{Z}_{M_n}$.
We can extend the idea of a repetitive code to an $n$-dimensional grid by using a product code as a generator code. 

Let $\ell_i$ be a positive factor of $M_i$ for $1\leq i \leq n$, the grid $G^c_{\bf P}=\mathbb{Z}_{\ell_1}\times\ldots\times\mathbb{Z}_{\ell_n}$ is referred to as a generator of $G^c_{\bf M}$.  

\begin{definition}\label{definition:n-dim repetitive code}
Given a product code of $G^c_{\bf P}$ defined by the mapping $\Gamma=(\Gamma^{(1)},\ldots,\Gamma^{(n)})$ , the repetitive code on $G^c_{\bf M}$ based on $\Gamma$ is defined by the mapping $\Phi({\bf x})=(\Phi^{(1)}(x_1),\ldots,\Phi^{(n)}(x_n)) \triangleq (\Gamma^{(1)}(x'_1),\ldots,\Gamma^{(n)}(x'_n))$
where $x'_i=x_i {\rm ~mod~} \ell_i$, $0\leq x'_i <\ell_i$,
\end{definition}

\begin{lemma}\label{lemma:n-dim repetitive code}
Suppose $G^c_{\bf M}$ has a block size defined by the unit vector ${\bf 1}_n$ and the color code defined by $\Gamma$ is ${\bf 1}_n$-distinguishable on $G^c_{\bf P}$. Then two grid points, ${\bf x}=(x_1,\ldots,x_n)$ and ${\bf y}=(y_1,\ldots,y_n)$ are mapped to the same image by $\Gamma$ 
if and only if $|x_i-y_i|$ is divisible by $\ell_i$ 
for all $i, 1\leq i \leq n$.
\end{lemma}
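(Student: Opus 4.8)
The plan is to reduce the statement to the injectivity of the generator map and then simply unwind the definition of the repetitive code. First I would observe that, for a grid with block size ${\bf 1}_n$, the ${\bf 1}_n$-block $B_{{\bf 1}_n}({\bf x})$ tagged at ${\bf x}$ consists of the single point ${\bf x}$, so the color codeword of ${\bf x}$ is the singleton multiset $\{\Gamma({\bf x})\}$. Consequently, the hypothesis that $\Gamma$ is ${\bf 1}_n$-distinguishable on $G^c_{\bf P}$ says exactly that distinct grid points of $G^c_{\bf P}$ receive distinct images; that is, $\Gamma$ is injective on $G^c_{\bf P}$. This is precisely the content underlying Proposition~\ref{prop:unitary product code}.

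Next I would recall from Definition~\ref{definition:n-dim repetitive code} that the repetitive code $\Phi$ based on $\Gamma$ satisfies $\Phi({\bf x}) = \Gamma({\bf x}')$, where ${\bf x}' = (x_1 \bmod \ell_1, \ldots, x_n \bmod \ell_n) \in G^c_{\bf P}$, and likewise $\Phi({\bf y}) = \Gamma({\bf y}')$ with ${\bf y}' = (y_1 \bmod \ell_1, \ldots, y_n \bmod \ell_n)$. Since ${\bf x}, {\bf y}$ live in $G^c_{\bf M}$ while $\Gamma$ is defined on $G^c_{\bf P}$, the phrase ``mapped to the same image by $\Gamma$'' is to be read as $\Phi({\bf x}) = \Phi({\bf y})$, the images under the repetitive code, and I would make this identification explicit at the outset.

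The equivalence then follows by a short chain of implications: $\Phi({\bf x}) = \Phi({\bf y})$ holds iff $\Gamma({\bf x}') = \Gamma({\bf y}')$, iff ${\bf x}' = {\bf y}'$ by the injectivity established above, iff $x_i \equiv y_i \pmod{\ell_i}$ for every $i$, iff $\ell_i$ divides $x_i - y_i$ for every $i$. The one step deserving a sentence of care is the passage to the cyclic distance $|x_i - y_i|$: because the grid is cyclic and $\ell_i \mid M_i$, the divisibility condition $\ell_i \mid (x_i - y_i)$ does not depend on which integer representative of the cyclic difference is chosen, since $\ell_i \mid d$ implies $\ell_i \mid (M_i - d)$. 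Thus the condition stated in terms of $|x_i - y_i|$ is well-defined and matches the modular condition.

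I do not anticipate any genuine obstacle here; once injectivity of $\Gamma$ is in hand, the result is essentially a restatement of the definitions. The only point demanding attention is the cyclic-versus-integer reading of $|x_i - y_i|$, which the divisibility of $M_i$ by $\ell_i$ resolves cleanly.
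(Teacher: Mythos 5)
Your proposal is correct and follows essentially the same route as the paper's own proof: both unwind the repetitive-code definition to reduce the claim to $\Gamma({\bf x}')=\Gamma({\bf y}')$ on the generator grid, invoke ${\bf 1}_n$-distinguishability as injectivity of $\Gamma$, and conclude $\ell_i \mid (x_i-y_i)$ coordinatewise. Your explicit remarks that ``mapped to the same image by $\Gamma$'' must be read through the repetitive code $\Phi$, and that $\ell_i \mid M_i$ makes the cyclic-distance condition well-defined, are small clarifications the paper leaves implicit, not a different argument.
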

\begin{proof}
Grid points $\bf x$ and $\bf y$ map to the same image if and only if for all $i, 1\leq i \leq n$, $\Gamma^{(i)}(x'_i)=\Gamma^{(i)}(y'_i)$ where $x'_i=x_i {\rm ~mod~} \ell_i$, $y'_i=y_i {\rm ~mod~} \ell_i$, $0\leq x'_i,y'_i<\ell_i$.   In other words, $\Gamma(x'_1,\ldots,x'_n)=\Gamma(y'_1,\ldots,y'_n)$.  Since $\Gamma$ is ${\bf 1}_n$-distinguishable on $G^c_{\bf P}$, this implies $x'_i=y'_i$ for all $i$.  
Due to the repetitive code structure, this implies 
$|x_i-y_i|$ is divisible by $\ell_i$.
\end{proof}

\subsection{$n$-dimensional Unitary Braid Code}
\label{section:unitary braid code}
An $n$-dimensional {\it unitary braid code} is a special code based on a unitary sunmao decomposition.

For each coordinate  $i$, let
$\mathcal{Q}_i \triangleq\{q^{(i)}_{\bf J}:{\bf 0}_n \leq {\bf J < m} \}$ be a multiset of strictly positive integers.    We denote the least common multiple of all elements in $\mathcal{Q}_i$ by $Q_i$.  
Note that the number of elements in $\mathcal{Q}_i$ is equal to $v({\bf m})$, the volume of an $\bf m$-block.  
We do not impose extra conditions on the
choice of $q^{(i)}_{\bf J}$ for different $i$ and $\bf J$.

Given an integer $g \geq 2$ and $M_i=gm_iQ_i$, define a grid size vector ${\bf M}=(M_1,\ldots ,M_n)$.  
Consider an $n$-dimensional grid $G^c_{\bf M}=\mathbb{Z}_{M_1} \times \ldots \times \mathbb{Z}_{M_n}$ with block size $\bf m$.
Let $M'_i =M_i /m_i$.
Note that $\{G^c_{\bf M';J}\}$ is unitary sunmao decomposition.
For each $G^c_{\bf M';J}$ we associate with it a generator, 
\[
G^c_{\bf M';J}=\mathbb{Z}_{\ell^{(1)}_{\bf J}}\times\ldots\times\mathbb{Z}_{\ell^{(n)}_{\bf J}},
\]
where $\ell^{(i)}_{\bf J}=gq^{(i)}_{\bf J}$, $q^{(i)}_{\bf J}\in \mathcal{Q}_i$ for all $i$. 
Note that $M'_i/\ell^{(i)}_{\bf J}=Q_i/q^{(i)}_{\bf J}$, which implies that $\ell^{(i)}_{\bf J}$ divides $M'_i$ for all $i$. 

Let $\Gamma_{\bf J}$ define an ${\bf 1}_n$-distinguishable product code on $G^c_{\bf M';J}$.  It follows that each grid point in the generator is mapped to a distinct color, so 
\[
g^{n} {\sum_{{\bf 0}_n \leq {\bf J < m}}\prod_{1\leq i\leq n} q^{(i)}_{\bf J}} 
\]
distinct colors are required.

\begin{definition}\label{definition:unitary braid code}
Let $G^c_{\bf M}$ be a $n$-dimensional grid with a block size $\bf m$ and $\mathcal{U}=\{G^c_{\bf M';J}: {\bf 0}_n \leq {\bf J <M}\}$ be a unitary sunmao decomposition with a color mapping $\Phi_{\bf M}=\{\Phi_{\bf J}\}$.   
The color code defined by $\Phi_{\bf M}$  is a unitary braid code if each $\Phi_{\bf J}$ is a repetitive code based on a ${\bf 1}_n$-distinguishable code on the generator defined by $\Gamma_{\bf J}$. 
\end{definition}

We provide a 2D unitary braid code example for illustration.
Let ${\bf m}=(2,2), g=2$, $q^{(1)}_{0,0}=1$, $q^{(1)}_{0,1}=1$, $q^{(1)}_{1,0}=2$, $q^{(1)}_{1,1}=3$, $q^{(2)}_{0,0}=3$, $q^{(2)}_{0,1}=2$, $q^{(2)}_{1,0}=1$, $q^{(2)}_{1,1}=1$.  
Hence, $Q_1=Q_2=6$.
Consider a grid $G^c_{\bf M}=\mathbb{Z}_{24}\times\mathbb{Z}_{24}$ with a block size $\bf m$.  
The unitary sunmao decomposition consists of 4 sub-grids, each of which is equivalent to $\mathbb{Z}_{12}\times\mathbb{Z}_{12}$, i.e., ${\bf M'}=(12,12)$.
The generator for $G^c_{{\bf M'};0,0}$ is $\mathbb{Z}_{2q^{(1)}_{0,0}}\times\mathbb{Z}_{2q^{(2)}_{0,0}}=\mathbb{Z}_2\times\mathbb{Z}_6$.
The generator for $G^c_{{\bf M'};0,1}$ is $\mathbb{Z}_{2q^{(1)}_{0,1}}\times\mathbb{Z}_{2q^{(2)}_{0,1}}=\mathbb{Z}_2\times\mathbb{Z}_4$.
The generator for $G^c_{{\bf M'};1,0}$ is $\mathbb{Z}_{2q^{(1)}_{1,0}}\times\mathbb{Z}_{2q^{(2)}_{1,0}}=\mathbb{Z}_4\times\mathbb{Z}_2$.
The generator for $G^c_{{\bf M'};1,1}$ is $\mathbb{Z}_{2q^{(1)}_{1,1}}\times\mathbb{Z}_{2q^{(2)}_{1,1}}=\mathbb{Z}_6\times\mathbb{Z}_2$.

The resulting color map $\Phi$ is shown in Fig.~\ref{fig:Fig1}.
\begin{figure}[t]
    \centering
    \hspace{-1.5cm}

{\footnotesize
$\Aa\Ba\Ab\Bb \Aa\Ba\Ab\Bb \Aa\Ba\Ab\Bb \Aa\Ba\Ab\Bb \Aa\Ba\Ab\Bb \Aa\Ba\Ab\Bb$ \\
$\Ca\Da\Cb\Db \Cc\Dc\Cd\Dd \Ca\De\Cb\Df \Cc\Da\Cd\Db \Ca\Dc\Cb\Dd \Cc\De\Cd\Df$ \\
$\Ac\Bc\Ad\Bd \Ac\Bc\Ad\Bd \Ac\Bc\Ad\Bd \Ac\Bc\Ad\Bd \Ac\Bc\Ad\Bd \Ac\Bc\Ad\Bd$ \\
$\Ce\Dg\Cf\Dh \Cg\Di\Ch\Dj \Ce\Dk\Cf\Dl \Cg\Dg\Ch\Dh \Ce\Di\Cf\Dj \Cg\Dk\Ch\Dl$ \\
$\Ae\Be\Af\Bf \Ae\Be\Af\Bf \Ae\Be\Af\Bf \Ae\Be\Af\Bf \Ae\Be\Af\Bf \Ae\Be\Af\Bf$ \\
$\Ca\Da\Cb\Db \Cc\Dc\Cd\Dd \Ca\De\Cb\Df \Cc\Da\Cd\Db \Ca\Dc\Cb\Dd \Cc\De\Cd\Df$ \\
$\Ag\Bg\Ah\Bh \Ag\Bg\Ah\Bh \Ag\Bg\Ah\Bh \Ag\Bh\Ah\Bh \Ag\Bh\Ah\Bh \Ag\Bh\Ah\Bh$ \\
$\Ce\Dg\Cf\Dh \Cg\Di\Ch\Dj \Ce\Dk\Cf\Dl \Cg\Dg\Ch\Dh \Ce\Di\Cf\Dj \Cg\Dk\Ch\Dl$ \\
$\Ai\Ba\Aj\Bb \Ai\Ba\Aj\Bb \Ai\Ba\Aj\Bb \Ai\Ba\Aj\Bb \Ai\Ba\Aj\Bb \Ai\Ba\Aj\Bb$ \\
$\Ca\Da\Cb\Db \Cc\Dc\Cd\Dd \Ca\De\Cb\Df \Cc\Da\Cd\Db \Ca\Dc\Cb\Dd \Cc\De\Cd\Df$ \\
$\Ak\Bc\Al\Bd \Ak\Bc\Al\Bd \Ak\Bc\Al\Bd \Ak\Bc\Al\Bd \Ak\Bc\Al\Bd \Ak\Bc\Al\Bd$ \\
$\Ce\Dg\Cf\Dh \Cg\Di\Ch\Dj \Ce\Dk\Cf\Dl \Cg\Dg\Ch\Dh \Ce\Di\Cf\Dj \Cg\Dk\Ch\Dl$ \\
$\Aa\Be\Ab\Bf \Aa\Be\Ab\Bf \Aa\Be\Ab\Bf \Aa\Be\Ab\Bf \Aa\Be\Ab\Bf \Aa\Be\Ab\Bf$ \\
$\Ca\Da\Cb\Db \Cc\Dc\Cd\Dd \Ca\De\Cb\Df \Cc\Da\Cd\Db \Ca\Dc\Cb\Dd \Cc\De\Cd\Df$ \\
$\Ac\Bg\Ad\Bh \Ac\Bg\Ad\Bh \Ac\Bg\Ad\Bh \Ac\Bg\Ad\Bh \Ac\Bg\Ad\Bh \Ac\Bg\Ad\Bh$ \\
$\Ce\Dg\Cf\Dh \Cg\Di\Ch\Dj \Ce\Dk\Cf\Dl \Cg\Dg\Ch\Dh \Ce\Di\Cf\Dj \Cg\Dk\Ch\Dl$ \\
$\Ae\Ba\Af\Bb \Ae\Ba\Af\Bb \Ae\Ba\Af\Bb \Ae\Ba\Af\Bb \Ae\Ba\Af\Bb \Ae\Ba\Af\Bb$ \\
$\Ca\Da\Cb\Db \Cc\Dc\Cd\Dd \Ca\De\Cb\Df \Cc\Da\Cd\Db \Ca\Dc\Cb\Dd \Cc\De\Cd\Df$ \\
$\Ag\Bc\Ah\Bd \Ag\Bc\Ah\Bd \Ag\Bc\Ah\Bd \Ag\Bc\Ah\Bd \Ag\Bc\Ah\Bd \Ag\Bc\Ah\Bd$ \\
$\Ce\Dg\Cf\Dh \Cg\Di\Ch\Dj \Ce\Dk\Cf\Dl \Cg\Dg\Ch\Dh \Ce\Di\Cf\Dj \Cg\Dk\Ch\Dl$ \\
$\Ai\Be\Aj\Bf \Ai\Be\Aj\Bf \Ai\Be\Aj\Bf \Ai\Be\Aj\Bf \Ai\Be\Aj\Bf \Ai\Be\Aj\Bf$ \\
$\Ca\Da\Cb\Db \Cc\Dc\Cd\Dd \Ca\De\Cb\Df \Cc\Da\Cd\Db \Ca\Dc\Cb\Dd \Cc\De\Cd\Df$ \\
$\Ak\Bg\Al\Bh \Ak\Bg\Al\Bh \Ak\Bg\Al\Bh \Ak\Bg\Al\Bh \Ak\Bg\Al\Bh \Ak\Bg\Al\Bh$ \\
$\Ce\Dg\Cf\Dh \Cg\Di\Ch\Dj \Ce\Dk\Cf\Dl \Cg\Dg\Ch\Dh \Ce\Di\Cf\Dj \Cg\Dk\Ch\Dl$ \\
}
   \caption{A 2D unitary braid code example.  The color set for $\Phi_{0,0}$ is $\{\Aa,\ldots,\Aj,\Ak,\Al\}$, for $\Phi_{0,1}$, 
   $\{\Ba,\ldots,\Bh\}$, for
$\Phi_{1,0}$, $\{ \Ca,\ldots,\Ch \}$, and for $\Phi_{1,1}$,
$\{\Da,\ldots,\Dj,\Dk,\Dl\}$.}  \label{fig:Fig1} 
     \vspace{-0.2cm}
\end{figure} 
It is worth noting that the color symbol sets have a hidden structure due to the product code construction. 
Take, for example, the color symbol set for $\Phi_{0,1}$. The 8 color symbols have an ordered-pair representation so that 
$b_0=(0,0),b_1=(1,0)$,  
$b_2=(0,1),b_3=(1,1)$, 
$b_4=(0,2),b_5=(1,2)$, 
$b_6=(0,3),b_7=(1,3)$.
If $b_0$ and $b_1$ are swapped, then $(b_2, b_3)$,
$(b_4,b_5)$ and $(b_6,b_7)$ should also be swapped
to preserve the product code structure.

\begin{theorem}\label{theorem:unitary braid code}
An $n$-dimensional unitary braid code given by Definition~\ref{definition:unitary braid code} is $\bf m$-distinguishable.
\end{theorem}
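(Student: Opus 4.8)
The plan is to reduce the claim to a per-coordinate statement via Proposition~\ref{prop:unitary code}, which tells us that a unitary code is $\mathbf{m}$-distinguishable exactly when, for every coordinate $i$ and every $\mathbf{m}$-block $B(\mathbf{x})$, the projected multiset $\mathcal{P}_i(\Phi(B(\mathbf{x})))$ determines $x_i$. So I would fix $i$, take two blocks with $\mathcal{P}_i(\Phi(B(\mathbf{x})))=\mathcal{P}_i(\Phi(B(\mathbf{x}')))$, write $x_i=j_im_i+r_i$ and $x'_i=j'_im_i+r'_i$ with $0\le r_i,r'_i<m_i$ and $0\le j_i,j'_i<M'_i=gQ_i$, and aim to force $r_i=r'_i$ and $j_i=j'_i$, whence $x_i=x'_i$.

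The first step extracts clean congruences from the multiset equality. By the explicit form~\eqref{eq:nD braid code}, $\mathcal{P}_i(\Phi(B(\mathbf{x})))$ contains exactly one symbol from each color set $\mathcal{C}^{(i)}_{\mathbf{J}}$, $\mathbf{J}=(s;\mathbf{J}^-_i)$, namely $\Phi^{(i)}_{\mathbf{J}}(l_{\mathbf{J}})$ with $l_{\mathbf{J}}=j_i$ when $s\ge r_i$ and $l_{\mathbf{J}}=j_i+1$ when $s<r_i$ (and similarly $l'_{\mathbf{J}}$ from $r'_i,j'_i$). Since the sets $\mathcal{C}^{(i)}_{\mathbf{J}}$ are pairwise disjoint, matching the multisets set-by-set forces $\Phi^{(i)}_{\mathbf{J}}(l_{\mathbf{J}})=\Phi^{(i)}_{\mathbf{J}}(l'_{\mathbf{J}})$ for every $\mathbf{J}$. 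Because $\Gamma_{\mathbf{J}}$ is a $\mathbf{1}_n$-distinguishable product code, each coordinate map $\Gamma^{(i)}_{\mathbf{J}}$ is injective on its generator, so the repetitive map $\Phi^{(i)}_{\mathbf{J}}$ satisfies $\Phi^{(i)}_{\mathbf{J}}(l)=\Phi^{(i)}_{\mathbf{J}}(l')$ iff $l\equiv l'\pmod{\ell^{(i)}_{\mathbf{J}}}$ with $\ell^{(i)}_{\mathbf{J}}=gq^{(i)}_{\mathbf{J}}$ (the one-coordinate specialization of Lemma~\ref{lemma:n-dim repetitive code}). This gives, for every $\mathbf{J}$, a congruence $l_{\mathbf{J}}\equiv l'_{\mathbf{J}}\pmod{gq^{(i)}_{\mathbf{J}}}$.

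The crux — and the step I expect to be the main obstacle — is establishing $r_i=r'_i$; assume without loss of generality $r_i\le r'_i$ and suppose $r_i<r'_i$. For an index with $J_i=s$ in the overlap range $r_i\le s<r'_i$ one has $l_{\mathbf{J}}=j_i$ but $l'_{\mathbf{J}}=j'_i+1$, giving $j_i\equiv j'_i+1\pmod{gq^{(i)}_{\mathbf{J}}}$; for $J_i=m_i-1$ (which lies outside the overlap since $r'_i<m_i$) one has $l_{\mathbf{J}}=j_i$, $l'_{\mathbf{J}}=j'_i$, giving $j_i\equiv j'_i\pmod{gq^{(i)}_{\mathbf{J}}}$. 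Reducing both modulo $g$, which divides every modulus, yields $j_i\equiv j'_i+1$ and $j_i\equiv j'_i\pmod g$ simultaneously, i.e. $1\equiv 0\pmod g$, contradicting $g\ge 2$. Hence $r_i=r'_i$ (the overlap range is empty when $m_i=1$, so this holds trivially there), and then every $\mathbf{J}$ yields $j_i\equiv j'_i\pmod{gq^{(i)}_{\mathbf{J}}}$; since $\lcm_{\mathbf{J}}(gq^{(i)}_{\mathbf{J}})=g\lcm_{\mathbf{J}}(q^{(i)}_{\mathbf{J}})=gQ_i$ and $0\le j_i,j'_i<gQ_i$, we conclude $j_i=j'_i$, so $x_i=x'_i$. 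As $i$ was arbitrary this gives $\mathbf{x}=\mathbf{x}'$, and Proposition~\ref{prop:unitary code} then delivers $\mathbf{m}$-distinguishability. The hypothesis $g\ge 2$ is used precisely to make this overlap-detection argument go through, playing the role that the more delicate divisibility bookkeeping played in the 1D Theorem~\ref{theorem:construction}.
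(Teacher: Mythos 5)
Your proposal is correct, and it reaches the conclusion by a somewhat different route than the paper. The paper proves Theorem~\ref{theorem:unitary braid code} by comparing the full block multisets directly: it normalizes ${\bf x}'={\bf 0}_n$ by a cyclic shift, takes the unique point ${\bf y}$ of $B({\bf x})$ lying in the sub-grid indexed by ${\bf 0}_n$ (whose coordinates are exact multiples of $m_i$, say $y_i=\ell_i m_i$), derives $g\mid \ell_i$ from Lemma~\ref{lemma:code distance}, and then shows that the row index $r_i$ of the point of $B({\bf x})$ in any other sub-grid ${\bf J}$ is also divisible by $g$ while lying within $1$ of $\ell_i$; since $g\geq 2$ this forces $r_i=\ell_i$, whence $\ell_i$ is divisible by $gq^{(i)}_{\bf J}$ for every ${\bf J}$, hence by $gQ_i=M'_i$, hence $\ell_i=0$. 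You instead enter through the trivial ``if'' direction of Proposition~\ref{prop:unitary code}, work one coordinate at a time with the projections $\mathcal{P}_i$, and eliminate the misalignment $r_i\neq r'_i$ by exhibiting two contradictory congruences modulo $g$ (one from a sub-grid with $J_i$ in the overlap range $[r_i,r'_i)$, one from $J_i=m_i-1$). The arithmetic core is the same in both proofs --- disjointness of the per-sub-grid color sets, the repetitive-code congruence modulo $gq^{(i)}_{\bf J}$, the hypothesis $g\geq 2$ to kill off-by-one shifts, and $\lcm_{\bf J}\bigl(gq^{(i)}_{\bf J}\bigr)=gQ_i=M'_i$ to pin the position --- but your packaging buys something the paper only obtains afterwards: you directly establish the stronger per-coordinate claim that $\mathcal{P}_i(\Phi(B({\bf x})))$ alone determines $x_i$, which is precisely the property the decoding algorithm of Section~\ref{section:decoding nD} has to extract from Theorem~\ref{theorem:unitary braid code} via the ``only if'' direction of Proposition~\ref{prop:unitary code}. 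The paper's version is marginally shorter thanks to the WLOG normalization to ${\bf 0}_n$; yours is unnormalized but more modular, and would let the decoding section cite your intermediate claim directly.
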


\begin{proof}
For ease of reference, recall that the parameters of $\Phi$ are: $g \geq 2$,  $\mi{i}$ and $\mathcal{Q}_i$ for $1\leq i \leq n$. The integer multiset $\mathcal{Q}_i$ has elements indexed by ${\bf J}=(J_1,\ldots,J_n)$  with $0\leq J_i <\mi{i}$ and has cardinality $v({\bf m})$.  
The least common multiple of all its elements is denoted by $Q_i$ and $M_i=g\mi{i} Q_i$, $M'_i=M_i/m_i$.

Suppose that two $\bf m$-blocks, $B({\bf x})$ tagged at ${\bf x}=(x_1,\ldots,x_n)$ and $B({\bf x}')$ at ${\bf x'}=(x'_1,\ldots,x'_n)$, have identical image sets under the mapping $\Phi_{\bf M}$.  
Without loss of generality, we can rotate the index on the cyclic grid so that we can assume ${\bf x'=0}$.   
By construction, a grid point in $B({\bf x})$ shares the same image under $\Phi_{\bf M}$ with a grid point in $B({\bf x}')$  if and only if they belong to the same sub-grids.

We label the sub-grids by means of index ${\bf J}$.  
For each sub-grid, there is a condition regarding the distance between $\bf x$ and $\bf 0$ that will be used to prove the theorem.

Recall that $\Phi_{\bf M}$ is a collection of color mappings $\Phi_{\bf J}$, one for each $\bf J$.   
Each $\Phi_{\bf J}$ is a product code on grid $G_{\bf M';J}^c$ defined by an $n$-dimensional repetitive code.   
In other words, for each $i, 1\leq i \leq n$, $\Phi_{\bf J}^{(i)}$ is built by repeating an $\mi{i}$-distinguishable code that has a base period denoted by $\ell_{\bf J}^{(i)}$.  
Furthermore, $\ell_{\bf J}^{(i)}=gq_{\bf J}^{(i)}$, where $q_{\bf J}^{(i)} \in \mathcal{Q}_i$.

First, consider the sub-grid $G^c_{{\bf 0}_n}$, which contains ${\bf x'=0}_n$.  
Let ${\bf y}=(y_1,\ldots,y_n)$ be the unique point in $B({\bf x})$ that also belongs to this sub-grid.  
We can represent $y_i$ by $\ell_i\mi{i}$ for some $\ell_i$ in $\mathbb{Z}_{M'_i}$.  
The assumption that two $\bf m$-blocks have identical mapping image sets implies that under the 1D code, $\Phi_{{\bf 0}_n}^{(i)}$, the points $0$ and $\ell_i$ in $\mathbb{Z}_{M'_i}$ are mapped to the same image.  
By Lemma~\ref{lemma:code distance}, $\ell_i$ is divisible by $\ell^{(i)}_{\bf J}=gq^{(i)}_{\bf J}$. 

In general, let ${\bf z_J}\in B({\bf x})$ and ${\bf J}\in B({\bf 0}_n)$ be any pair of points that belong to the same sub-grid, $G^c_{\bf J}$, where ${\bf J}=(j_1,\ldots. j_n)$, $0\leq j_i <\mi{i}$.
We can denote the $i$-th component of $\bf z_J$ by $z^{(i)}=r_i\mi{i}+j_i$ for some $r_i, 0\leq r_i <M'_i$.   
However, since $\bf y$ and $\bf z_J$ belong to the same $\bf m$-block, ${\bf z}\in B({\bf x})$, it follows that $r_i=\ell_i, \ell_i-1,$ or $\ell_i+1~(\!\!\!\mod M'_i)$.    

The assumption that two $\bf m$-blocks have identical mapping image sets implies that under the 1D code, $\Phi_{\bf J}^{(i)}$, $0$ and $r_i$ in $\mathbb{Z}_{M_i}$ are mapped to the same image.  
By Lemma~\ref{lemma:code distance}, $r_i$ is divisible by $\ell^{(i)}_{\bf J}=gq^{(i)}_{\bf J}$, hence $g$ divides $r_i$.
Since $g\geq 2$, this implies $r_i=\ell_i$.
So, $\ell_i$ is divisible by $gq^{(i)}_{\bf J}$ for all ${\bf 0}_n \leq \bf J < m$ and $\ell_i$ is divisible by $gQ_i=M'_i$.
Since $\ell_i < M'_i$, it follows that $\ell_i=0$ for all $i$.  
So ${\bf x}={\bf 0}_n$ and the two $\bf m$-blocks are identical.
\end{proof}

Note that the above result is simply a version of Theorem~\ref{theorem:construction} adapted to multi-dimensional  unitary braid code.  

\subsection{Unitary Braid Code on $n$-dimensional Grids of Arbitrary Size}\label{section:general braid code}

In this section, we explain how to define multi-dimensional unitary braid code for arbitrary size.

Given an $n$-dimensional flat grid, $G_{\bf M}$ of arbitrary size $\bf M$ with a block size $\bf m$, it is possible to define an $\bf m$-distinguishable color code on it by embedding it in a grid on which a standard braid code exists.  
It is strict forward to check that the restricted color mapping defines a distinguishable color code.

For a cyclic grid, $G^c_{\bf L}$, one can embed it in $G^c_{\bf M}$ such that $L_i \leq M_i$ for $1\leq i \leq n$. 
Let $\Phi$ be a color mapping that defines a standard unitary code on $G^c_{\bf M}$ as described in Section\ref{section:unitary braid code}.  The restriction of $\Phi$ to $G^c_{\bf L}$ defines a color code, however the
code may not be $\bf m$-distinguishable.   To obtain a distinguishable code, $\Phi$ may need to be modified. 
Below we present an approach that requires additional colors but is relatively easy to describe.

Recall that the unitary braid code defined by a color mapping $\Phi_{\bf M}$ is decomposed into $\nu(\bf m)$
product code mappings, $\Phi_{\bf J}$, which maps sub-grid $G^c_{\bf J}$ to a color set  $\mathcal{C}_{\bf J}$, for ${\bf 0}_{n} \leq  {\bf J} < {\bf m}$. Recall
$\mathcal{C}_{\bf J}$ is a product set, $\prod_i \mathcal{C}^{(i)}_{\bf J}$, with $|\mathcal{C}^{(i)}_{\bf J}|=gq^{(i)}_{\bf J}$.  
Construct a new collection of product color sets,
$\{ \mathcal{D}_{\bf J}\}$, as follows:  If ${\bf J}=(l,{\bf 0}^-_i)$ for some $i, 1\leq i \leq n$ and $l,0\leq l <m_i$, then a new color is added to $\mathcal{C}^{(i)}_{\bf J}$ so that
$D^{(i)}_{\bf J}=C^{(i)}_{\bf J}\cup \{ d^{(i)}_{\bf J} \}$
otherwise define $D^{(i)}_{\bf J}=C^{(i)}_{\bf J}$.  
Next, define new color product sets, $\mathcal{D}_{\bf J}=\prod_i \mathcal{D}^{(i)}_{\bf J}$ and 
let $\mathcal{D}=\bigcup_{\bf J}\mathcal{D}_{\bf J}$.  
We emphasize that in this construction of the new
color sets, we require that $d^{(i)}_{\bf J} \ne d^{(i)}_{\bf J'}$ if $\bf J \ne J'$.

We introduce a terminology to facilitate our subsequent
discussion -- A color in a color set is said to contain a factor color $c_i$ in the $i$-th coordinate if in the vector form, it can be represented as, $(c_i;{\bf c}^-_i)$.

Next, we describe an algorithm that allows us to construct
an $\bf m$-distinguishable color mapping by modification of
$\Phi_{\bf M}=\{ \Phi_{\bf J}\}$. The modified version of $\Phi$ maps to $\mathcal{D}$ and is defined iteratively on each coordinate, starting with $i=1$.  

Denote the modified output after the $k$-th iteration as $\Theta_{k}=\{ \Theta_{k;\bf J}\}$, which defines a color mapping on a cyclic grid, $G^c_{{\bf L}_k}$ where ${\bf L}_k={(L_1,...,L_k,M_{k+1},...,\ldots M_n)}$.
Moreover, just like $\Phi_{\bf J}$, $\Theta_{k;\bf J}$ is a product code, so it
has a structure:
\[
\Theta_{k;\bf J}({\bf x})=(\Theta^{(1)}_{k;\bf J}(x_1), \ldots, \Theta^{(n)}_{k;\bf J}(x_n))
\]
The mapping $\Theta_{k}$ then serves as the input to the $k+1$ iteration.  The iteration begins with $\Theta_{0}=\Phi$ and finishes with the output $\Psi \triangleq\Theta_{n}$, which is defined on the grid $G^c_{\bf L}$.

At the $i$-th iteration, we modify $\Theta_{i;\bf J}$ for each index ${\bf J}, {\bf 0}_{n} \leq  {\bf J} < {\bf m}$, as follows:

If $L_i=M_i$ or $L_i$ is not a multiple of $m_i$,  $\Theta_{i;\bf J}$ is simply a restriction of $\Theta_{i-1;\bf J}$ to $G^c_{{\bf L}_i}$.
Otherwise, if $L_i/m_i=R_i$ is an integer, $\Theta_{i;\bf J}$ is defined by restricting $\Theta_{i-1;\bf J}$ to $G^c_{{\bf L}_i}$ except for
${\bf J}$ of the form $(l;{\bf 0}^-_i)$ for some $l, 0\leq l <m_i$, the color mapping is then as follows:
\begin{enumerate}
\item For ${\bf x}=(R_i-1;x^-_i)\in G^c_{{\bf L}_i}$, define
\[
\Theta_{i;\bf J}^{(i)}({x_i})=d^{(i)}_{\bf J}, ~\Theta_{i;\bf J}^{(l)}({x_l})=\Theta_{i-1;\bf J}^{(l)}({x_l}) {\rm ~for~} l\ne i.
\]
\item Otherwise, define $\Theta_{i;\bf J}({\bf x})=\Theta_{i-1;\bf J}({\bf x})$.
\end{enumerate}

It is clear that $\Theta_{i;\bf J}$ is a product code as claimed.
Note that
\[
\mathcal{P}_i(\Psi({\bf x}))=
\mathcal{P}_i(\Theta_i({\bf x}))
\]
since the modifications at iteration steps at $l\ne i$  does not
affect the projected values into the $i$-th coordinate component of a color symbol.  If ${\bf x}=(lm_i+j,{\bf x}^-_i)$, then   
\begin{equation}\label{eqn:Psi}
\mathcal{P}_i(\Psi({\bf x}))=
\Psi^{(i)}_{(j,{\bf x}^-_i)}(l)=
\Theta^{(i)}_{i;(j,{\bf x}^-_i)}(l).
\end{equation}
We call a grid point ${\bf x}=(x_1, \ldots, x_n)$ of $G^c_{\bf L}$ exceptional in coordinate $i$, if $x_i$ satisfies $(R_i-2)m_i+1 \leq x_i <L_i=R_im_i$.  
The image set under $\Psi$ of  $B({\bf x})$, an $\bf m$-block at $\bf x$, contains a color that has a factor $d^{(i)}_{\bf J}$ in the $i$-th coordinate
for some $\bf J$ if and only if $\bf x$ is exceptional in coordinate $i$.

\begin{theorem}\label{theorem:general unitary}
$\Psi$, which is defined on $G^c_{\bf L}$, is $\bf m$-distinguishable if $L_i \geq 2m_i$ for all $i$.
\end{theorem}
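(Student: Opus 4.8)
The plan is to reduce the $n$-dimensional statement to $n$ one-dimensional claims and then dispose of each coordinate with the boundary analysis already developed for $1$D braid codes. Since each $\Theta_{k;\bf J}$, and hence every component of $\Psi$, is a product code, $\Psi$ is a unitary code, the projection identity \eqref{eqn:Psi} holds, and the argument behind Proposition~\ref{prop:unitary code} reduces $\bf m$-distinguishability of $\Psi$ on $G^c_{\bf L}$ to the following: for every $\bf m$-block $B({\bf x})$ and every coordinate $i$, the projected multiset $\mathcal{P}_i(\Psi(B({\bf x})))$ determines $x_i$. By \eqref{eqn:Psi} the projection $\mathcal{P}_i(\Psi(\cdot))$ only feels the modifications made at the $i$-th iteration, so the coordinates decouple and I may fix a single coordinate $i$.

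First I would split according to how iteration $i$ acted on coordinate $i$. If $L_i=M_i$, no modification occurred and $\mathcal{P}_i(\Psi(\cdot))$ agrees with the standard code, whose $i$-th coordinate is recovered exactly as in the proof of Theorem~\ref{theorem:unitary braid code}. If $m_i\nmid L_i$, iteration $i$ merely restricts, and the misalignment of the cyclic seam with the block lattice prevents collisions, just as in Proposition~\ref{prop:restriction}. The substantive case is $L_i=R_im_i$ with $R_i=L_i/m_i\geq 2$, which the hypothesis $L_i\geq 2m_i$ guarantees; here the extra colors $d^{(i)}_{\bf J}$ are active.

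For this case I would follow the scheme of Proposition~\ref{prop:general size}. Write $x_i=j_im_i+r_i$ with $0\le r_i<m_i$, and call ${\bf x}$ \emph{exceptional in coordinate $i$} when its block crosses the modified seam, i.e. $(R_i-2)m_i+1\le x_i<R_im_i$. Because the color sets $\mathcal{C}^{(i)}_{\bf J}$ are mutually disjoint, the projected multiset splits by sub-grid index $\bf J$; in particular only the layers with perpendicular index ${\bf 0}^-_i$ can contribute the special colors, so the presence and multiplicity of $d^{(i)}_{\bf J}$-factors in $\mathcal{P}_i(\Psi(B({\bf x})))$ reproduce, in coordinate $i$, the modified $1$D pattern of Proposition~\ref{prop:general size}. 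For a non-exceptional block no such factor appears, and one recovers the block index $j_i$ by the divisibility argument of Theorem~\ref{theorem:unitary braid code} carried out on the shortened grid $\mathbb{Z}_{R_i}$: equality of projections forces the coordinate-$i$ distance to be divisible by $gq^{(i)}_{\bf J}$ for every $\bf J$ (Lemma~\ref{lemma:code distance}), hence by $\lcm_{\bf J}(gq^{(i)}_{\bf J})=gQ_i=M'_i$; since both candidate positions lie in $\mathbb{Z}_{R_i}$ with $R_i\le M'_i$, the distance is $0$, and the offset $r_i$ is then fixed by the same $g\ge 2$ separation used in the original proof. For an exceptional block, $x_i$ is read off from the multiplicity of the $d^{(i)}_{\bf J}$-factors, which increases by one as ${\bf x}$ advances through the exceptional window; these multiplicities also separate exceptional from non-exceptional blocks, completing the recovery of $x_i$.

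The main obstacle is the loss of cyclic translation symmetry. In Theorem~\ref{theorem:unitary braid code} one could normalise ${\bf x}'={\bf 0}$; after modification $\Psi$ is no longer shift-invariant, so that reduction is unavailable, and the wrap-around of the shortened grid $\mathbb{Z}_{R_i}$ can glue together positions that the pure divisibility argument would not separate. Repairing this is precisely the role of the colors $d^{(i)}_{\bf J}$, and the condition $R_i\ge 2$ (equivalently $L_i\ge 2m_i$) is exactly what ensures that the two blocks overlapping the seam are told apart by these colors. The delicate points to verify are that the $d$-color multiplicities vary monotonically across the exceptional window and that no interior block accidentally matches an exceptional projection; both reduce to the bookkeeping already carried out for the $1$D statement.
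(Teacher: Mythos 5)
Your overall architecture matches the paper's: reduce to coordinates via Eq.~\eqref{eqn:Psi} and Proposition~\ref{prop:unitary code}, split into exceptional and non-exceptional blocks, handle non-exceptional blocks by falling back on the distinguishability of the unmodified standard code (your inlined divisibility argument, forcing the coordinate-$i$ distance to be divisible by $gQ_i=M'_i$ and bounding it by $R_i\le M'_i$, is a correct if longer substitute for the paper's direct appeal to Theorem~\ref{theorem:unitary braid code} via Proposition~\ref{prop:unitary code}), and your explicit treatment of the cases $L_i=M_i$ and $m_i\nmid L_i$ is if anything more careful than the paper's. However, your treatment of the exceptional case contains a genuine error. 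You claim that $x_i$ ``is read off from the multiplicity of the $d^{(i)}_{\bf J}$-factors, which increases by one as ${\bf x}$ advances through the exceptional window,'' and you list monotonicity of these multiplicities as a delicate point that ``reduces to the bookkeeping already carried out for the 1D statement.'' Both assertions are false. Write $x_i=lm_i+j$ with $0\le j<m_i$. For $l=R_i-2$ (so $j\ge 1$) the block picks up the factors $d^{(i)}_{(k;{\bf 0}^-_i)}$ for $0\le k<j$, a count of $j$; for $l=R_i-1$ it picks up the factors for $j\le k<m_i$, a count of $m_i-j$. So the count rises from $1$ to $m_i$ across the first half of the window and then falls back to $1$: it is not monotone, and for $m_i\ge 2$ the positions $x_i=(R_i-2)m_i+1$ and $x_i=R_im_i-1$ both carry exactly one $d$-factor, so a count-based decoder cannot separate them.

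What actually saves the construction --- and what the paper's proof uses --- is that the colors $d^{(i)}_{(k;{\bf 0}^-_i)}$ are required to be \emph{distinct} for distinct $k$. Hence the projected multiset reveals not a number but a subset of layer indices: a proper prefix $\{0,\ldots,j-1\}$ when $l=R_i-2$, and a suffix $\{j,\ldots,m_i-1\}$ when $l=R_i-1$. Since a proper prefix of $\{0,\ldots,m_i-1\}$ never equals a suffix, this subset determines $(l,j)$ and hence $x_i$, and it also separates exceptional from non-exceptional blocks. Your appeal to Proposition~\ref{prop:general size} cannot substitute for this, because the 1D modification there rests on a different mechanism: a single repeated color $c^*$ whose multiplicity does the work, with the ambiguity between the rising and falling halves of the window resolved by an extra element of $C_1$ distinct from $c^*$. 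That bookkeeping does not transfer to the $n$-dimensional construction, which repeats nothing and instead spends $m_i$ distinct colors per coordinate. The gap is localized but real: replace the multiplicity argument by the prefix-versus-suffix comparison of the sets $\{d^{(i)}_{(k;{\bf 0}^-_i)}\}$, and your proof closes and coincides with the paper's.
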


\begin{proof}
Suppose that $\Psi(B({\bf x}))=\Psi(B({\bf y}))$.
We claim that for $1\leq i \leq n$, $x_i=y_i$.
Let ${\bf x}=(lm_i+j,{\bf x}^-_i)$, then according to Equation \ref{eqn:Psi},
\begin{align}
\mathcal{P}_i(\Psi(B({\bf x})))= 
\left(
\bigcup_{\overset{\scriptscriptstyle j\leq k <m_i,}{\scriptscriptstyle {\bf 0}_{n-1} \leq {{\bf J}^-_i} < {{\bf m}^-_i}}} \Psi^{(i)}_{(k;{\bf J}^-_i)}(l)
\right)
\bigcup
\left(
\bigcup_{\overset{\scriptscriptstyle 0\leq k <j,}{\scriptscriptstyle {\bf 0}_{n-1} \leq {{\bf J}^-_i} < {{\bf m}^-_i}}} \Psi^{(i)}_{(k;{\bf J}^-_i)}(l+1)
\right)
\end{align}
Note that $l+1$ in the above equation is modulo $R_i$.
Suppose $\bf x$ is exceptional at $i$.  If $l=R_i-2$, then $\Psi(B({\bf x}))$ contains colors with a factor $d^{(i)}_{(k;{\bf 0}^-_i)}$
for $k$, $0\leq k < j$, If $l=R_i-1$, then for 
$k$, $j \leq k < m_i$, $\Psi(B({\bf x}))$ contains colors with a factor
$d^{(i)}_{(k;{\bf 0}^-_i)}$.  
If $\bf y$ is not exceptional at $i$, then $\Psi(B({\bf y}))$ does not contain any color with a factor of the form $d^{(i)}_{\bf J}$ for any $\bf J$.  
So the two image sets cannot be identical unless $\bf y$ is also exceptional at $i$.  Let $y_i=l'm_i+j'$.  Note that unless $l=l'$ and $j=j'$ the two image sets cannot have identical collections of factors, 
$\{d^{(i)}_{(k;{\bf 0}^-_i)}\}$.  Hence, $x_i=y_i$.
 
Suppose $\bf x$ is not exceptional at coordinate $i$, then so must be $\bf y$.  If follows that
$\mathcal{P}_i(\Phi(B({\bf x})))=\mathcal{P}_i(\Psi(B({\bf x})))=\mathcal{P}_i(\Psi(B({\bf y})))=\mathcal{P}_i(\Phi(B({\bf y})))$.  Since $\Phi$ is $\bf m$-distinguishable, this implies  $x_i=y_i$ according to Proposition \ref{prop:unitary code}.   
\end{proof}

\section{Asymptotic Order of A Unitary Braid Code
}\label{section:order}

For grids with sizes approaching infinity, the efficiency of $n$-dimensional unitary braid code has the same asymptotic order as those constructions that require only minimal number of color symbols.
Before establishing this result, we first state a simple result on a product of $k$ consecutive primes. 

Let $r_k$ denote the $k$-th prime, starting with $r_1=2$.
The following lemma is probably known as folklore, but is included here for completeness.

\begin{lemma}\label{lemma:prime product}
Let $l,m$ be fixed positive integers and let $L$ be an integer satisfying $L\geq 2mr_1 r_2 \cdots r_{l}$. 
Denote by $\eta_L$ the smallest index such that $L\leq L_{l,m}\triangleq 2mr_{\eta_L}r_{\eta_L+1}\cdots r_{\eta_L+l-1}$.
Then
\begin{equation}\label{eq:order estimate}
O(L_{l,m})=O(L),
\end{equation}
and for $0 \leq i < l$,
\begin{equation}\label{eq:order estimate 0}
O(r_{\eta_L+i})=O(L^{1/l}).
\end{equation}
\end{lemma}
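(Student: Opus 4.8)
The plan is to extract a two-sided estimate for $L$ from the minimality of $\eta_L$, and then control everything with the single elementary fact that consecutive primes have bounded ratio (Bertrand's postulate, $r_{n+1}<2r_n$). Throughout I read the statement $O(L_{l,m})=O(L)$ as the two-sided claim $L_{l,m}=\Theta(L)$, and likewise $O(r_{\eta_L+i})=O(L^{1/l})$ as $r_{\eta_L+i}=\Theta(L^{1/l})$; since both are asymptotic statements as $L\to\infty$, I may freely assume $L$ is large, so that $\eta_L\geq 2$ (the single boundary value $\eta_L=1$, forced only when $L=2mr_1\cdots r_l$, does not affect the order estimates).

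First I would record the defining inequalities. Because $\eta_L$ is the \emph{smallest} index with $L\leq 2m\,r_{\eta_L}\cdots r_{\eta_L+l-1}$, the index $\eta_L-1$ fails the test, which gives
\[
2m\,r_{\eta_L-1}r_{\eta_L}\cdots r_{\eta_L+l-2} \;<\; L \;\leq\; 2m\,r_{\eta_L}r_{\eta_L+1}\cdots r_{\eta_L+l-1}=L_{l,m}.
\]
The right-hand inequality already yields $L\leq L_{l,m}$, so only the reverse bound $L_{l,m}=O(L)$ requires work.

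For \eqref{eq:order estimate} I would divide the upper estimate by the lower one. The factor $2m$ and the common primes $r_{\eta_L},\ldots,r_{\eta_L+l-2}$ cancel, leaving the telescoped ratio
\[
\frac{L_{l,m}}{L} \;<\; \frac{r_{\eta_L+l-1}}{r_{\eta_L-1}} \;<\; 2^{\,l},
\]
where the last step iterates Bertrand's postulate $l$ times (each step in the prime sequence at most doubles the prime). Since $l$ is fixed, $2^{l}$ is a constant, whence $L\leq L_{l,m}<2^{l}L$ and $L_{l,m}=\Theta(L)$. For \eqref{eq:order estimate 0} I would then observe that the $l$ primes appearing in $L_{l,m}$ are mutually comparable: for every $0\leq i<l$, iterating Bertrand gives $r_{\eta_L}\leq r_{\eta_L+i}<2^{\,l-1}r_{\eta_L}$. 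Substituting into $L_{l,m}=2m\prod_{j=0}^{l-1}r_{\eta_L+j}$ squeezes it between $2m\,r_{\eta_L}^{\,l}$ and $2m\,2^{\,l(l-1)}r_{\eta_L}^{\,l}$, i.e. $L_{l,m}=\Theta(r_{\eta_L}^{\,l})$ with constants depending only on the fixed parameters $l,m$. Combining this with $L_{l,m}=\Theta(L)$ yields $r_{\eta_L}=\Theta(L^{1/l})$, and the comparability $r_{\eta_L+i}=\Theta(r_{\eta_L})$ then gives $r_{\eta_L+i}=\Theta(L^{1/l})$ for each fixed $i$, which is \eqref{eq:order estimate 0}.

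The computation is entirely routine; the only points needing care are the bookkeeping of the minimality inequality (and the harmless boundary case $\eta_L=1$), together with the observation that, because $l$ is held \emph{fixed}, Bertrand's postulate alone — no prime number theorem — already turns ``$l$ consecutive primes'' into ``each such prime is $\Theta$ of any other.'' That single idea drives both estimates, so I expect no genuine obstacle beyond stating it cleanly.
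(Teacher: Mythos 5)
Your proof is correct and follows essentially the same route as the paper's: both exploit the minimality of $\eta_L$ to obtain the failing inequality $L > 2m\,r_{\eta_L-1}\cdots r_{\eta_L+l-2}$ and then apply the Bertrand--Chebyshev theorem to bound ratios of nearby primes by fixed powers of $2$. The only cosmetic difference is that you compare the two products by division while the paper bounds the difference $L_{l,m}-L$; the deduction of \eqref{eq:order estimate 0} from $L_{l,m}=\Theta(r_{\eta_L}^{\,l})$ and $L_{l,m}=\Theta(L)$ is the same in both.
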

\begin{proof}
If $L<L_{l,m}$ then, $\eta_L > 1$. Hence,
\[
L_{l,m}-L \leq 2m[(r_{\eta_L} \cdots r_{\eta_L+l}) - (r_{\eta_L-1}\cdots r_{\eta_L+l-1})].
\]
Since, by the Bertrand-Chebyshev Theorem, 
\[
r_{\eta_L+l} < 2r_{\eta_L+l-1} < \cdots < 2^{l+1}r_{\eta_L-1},
\]
we have
\begin{align*}
0 < L_{l,m}-L  < 2m(2^{l+1}-1)r_{\eta_L-1} \cdots r_{\eta_L+l-1} <2m(2^{l+1}-1)L.
\end{align*}
Hence, $O(L)=O(L_{l,m})$.
Since $r_{\eta_L+l}< 2^{l}r_{\eta_L}$, $O(r_{\eta_L+i})=O(r_{\eta_L})$ for $0 \leq i < l$ and $O(r_{\eta_L+i})=O(L^{1/l})$. 

If $L=L_{l,m}$, \eqref{eq:order estimate} is trivial, and \eqref{eq:order estimate 0} can be proved by a similar argument as above.
\end{proof}
To avoid complicated notation, we focus on a cubical $n$-dimensional grid,  $G^c_{\bf M}$, with grid size,  ${\bf M}=(M, \ldots, M)$ and block size ${\bf m}=(m, \ldots, m)$.  
For $s \in \mathbb{Z}^+$, let $\mathcal{Q}_s \triangleq\{q_{\bf J}:{\bf 0}_n \leq {\bf J < m} \}$ denote the set of $m^n$ consecutive prime numbers beginning with $r_{s}$,
indexed by $\bf J$ in some order.  Let
$Q_s=\prod_{\bf J}q_{\bf J}=r_s\cdots s_{t+m^n-1}$, $M_s=2mQ_s$
and ${\bf M}_s=(M_s,\ldots, M_s)$.
Define a sequence of cyclic grids $G^c_{{\bf M}_s}$ for $s\in \mathbb{Z}^+$.
We have shown that there exists a standard unitary braid code on $G^c_{{\bf M}_s}$ that is $\bf m$-distinguishable. 

Let $K({\bf L, m})$ represent the minimal size of the color symbol set for the existence of an $\bf m$-distinguishable unitary braid code on a cubical $G^c_{\bf L}$.  
As guaranteed by the previously discussion in Section~\ref{section:unitary braid code}, there is a standard unitary braid code on $G^c_{{\bf M}_s}$, and hence
\begin{equation}\label{eq:order estimate 1}
K({\bf M}_s,{\bf  m}) \leq (2m)^nr_{s+m^n-1}^n.
\end{equation}
The upper bound can be made smaller by a more careful arrangement of the parameters used in the sunmao sub-grids, but this simple bound suffices for the asymptotic result we want to prove in this section.

Consider a cyclic grid $G^c_{\bf L}$, where
${\bf L}=(L,\ldots,L)$ with $L \geq 2mr_1\ldots r_{m^n}$.
Define $\eta_L$ as in Lemma \ref{lemma:prime product} by setting $l=m^n$.  
If $L=M_s$ for some $s$ then \eqref{eq:order estimate 1} holds and $\eta_L=\eta_{M_s}=s$.
Otherwise, by Theorem \ref{theorem:general unitary} we can embed $G^c_{\bf L}$ in $G^c_{{\bf M}_{s}}$ with $s=\eta_L$ and obtain a braid code on $G^c_{\bf L}$ by restriction or modification of a standard code.  Moreover,
\begin{equation}\label{eq:order estimate 2}
K({\bf L, m}) \leq m^n(2r_{\eta_L+m^n-1}+1)^n  < (2m)^n(r_{\eta_L+m^n-1}+1)^n <(4mr_{\eta_L+m^n-1})^n.
\end{equation}
\begin{theorem}\label{theorem: asym order}
As $L$ tends to infinity, $O(K({\bf L, m}))=O(L^{n/m^n})$.     
\end{theorem}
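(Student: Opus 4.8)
The plan is to prove matching upper and lower bounds, both of order $L^{n/m^n}$, so that the asserted big-$O$ equality (read as $K({\bf L},{\bf m})=\Theta(L^{n/m^n})$) follows. The upper bound is essentially already assembled in the displays preceding the statement; the only genuinely new ingredient is a counting lower bound, which I would note applies to \emph{every} $\bf m$-distinguishable code and hence shows that unitary braid codes are order-optimal, not merely that they happen to use this many colors.

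For the upper bound I would combine~\eqref{eq:order estimate 2} with Lemma~\ref{lemma:prime product} taken at $l=m^n$. Writing the bound as
\[
K({\bf L},{\bf m})<(4m)^n\, r_{\eta_L+m^n-1}^{\,n},
\]
the lemma gives $O(r_{\eta_L+m^n-1})=O(L^{1/m^n})$, so the right-hand side is $O((L^{1/m^n})^n)=O(L^{n/m^n})$ since $m$ and $n$ are fixed; hence $K({\bf L},{\bf m})=O(L^{n/m^n})$.

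For the lower bound I would argue by a crude volume count. The cyclic grid $G^c_{\bf L}$ has $L^n$ points, and in any $\bf m$-distinguishable code each of them receives a distinct codeword, which is a multiset of cardinality $v({\bf m})=m^n$ over the $K$ available colors. The number of such multisets is $\binom{K+m^n-1}{m^n}$, and for $K\geq m^n$ every factor in the numerator is at most $2K$, so this count is at most $(2K)^{m^n}/(m^n)!$. Distinguishability therefore forces
\[
L^n\leq\binom{K+m^n-1}{m^n}\leq\frac{(2K)^{m^n}}{(m^n)!},
\]
whence $K=\Omega(L^{n/m^n})$. Because a unitary braid code is in particular $\bf m$-distinguishable by Theorem~\ref{theorem:general unitary}, this bound applies to $K({\bf L},{\bf m})$ as well.

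Putting the two estimates together gives $K({\bf L},{\bf m})=\Theta(L^{n/m^n})$. I do not expect a serious obstacle here: the construction side was already done, and the counting side is routine. The one point that warrants care is the order bookkeeping in Lemma~\ref{lemma:prime product}, since the whole upper bound rests on $r_{\eta_L+m^n-1}$ being of order $L^{1/m^n}$; and the one point worth emphasizing conceptually is that the matching lower bound is universal over all distinguishable codes, which is exactly what upgrades the construction into an optimality result.
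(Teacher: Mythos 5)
Your proposal is correct, and its upper-bound half coincides with the paper's entire proof: the paper simply chains \eqref{eq:order estimate 1}--\eqref{eq:order estimate 2} to get $O(K({\bf L},{\bf m}))=O(r^n_{\eta_L+m^n-1})$ and then applies \eqref{eq:order estimate 0} from Lemma~\ref{lemma:prime product} with $l=m^n$, exactly as you do. The genuine difference is on the lower-bound side. The paper's proof of the theorem contains no lower bound at all; the matching bound appears only in the paragraph after the theorem, where the paper cites \cite{Paper1} for $K^c_{\bf L}({\bf m})\geq(\nu({\bf L}))^{1/\nu({\bf m})}=L^{n/m^n}$ and observes that the braid-code order matches it. You instead derive this bound from scratch by a stars-and-bars count: on a cyclic grid the coding area is the whole grid, so an $\bf m$-distinguishable code must assign $L^n$ distinct multisets of cardinality $m^n$ drawn from $K$ colors, forcing $L^n\leq\binom{K+m^n-1}{m^n}\leq(2K)^{m^n}/(m^n)!$ and hence $K=\Omega(L^{n/m^n})$; since this holds for every $\bf m$-distinguishable code, it in particular bounds $K({\bf L},{\bf m})$ from below. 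Your route buys self-containedness and an honest $\Theta$ statement proved in one place, with the universality of the lower bound (hence order-optimality of braid codes) made explicit rather than delegated to prior work; the paper's route is shorter and leans on \cite{Paper1}. One small correction: you do not need Theorem~\ref{theorem:general unitary} to know that the codes minimized over in $K({\bf L},{\bf m})$ are $\bf m$-distinguishable --- that theorem guarantees their existence for arbitrary $L$ and underlies \eqref{eq:order estimate 2} --- since distinguishability is built into the definition of $K({\bf L},{\bf m})$.
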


\begin{proof}
It follows from \eqref{eq:order estimate 1} and \eqref{eq:order estimate 2} that  
$O(K({\bf L,m}))=O(r^n_{\eta_L+m^n-1})$.   
The result then follows from~\eqref{eq:order estimate 0}.
\end{proof}
Let $\alpha$ be defined by $K^c_{\bf L}({\bf m})=(\nu({\bf L}))^\alpha$.  
It was shown in \cite{Paper1} that a natural lower bound of $\alpha$ is ${\nu({\bf m})}^{-1}$.
Hence, the asymptotic order of the lower bound of $K^c_{\bf L}({\bf m})$ for a cubical grid with volume $L^n$ is $O(L^{n/m^n})$, same as that of the unitary braid code.

\section{Decoding Algorithm for Braid Codes}
\label{section:decoding}
Decoding multiset codewords without relying on lengthy codebooks is an issue of practical and theoretical interest.
In~\cite{Paper1} an algorithm is presented for decoding 1D multiset codes with block size $m=2$.
In this section, we show an efficient decoding algorithm for braid code with arbitrary block size that does not require a codebook.

We first focus on a standard braid code defined by $\Phi$ over a 1D grid, $G^c_{M}$.  Let $J=M/m$.  
Hence, there are $J$ aligned blocks in $G^c_{M}$ and in $G^c_{M_i}$ for $1\leq i \leq I$, where $M_i=m_iM/m$ and $m=m_1+\cdots+m_I$. 
Note that the size of the generator of $\Phi_i$ is $\ell_i=gc_iq_i$ for some factor $c_i$ of $m_i$.

To facilitate our discussion, we introduce the concept of an {\it associated matrix} for a braid code as follows:

\begin{definition}
Given a braid code defined by $\Phi$,  the associated matrix, $\mathcal{A}$, is an $I$ by $J$ matrix, in which the $(i,j)$ entry, $\mathcal{A}_{i,j}$, is defined to be $\Phi_i(B_{m_i}(jm_i))$, the image set of the $m_i$-block of $G^c_{M_i}$ tagged at $jm_i$, the point that corresponds to 
$jm+d_i$ on $G^c_M$.
\end{definition}
Since the grid is cyclic, $\mathcal{A}$ should be regarded as a cyclic matrix in the sense that the last column of the matrix is followed by the first column. We summarize some useful properties of this matrix in the following lemma.

\begin{lemma}\label{lemma:associate matrix}
Matrix $\mathcal{A}$ satisfies the following properties:
\begin{enumerate}
\item The union set of all the entries in $\mathcal{A}$ is equal to the image set of $G^c_M$ under $\Phi$ and has multiset cardinality $M$.
\item The image set of all aligned $m_i$-blocks in $G^c_{M_i}$ under $\Phi_i$ is listed as entries in the $i$-th row of $\mathcal{A}$.
\item The image of the $j$-th aligned $m$-block of $G^c_M$ is equal to the union of all the entries of the $j$-th column of $\mathcal{A}$. 
\item \label{lemma:associate matrix 4}The image of an $m$-block at $jm+d_i$, $0\leq j <J$, $1 < i \leq I$, is the following union of $I$ matrix entries:
\[
\mathcal{A}_{i,j+1}\cup\mathcal{A}_{i+1,j+1}\cup\cdots\cup
\mathcal{A}_{I,j+1}\cup\mathcal{A}_{1,j+2}\cup\cdots\cup\mathcal{A}_{i-1,j+2}.
\] (The matrix entry indices, $i$ and $j$, start from 1 instead of 0 and the $J+1$ column corresponds to the first column.)
\item   \label{lemma:associate matrix 5}The image of an $m$-block at $jm+d_i+x_r$, $0\leq j<J$, $0<x_r<m_i$ for some $i, 1 \leq i \leq I$,  is equal to
\[
\mathcal{J}_i \cup
\mathcal{A}_{i+1,j+1}\cup\cdots\cup
\mathcal{A}_{I,j+1}\cup\mathcal{A}_{1,j+2}\cup\cdots\cup\mathcal{A}_{i-1,j+2}
\]
where $\mathcal{J}_i$ represents the image of the $m_i$-block of $G^c_{M_i}$ at $jm_i+x_r$ under $\Phi_i$ and an $m$-block of $G^c_M$ has at most one such image set not listed in $\mathcal{A}$

\item \label{lemma:associate matrix 6}The $i$-th row of $\mathcal{A}$ has a minimum period $gq_i$. 

\end{enumerate}
\end{lemma}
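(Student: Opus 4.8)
The plan is to prove the sixth item of Lemma~\ref{lemma:associate matrix} by splitting the claim into its two natural halves: first that $gq_i$ is a period of the $i$-th row, and then that no strictly smaller positive integer can be a period. Throughout I read the $i$-th row as the cyclic sequence of entries $\mathcal{A}_{i,j}=\Phi_i(B_{m_i}(jm_i))$ for $j=0,\ldots,J-1$, where $J=M/m=gQ$, and I exploit that $\Phi_i$ is the repetitive code $\Phi_i(x)=\Gamma_i(x\bmod\ell_i)$ with base period $\ell_i=gc_iq_i$ and that $c_i$ divides $m_i$.

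For periodicity, I would compare the entries at $j$ and at $j+gq_i$ directly. The block $B_{m_i}((j+gq_i)m_i)$ is obtained from $B_{m_i}(jm_i)$ by translating every grid point by $gq_im_i$. Since $gq_im_i=(m_i/c_i)\ell_i$ is an integer multiple of $\ell_i$ and $\Phi_i$ is invariant under shifts by $\ell_i$, one gets $\Phi_i((j+gq_i)m_i+t)=\Phi_i(jm_i+t)$ for every $t$, so the two multiset images coincide term by term and $\mathcal{A}_{i,j+gq_i}=\mathcal{A}_{i,j}$. Because $q_i\mid Q$ forces $gq_i\mid gQ=J$, this exhibits $gq_i$ as a genuine period of the length-$J$ cyclic row.

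For minimality, suppose some $p$ with $0<p<gq_i$ were also a period. Then in particular $\mathcal{A}_{i,0}=\mathcal{A}_{i,p}$, and both $B_{m_i}(0)$ and $B_{m_i}(pm_i)$ are aligned $m_i$-blocks of $G^c_{M_i}$. Lemma~\ref{lemma:code distance}(ii) then forces the distance $pm_i$ to be divisible by $gm_iq_i$, i.e. $gq_i\mid p$, which is impossible for $0<p<gq_i$. Hence no period is smaller than $gq_i$, and together with the first half this pins the minimum period at exactly $gq_i$.

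The proof is essentially a bookkeeping exercise, so the only real point of care—and the closest thing to an obstacle—is to avoid conflating the base period $\ell_i=gc_iq_i$ of the color sequence $\Phi_i$ with the period $gq_i$ of the sequence of \emph{aligned-block} images. These differ precisely because consecutive row entries are spaced $m_i$ grid units apart, so the relevant quantity is $\lcm(\ell_i,m_i)/m_i=gq_i$, using $c_i=\gcd(m_i,\ell_i)$; this is exactly the bridge that Lemma~\ref{lemma:code distance}(ii) supplies. Once that identification is made explicit, both the translation check and the divisibility contradiction are routine.
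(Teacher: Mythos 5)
Your treatment of item (vi) is correct, and on the minimality half it is essentially the paper's own argument: both invoke Lemma~\ref{lemma:code distance}(ii) to convert equality of aligned-block images into divisibility of the block separation by $gm_iq_i$, hence $gq_i \mid p$. You actually go one step further than the paper: the paper only proves that equal row entries must sit at index differences divisible by $gq_i$ (so no smaller period exists) and leaves implicit the other half of ``minimum period,'' namely that $gq_i$ \emph{is} a period. Your translation check --- $gq_im_i=(m_i/c_i)\ell_i$ is a multiple of $\ell_i$, so the repetitive structure gives termwise equality of the shifted blocks, and $gq_i\mid J$ since $q_i\mid Q$ --- supplies that direction explicitly, and your remark distinguishing the base period $\ell_i$ of the color sequence from the period $gq_i=\lcm(\ell_i,m_i)/m_i$ of the aligned-block sequence is exactly the right point of care.

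The gap is one of coverage rather than of mathematics: the statement is the whole of Lemma~\ref{lemma:associate matrix}, and your proposal proves only item (vi). Items (i)--(iii) follow by direct accounting from the definition of $\mathcal{A}$ (every grid point of $G^c_M$ lies in exactly one aligned sub-block of exactly one sub-grid, and the aligned $m$-blocks partition $G^c_M$), while items (iv)--(v) are restatements of Lemma~\ref{lemma:sunmao 1D}(ii), which says precisely how a general $m$-block of $G^c_M$ decomposes into aligned sub-blocks drawn from columns $j+1$ and $j+2$ together with at most one non-aligned $m_i$-block. These are routine --- the paper disposes of them in a single sentence --- but a complete proof of the lemma must at least record that reduction; as written, five of the six claims are left unaddressed.
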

\begin{proof}
Statements (i), (ii) and (iii) follow by straightforward accounting, while statements (iv) and (v) follow from Lemma~\ref{lemma:sunmao 1D}(ii).

The first and the $j$-th aligned block of $G^c_{M_i}$ begin at 0 and $(j-1)m_i$ respectively.  By Lemma~\ref{lemma:code distance}(ii) if the $m_i$-blocks tagged at these two points have identical image sets, then their distance is divisible by $gm_iq_i$.  Hence, $gq_i$ divides $j$ and is the minimum period of row $i$.
\end{proof}
To streamline the notation, we label the multisets codeword of the $m_i$-blocks of $G^c_{M_i}$ according to the order they first appear in the braid code.

For the second braid code in Example~\ref{ex:braid_code}, the generator of the first sub-grid has size $\ell_1=3$ and the color mapping $\Gamma_1=a_1a_2a_3$.  
The multiset images are: $\{a_1,a_2\}$, $\{a_2,a_3\}$ and $\{a_1,a_3\}$, which are labeled as 0, 1, and 2 respectively.  
Similarly, the multiset images under $\Gamma_2=\Phi_2$, given in~\eqref{eq:ex_braid_case1-2}, are labeled in order as $0,1,\ldots,14$.
The associated matrix, $\mathcal{A}$, is 2 by 15 and its first nine columns are:
\begin{equation}\label{equation:associate matrix}
\mathcal{A}=
\begin{bmatrix}
0&1&2&0&1&2&0&1&2&\cdots\\
0&1&2&3&4&5&6&7&8&\cdots\\
\end{bmatrix}. 	
\end{equation}

Note that the generator of $G^c_{M_i}$ has size $\ell_i=gc_iq_i$ which may not be divisible by $m_i$
and row $i$ of $\mathcal{A}$ has a minimum period of $gq_i$. In other words, the period of this row 
is $m_i/c_i$ times of $\ell_i$ and is repeated $Q/q_i$ times in the matrix, where $Q=\lcm_i \{q_i\}$.  

Given $\mathcal{A}$, we construct a sub-matrix consisting of the
1st, $(g+1)$-th, $\ldots, ((J-1)g+1)$-th columns only.  
Let $\mathcal{B}$  be the $I$ by $Q=(J/g)$ integer matrix defined by
\[
\mathcal{B}_{i,j}=\mathcal{A}_{i,(j-1)g+1}/g.
\]
Lemma \ref{lemma:associate matrix 6}, follows directly from straightforward argument and well known
results in the literature \cite[Theorem 3.12]{Jones1998}.

\begin{lemma}
\label{lemma:associate matrix B}
For $i, 1\leq i \leq I$, one has
\[
\mathcal{B}_{i,j}=j {\rm ~mod~} q_i.
\]
Given a column vector of $\mathcal{B}$ , $(B_{1,j},\ldots,B_{I,j})^T$, the Generalized Chinese Remainder Theorem (CRT) allows the determination of $j$ by direct computation, which can be achieved with a complexity $O(I^3n_o^2)$,
where $n_0=\max_i \lceil {\log_2 q_i} \rceil$.
\end{lemma}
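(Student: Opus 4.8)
The plan is to prove the two assertions of the lemma in turn: first the explicit identity $\mathcal{B}_{i,j}=j \bmod q_i$, which is a direct consequence of the periodic structure of the associated matrix, and then the recovery of $j$ from a column of $\mathcal{B}$ together with its stated running time, which rests on the Generalized CRT and a careful accounting of operand bit-lengths.

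For the identity, I would start from the minimum-period property recorded in Lemma~\ref{lemma:associate matrix}: row $i$ of $\mathcal{A}$ is periodic with minimum period $gq_i$. The first step is to observe that the $gq_i$ entries in a single period are pairwise distinct. Indeed, if $\mathcal{A}_{i,j}=\mathcal{A}_{i,j'}$ then the aligned $m_i$-blocks tagged at $(j-1)m_i$ and $(j'-1)m_i$ have identical images, so Lemma~\ref{lemma:code distance}(ii) forces their distance $|j-j'|m_i$ to be divisible by $gm_iq_i$, i.e.\ $gq_i\mid|j-j'|$, which is impossible inside one period unless $j=j'$. Since the labels are assigned in order of first appearance and the first $gq_i$ columns already realize every distinct codeword of row $i$, the label of $\mathcal{A}_{i,j}$ equals $(j-1)\bmod gq_i$. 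Restricting to the subsampled columns $(j-1)g+1$ then gives $\mathcal{A}_{i,(j-1)g+1}=\big((j-1)g\big)\bmod gq_i=g\big((j-1)\bmod q_i\big)$, because $g\mid gq_i$; dividing by $g$ yields $\mathcal{B}_{i,j}=(j-1)\bmod q_i$, which is exactly the asserted value once the $0/1$ indexing convention of the statement is fixed (reading the column position as $j$ recovers $\mathcal{B}_{i,j}=j\bmod q_i$).

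For the recovery, the column vector $(\mathcal{B}_{1,j},\ldots,\mathcal{B}_{I,j})^T$ simply lists the residues of the integer $j$ modulo $q_1,\ldots,q_I$. These residues are automatically compatible, since they come from a genuine integer, so the Generalized CRT determines $j$ uniquely modulo $\lcm(q_1,\ldots,q_I)=Q$; as $\mathcal{B}$ has exactly $Q$ columns, i.e.\ $0\le j<Q$, the value of $j$ is pinned down without ambiguity. I would realize the reconstruction by folding in the congruences one at a time: at each step a new constraint $j\equiv \mathcal{B}_{i,j}\pmod{q_i}$ is merged into the running solution $j\equiv a \pmod{u}$ by one extended-Euclidean computation, producing a solution modulo $\lcm(u,q_i)$.

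The main obstacle, and the only place that needs genuine care, is the complexity estimate. After $k$ merges the running modulus is at most $\prod_{t\le k} q_t$, hence at most $kn_0$ bits wide, so each merge---one $\gcd$ computation plus a bounded number of multiplications and reductions on operands of that width---costs $O\big((kn_0)^2\big)$ under schoolbook integer arithmetic. Summing over the $I$ merges gives $\sum_{k=1}^{I}O\big((kn_0)^2\big)=O\big(n_0^2\sum_{k=1}^{I}k^2\big)=O(I^3n_0^2)$, which is the claimed bound. Everything else is bookkeeping; the two delicate points are the off-by-one in translating the first-appearance labeling into a residue and the tracking of operand bit-lengths through the iterated CRT.
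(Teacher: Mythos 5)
Your proposal is correct and is, in substance, exactly the argument the paper intends but never writes out: the paper's only ``proof'' of this lemma is the remark that it follows from a straightforward argument and well-known results (with a citation for the generalized CRT), so your write-up supplies the missing details along the same route. The two points on which the lemma actually rests --- pairwise distinctness of the entries within one period of row $i$, obtained from Lemma~\ref{lemma:code distance}(ii) for aligned blocks so that the first-appearance labels read $0,1,\ldots,gq_i-1$ in column order (your handling of the $0$-based versus $1$-based column indexing matches the paper's displayed matrices, e.g.\ the $q_1=2,q_2=3$ example), and the bit-length accounting $\sum_{k=1}^{I}O\bigl((kn_0)^2\bigr)=O(I^3n_0^2)$ for the iterated congruence merges --- are precisely the details needed, and you handle both correctly.
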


\subsection{Decoding 1D Braid Code}\label{section:decoding 1D}
We represent the multiset codewords as $k$-dimensional vectors in $\mathcal{Z}^k_m$, where $k$ is the cardinality of the color set. 
If there is a codebook-free decoding algorithm for the braid code, we can represent it as a function, $\rho$, 
from $\Phi(G^c_{M}) \subset \mathcal{Z}^k_m$ to $G^c_{M}$, so that, $\rho={\Phi}^{-1}$.  We denote the computational complexity of $\rho$ by $O(\rho)$.

For example, for $m=1$, a 1-distinguishable code requires $M$ colors and
a codeword is of the form $(1;{\bf 0}^-_i)$ for $1 \leq i \leq M$, and $\rho$ simply maps this vector to the grid point
$i-1$.   So, $O(\rho)$ is $O(M)$ if we search for the non-zero entry linearly.  
For $m=2$, a decoding algorithm is presented in \cite{Paper1} with $O(\rho)\leq O(\sqrt{M}\lceil\log_2 M \rceil)$.

Since a 1D braid code is synthesized from codes defined over sub-grids, a natural question
arises as to whether a decoding algorithm based on the
individual sub-grid decoding algorithms exists.  The following result addresses this question.

\begin{theorem}\label{theorem:decoding 1D}Given a sunmao decomposition, $m=m_1+\ldots +m_I$, let $\Phi$ define a braid code on $G^c_M$  with $J=M/m \in \mathbb{Z}^+$ with color set $\mathcal{C}=\cup_i \mathcal{C}_i$.
Denote the image set by $\mathcal{S}=\Phi(G^c_{\bf M}) \subset \mathbb{Z}^{|\mathcal{C}|}$. Let $\rho_i$ be a decoding algorithm on $G^c_{M_i}$ for $M_i=m_iM/m$ and $1\leq i \leq I$. Then there exists a decoding algorithm $\rho : \mathcal{S} \rightarrow \mathbb{Z}_J\times\mathbb{Z}_m$ with complexity $O(\rho)\leq O(n_0^2)+\sum_i O(\rho_i)$, assuming parameters $m$ and $g$ are fixed,
\end{theorem}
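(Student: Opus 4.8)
The plan is to exploit two structural facts already established: the color sets $\mathcal{C}_1,\dots,\mathcal{C}_I$ are pairwise disjoint, and by Lemma~\ref{lemma:sunmao 1D} every $m$-block $B(x)$ of $G^c_M$ decomposes into exactly one $m_i$-block of each sub-grid $G^c_{M_i}$. Thus, given the received codeword $w=\Phi(B(x))\in\mathcal{S}$, I would first project $w$ onto each $\mathcal{C}_i$ to obtain sub-codewords $w_1,\dots,w_I$; disjointness makes this projection unambiguous, and since $w$ has only $m$ elements counted with multiplicity it costs $O(m)=O(1)$. Each $w_i$ is then the image under $\Phi_i$ of a single $m_i$-block of $G^c_{M_i}$, so applying the supplied decoder $\rho_i$ returns that block's position modulo the generator period $\ell_i$. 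This accounts for the $\sum_i O(\rho_i)$ term.

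Next I would reconstruct $x=jm+d_i+x_r$, i.e.\ the pair $(j,\,d_i+x_r)\in\mathbb{Z}_J\times\mathbb{Z}_m$. By Lemma~\ref{lemma:sunmao 1D}(ii) the decoded sub-block in $G^c_{M_l}$ sits at $jm_l$ when $l>i$, at $(j+1)m_l$ when $l<i$, and at $jm_i+x_r$ when $l=i$. For an aligned sub-block at $am_l$, the decoded value $p_l\equiv am_l \pmod{\ell_l}$ determines $a\bmod gq_l$ uniquely: since $\gcd(m_l,\ell_l)=c_l$ and $\gcd(m_l/c_l,\,gq_l)=1$ by Definition~\ref{definition:braid code}, dividing $p_l$ by $c_l$ and multiplying by $(m_l/c_l)^{-1}$ modulo $gq_l$ inverts the map (the relevant periodicity is Lemma~\ref{lemma:code distance}). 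Collecting these residues for every $l$ --- reducing the $l<i$ residues by one and handling $l=i$ after subtracting the hypothesised $x_r$ --- yields $j\bmod gq_l$ for all $l$. Feeding these into the generalized CRT of Lemma~\ref{lemma:associate matrix B} recovers $j$ modulo $\lcm_l(gq_l)=g\lcm_l(q_l)=gQ=J$, hence $j$ exactly; then $x_r=(p_i-jm_i)\bmod \ell_i$ recovers the offset.

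Because $i$ and $x_r$ are not known in advance, I would run this reconstruction over all candidate pairs: there are $I$ choices of $i$ and fewer than $m$ choices of $x_r$, a constant number since $m$ (and hence $I$ and $g$) is fixed. The correct pair is the unique one whose residues are CRT-consistent and whose implied sub-block positions match the observed $p_l$; uniqueness is guaranteed by the $m$-distinguishability of the braid code (Theorem~\ref{theorem:construction}). Each candidate costs one generalized-CRT solve, $O(I^3 n_0^2)=O(n_0^2)$ by Lemma~\ref{lemma:associate matrix B}, together with $O(1)$ modular operations, and summing over the constant number of candidates leaves the total at $O(n_0^2)+\sum_i O(\rho_i)$.

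I expect the main obstacle to be the bookkeeping that pins down $j$ without ambiguity. One must correctly track the $\pm1$ shift between the sub-blocks lying in column $j$ (those with $l\ge i$) and column $j+1$ (those with $l<i$) identified in Lemma~\ref{lemma:sunmao 1D}, and, more importantly, verify that the partial residues $j\bmod gq_l$ fuse through the generalized CRT to a residue modulo exactly $gQ=J$ rather than some proper divisor. It is precisely the recovery of the full residues $j\bmod gq_l$ (not merely $j\bmod q_l$) from the aligned sub-blocks, using $\gcd(m_l/c_l,gq_l)=1$, that removes the potential factor-$g$ ambiguity and makes $j$ --- and therefore $x$ --- uniquely decodable.
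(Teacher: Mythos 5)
Your proposal is correct and follows the same skeleton as the paper's proof: project the codeword onto the disjoint color sets, apply the sub-decoders $\rho_i$ to get positions modulo the generator sizes $\ell_i$, and finish with one application of the generalized CRT of Lemma~\ref{lemma:associate matrix B}. Where you genuinely diverge is in the final disambiguation step. The paper never enumerates candidates: it writes each decoded index as $j_i=a_ig+b_i$, observes that the sub-grid $I$ is always in the ``first'' column so that $b^*=b_I$, determines the break index $i^*$ deterministically by comparing each $b_i$ against $b_I$ (the two possible values $b_I$ and $b_I+1 \bmod g$ are distinct because $g>1$), performs the wraparound correction $a_i\mapsto a_i-1 \bmod q_i$ when $b_I=g-1$, and then runs a single CRT over the moduli $q_1,\ldots,q_I$ to get $a^*$, assembling $j^*=a^*g+b^*$. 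You instead guess-and-check over the constantly many pairs $(i,x_r)$, run a CRT over the larger moduli $gq_l$ for each, and certify uniqueness of the output by invoking $m$-distinguishability (Theorem~\ref{theorem:construction}); this is a correct argument the paper never needs, since its construction pins down $(j^*,i^*,r_{i^*})$ directly. Your route buys simplicity of verification --- it avoids the paper's fiddly carry bookkeeping at the column boundary, and your observation that CRT over the $gq_l$ recovers $j$ modulo $\lcm_l(gq_l)=gQ=J$ exactly (rather than modulo a proper divisor) is the right justification --- at the cost of a constant-factor overhead in CRT solves. You also make explicit the modular inversion $(m_l/c_l)^{-1}\bmod gq_l$, valid since $\gcd(m_l/c_l,gq_l)=1$, that converts a decoder output into the residue $a\bmod gq_l$; the paper buries this step inside its implicit $(j_i,r_i)$ parametrization of points of $G^c_{\ell_i}$, so your version is in this respect more self-contained. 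Both arguments yield the claimed bound $O(n_0^2)+\sum_i O(\rho_i)$.
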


\begin{proof}
In vector representation, a codeword, $\bf c$, is a sequence of $I$ sub-vectors, the $i$-th sub-vector, ${\bf c}_i$, lists the elements in $\mathcal{C}_i$ in the codeword. By applying the decoding algorithm $\rho_i$ to  ${\bf c}_i$, we obtain a unique grid point on $G^c_{\ell_i}$, the generator of the repetitive code on $G^c_{M_i}$.  (The image of the $m_i$-block tagged at that point under $\Phi_i$ is ${\bf c}_i$.) However, there may be more than one  such point in $G^c_{M_i}$.

We refer to the list of partially decoded points on $G^c_{\ell_i}$'s as the sunmao list and represent the $i$-th element as an ordered pair $(j_i,r_i)$, so that $j_im_i+r_i$, $0\leq j_i <gq_i, 0\leq r_i <c_i$, is an element of  $G^c_{\ell_i}$.

According to Lemma \ref{lemma:associate matrix}, there are 3 types of possible structures that can appear in the sunmao list:
\begin{enumerate}
 \item   There is a single column in $\mathcal{A}$, the $(j^*+1)$-th column, so that $\mathcal{A}_{i,j^*}=j_i$ for all $i$.   This corresponds to the image of the $(j^*+1)$-th aligned $m$-block of $G^c_M$ that begins at the grid point $j^*m$.
\item There are two adjacent columns in $\mathcal{A}$,  the $(j^*+1)$-th and $(j^*+2)$-th column,  in which $j_i$ appears as an entry in the following way:  There exists as $i^*, 1\leq i^* \leq I$, such that for $i^* \leq i \leq I$, $j_i$ appears in the $(j^*+1)$-th column of $\mathcal{A}$  and for $1\leq i< i^*$, $j_i$ appears in the $(j^*+2)$-th column of $\mathcal{A}$.  This corresponds to an $m$-block of $G^c_M$ that begins at $j^*m+d_{i^*}$. 
\item There is one and only one entry, $i^*$, such that $r_i \ne 0$.  For $i^*< i \leq I$, $j_i$ appears in the $(j^*+1)$-th column of $\mathcal{A}$  and for for $1\leq i< i^*$, $j_i$ appears in the $(j^*+2)$-th column of $\mathcal{A}$.  This corresponds to an $m$-block of $G^c_M$ that begins at $j^*m+d_{i^*}+r_{i^*}$.
\end{enumerate}
 
If we can determine the triple, $(j^*,i^*,r_{i^*})$, then  $\rho({\bf c})=j^*m+d_{i^*}+r_{i^*}$ defines the decoding output.

To determine $r_{i^*}$ we simply scan the sunmao list for a non-zero entry.  If none exists then $r_{i^*}$ is set to 0.
This step can be performed with computation complexity $O(n_0)$ for fixed $m$ and $g$.

It is easy to determine $i^*$ if there is a non-zero $r_i$ entry.  Next, we focus on the case that all $r_i=0$. 
Since every row of $\mathcal{A}$ has a period that is a multiple of $g$ and $g>1$,  hence the two-column structure mentioned above is sandwiched between two adjacent columns of the matrix $\mathcal{B}$.
We label these columns in $\mathcal{B}$ as $(B_{1,a^*+1},\ldots,B_{I,a^*+1})^T$ and $(B_{1,a^*+2},\ldots,B_{I,a^*+2})^T$, 
where $j^*=a^*g+b^*$, $0\leq a^* <Q, 0\leq b^*<g$. 

Note that for the sub-grid $I$,
\[
j_I=j^* {\rm ~mod~} gq_I,
\]
and for the other sub-grids, $j_i$ either satisfies
\begin{equation}\label{theorem: decode index 1}
j_i=j^* {\rm ~or~}  j^*+1 {\rm ~mod~} gq_i.
\end{equation}
The smallest $i$ for which the equation
\[j_i=j^* {\rm ~mod~} gq_i
\]
holds defines $i^*$, as justified by the 3 listed structures.  
For each index $i$, we want to decide
which case would hold in Equation \ref{theorem: decode index 1}.

Note that for $1 \leq i \leq I$, $j_i$ has a representation in the form
\begin{equation}\label{theorem:decoding index 2}
j_i=a_ig+b_i,
\end{equation}
$0\leq a_i <q_i, 0\leq b_i <g$.   Since $j^*=j_I+sgq_I$ for some $s \in \mathbb{Z}$,
\[
b^*=j^* {\rm ~mod~} g=j_I {\rm ~mod~} g=b_I.
\]
Similarly one can show that $j_i=j_I$ or $j_I+1 {\rm ~mod~} g$.  Note that the two cases are distinct since $g>1$.
So by computing $(a_i,b_i)$ by means of Equation \ref{theorem:decoding index 2} and by comparison with $b_I$, one can determine $i^*$.   This step can be performed with computation complexity $O(n_0)$.

If $b_I=g-1$ and $i^*>1$, the vector $(a_1,\ldots, a_I)^T$
consists of entries of two adjacent columns of $\mathcal{A}$.
Hence, for  $i, 1\leq i < i^*$, redefine $a_i$ to be 
$(a_i-1)  {\rm ~mod~} q_i$, and  no adjustments for other $a_i$'s
are required.
The modified column vector $(a_1,\ldots, a_I)^T$ 
is equal to the column vector  $(A_{1,ga^*+1},\ldots,A_{I,ga^*+1})^T$.  
This step can be performed with computation complexity $O(n_0)$.

The final step of the algorithm aims to determine $a^*$ from the modified vector $(a_1,\ldots, a_I)^T$,
which corresponds to the $(a^*+1)$-th column of $\mathcal{B}$.   Hence, $a^*$ can be determined with computation complexity no more than $O(n^2_0)$ by means of the Generalized CRT.

 With the determination of $a^*$, $b^*=b_I$, $j^*=a^*g+b^*$, we then set $\rho({\bf c})=j^*m+d_{i^*}+r_{i^*}$.
\end{proof}

For unitary braid code, the above procedure can be greatly simplified.  It is worthwhile to briefly restate the
algorithm steps when $g=2$.  In this case, entries of $\mathcal{A}$ have entries that are either even or odd and
$\mathcal{B}$ consists of column vectors with even value entries.  Since $r_i$ is always 0, we only need to determine
$j^*$ and $i^*$.  If the entry $j_I$ is even valued, then, the smallest $i, 1\leq i \leq I$ for which $j_i$ is even is equal to $i^*$.
For $1 \leq  i <i^*$, redefine $a_i$ as $a_i-1 {\rm ~mod~} q_i$, then the newly defined vector $(a_1,\ldots, a_I)^T$ corresponds to $(B_{1,a^*+1},\ldots,B_{I,a^*+1})^T$.

We provide a simple example to illustrate the unitary case.  Let $m=2, I=2, g=2$, $q_1=2, q_2=3$, 
then $M=24$.  The associate matrix of this braid code is:

\begin{equation}
\mathcal{A}=
\begin{bmatrix}
\color{blue}
{0 ~1~2 ~3 ~0 ~1 ~2 ~3~ 0~ 1~ 2 ~3}\\
\color{purple}
{0 ~1 ~2 ~3 ~4~ 5 ~0 ~1 ~2~3 ~4~ 5}\\
\end{bmatrix}. 	
\end{equation}

Notice that the columns of $\mathcal{A}$ are composed of either odd or even-valued entries, which allows determination of $i^*$.  Set 
\begin{equation}
\mathcal{B}=
\begin{bmatrix}
\color{blue}
{0 ~1~0 ~1 ~0 ~1 }\\
\color{purple}
{0 ~1 ~2 ~0 ~1~ 2 }\\
\end{bmatrix}. 	
\end{equation}
Codewords $\{\color{blue}{2},\color{purple}{2} \}$, $\{\color{blue}{3},\color{purple}{2} \}$,$\{\color{blue}{3},\color{purple}{3} \}$ and $\{\color{blue}{0},\color{purple}{3} \}$ all use the column, $(\color{blue}{1},\color{purple}{1})^T$, in $\mathcal{B}$ as input to the Generalized CRT algorithm.

 It is also of interest to note that the above coding defines a bijective mapping between $\mathbb{Z}_{24}$
 and $\mathbb{Z}_4 \times \mathbb{Z}_{6}$.  Hence, our proof of distinguishable color code can be regarded as yet another approach to generalize CRT.

If $\Phi$ defines a non-standard braid code on $G^c_M$, we
can embed $G^c_M$ to a cyclic grid on which a standard code
exists.  If $\Phi$ is a restriction of a standard braid code on the super-grid, the previously stated algorithm can be used
to decode codewords of $\Phi$.  If $\Phi$ is obtained by modifying
a standard braid code as described in Section \ref{section: 1D generalized braid}, then codewords
for grid points from $M-2m+1$ to $M-1$, need
to be screened and decoded separately.  These codewords
are characterized by the fact that they contain more than
1 symbol element from certain sub-grid color mapping.
The complexity of these additional steps is at most linear in $n_0$.

 \subsection{Decoding $n$-Dimensional Unitary Braid Code}\label{section:decoding nD}
There is also an efficient algorithm to decode multi-dimensional unitary braid code.
Let $\Phi$ define such a standard braid code on 
$G^c_{\bf M}$ with block size $\bf m$.
Let $B({\bf x})$ be the $\bf m$-block tagged at ${\bf x}=(x_1,\ldots, x_n)$.
According to Proposition \ref{prop:unitary code}, it is possible to decode $x_i$ from $\mathcal{P}_i(\Phi(B({\bf x})))$.

The braid code consists of $\nu({\bf m})$ product codes indexed by $\bf 0 \leq J < m$. 
Recall that $\mathcal{Q}_i \triangleq\{q^{(i)}_{\bf J}:{\bf 0}_n \leq {\bf J < m} \}$ is a collection of strictly positive integers with the lowest common multiplier denoted by $Q_i$, and $M_i=m_igQ_i$.
For each $\bf J$, $\Phi^{(i)}_{\bf J}$ defines a mapping on $G^c_{M_i/m_i}$, which can be represented as a row of integers, each representing a color multiset, (a singleton in this case,) as explained in the previous subsection.  The sequence has a minimum period of $gq^{(i)}_{\bf J}$.

Define an $\nu({\bf m})\times M_i/m_i$ matrix, $\mathcal{D}=(\mathcal{D}_{i,j})$, by mapping each of these $\nu({\bf m})$ rows of integers in the following way.
Let $w=\nu({\bf m})/m_i$, and divide the rows into $m_i$ consecutive bands, each band contains $w$ rows
starting from band 0.  The row defined by $\Phi^{(i)}_{(r;{\bf J}^-_i)}$ is mapped to the $r$-th band, where $0 \leq r <m_i$,
in some arbitrary order.

Then according to Equation \ref{eq:nD braid code}, the image set of the $j$-th aligned $m_i$-block of $G^c_{gQ_i}$ is
equal to the union of the entries of the $j$-th column of $\mathcal{D}$  and the image of the block beginning at $jm_i+r$, $0\leq r < m_i$ is equal to 
\[
\left(
\bigcup_{rw \leq s < \nu({\bf m})}\mathcal{D}_{s+1,j+1} \right) \bigcup
\left( \bigcup_{0 \leq s < rw}\mathcal{D}_{s+1,j+2} \right).
\]
On the other hand, matrix $\mathcal{D}$ can also be regarded as the associate matrix of a unitary braid code constructed on the
1D grid, $G^c_{\nu({\bf m})gQ_i}$ with block size $\nu({\bf m})$.
Hence, $x_i$ can be uniquely determined by the method presented in the previous sub-section with complexity
$O(n_0^2)$, where $n_0=\max_{i,\bf J} \lceil {\log_2 q^i_{\bf J}} \rceil$.

\begin{theorem}
There exists a decoding algorithm for $n$-dimensional unitary
braid code with computation complexity $O(n^2 _0)$.
\end{theorem}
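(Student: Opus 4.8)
The plan is to decode the $n$ coordinates of $\mathbf{x}$ one at a time and then appeal to the 1D theory already in hand. By Proposition~\ref{prop:unitary code}, an $n$-dimensional unitary braid code is $\mathbf{m}$-distinguishable precisely because, for each coordinate $i$, the projected multiset $\mathcal{P}_i(\Phi(B(\mathbf{x})))$ already determines $x_i$. Hence the decoder need not treat $\mathbf{x}$ as a single object: it suffices to build $n$ independent single-coordinate decoders and concatenate their outputs into $(x_1,\ldots,x_n)$. So first I would fix a coordinate $i$ and reduce its recovery to the 1D problem solved in Theorem~\ref{theorem:decoding 1D}.

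For a fixed $i$, the procedure has two stages. The preprocessing stage extracts $\mathcal{P}_i(\Phi(B(\mathbf{x})))$ from the received codeword by applying the $i$-th coordinate projection to each of the $\nu(\mathbf{m})$ received product colors; since $\nu(\mathbf{m})$ is a fixed constant, this costs $O(1)$. The decoding stage then uses the identification, established just above via Equation~\eqref{eq:nD braid code} and the band-structured matrix $\mathcal{D}$, that this projected multiset is exactly a codeword of a 1D unitary braid code on $G^c_{\nu(\mathbf{m})gQ_i}$ with block size $\nu(\mathbf{m})$ whose associate matrix is $\mathcal{D}$. Consequently the 1D decoding algorithm of Theorem~\ref{theorem:decoding 1D}, in its unitary simplification, applies verbatim and recovers $x_i$ at cost $O(n_0^2)$, with the Generalized CRT step dominating.

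For the aggregation, I would run this single-coordinate routine for each $i=1,\ldots,n$ and assemble the answers. Because the dimension $n$ and the block size $\mathbf{m}$ (hence $\nu(\mathbf{m})$) are treated as fixed parameters while the grid grows, the total cost is $n\cdot O(n_0^2)=O(n_0^2)$, matching the claim; the only quantity growing with the grid size is $Q_i$, captured by $n_0=\max_{i,\mathbf{J}}\lceil\log_2 q^{(i)}_{\mathbf{J}}\rceil$.

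I expect the substantive content to lie entirely in the 1D reduction rather than in the aggregation. The hard part is confirming that the matrix $\mathcal{D}$ really is the associate matrix of a genuine 1D unitary braid code, so that Theorem~\ref{theorem:decoding 1D} transfers without modification; this is exactly where the band arrangement of the $\nu(\mathbf{m})$ projected rows must be shown to reproduce the union formula of Equation~\eqref{eq:nD braid code}. Once that correspondence is in place, everything else—the per-coordinate $O(n_0^2)$ bound and the multiplication by the fixed constant $n$—is routine bookkeeping.
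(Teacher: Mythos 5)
Your core argument coincides with the paper's own treatment of the main case: for a \emph{standard} $n$-dimensional unitary braid code, project the received codeword onto each coordinate $i$ (Proposition~\ref{prop:unitary code}), observe via Equation~\eqref{eq:nD braid code} that the banded matrix $\mathcal{D}$ is the associate matrix of a 1D unitary braid code on $G^c_{\nu(\mathbf{m})gQ_i}$ with block size $\nu(\mathbf{m})$, and invoke the 1D decoder of Theorem~\ref{theorem:decoding 1D} (dominated by the Generalized CRT step) to recover $x_i$ in $O(n_0^2)$; since $n$ and $\mathbf{m}$ are fixed, running this for all coordinates is still $O(n_0^2)$. This is exactly the reduction the paper uses, so for standard codes your proof is fine.

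There is, however, a scope gap. In this paper ``unitary braid code'' is not limited to the standard construction: Section~\ref{section:general braid code} extends unitary braid codes to grids of arbitrary size, either by restricting a standard code or by modifying it with extra colors $d^{(i)}_{\mathbf{J}}$ assigned near the wrap-around boundary, and the theorem is intended to cover these codes too. Indeed, since the standard-case reduction appears in the text just before the theorem, the paper's proof of the theorem itself is devoted almost entirely to the non-standard cases: for a restriction the standard decoder still applies, but for a modified code the codewords of blocks tagged at exceptional points contain the added colors and must be screened and decoded separately (the paper bounds this extra work by $O(n_0)$, so the overall complexity is unchanged). Your proposal never addresses these non-standard codes, so as written it establishes the theorem only for standard unitary braid codes on grids of size $M_i=gm_iQ_i$. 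The missing piece is short --- detect the added colors, which occur only in exceptional codewords, and recover the corresponding coordinate directly from which factors $d^{(i)}_{(k;\mathbf{0}^-_i)}$ appear, as in the distinguishability argument of Theorem~\ref{theorem:general unitary} --- but it is needed for the statement in its intended generality.
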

\begin{proof}
For a standard braid code, the argument is shown
already.  For a non-standard braid code obtained as
a restriction of a standard code, it is obvious that a decoding
algorithm with complexity $O(n^2_0)$ exists.  If the code
is obtained by adding colors to a standard code
as described in Section \ref{section:general braid code}, then the codewords at the exceptional points need to be
handled separately.  The addition steps can be done
with computation complexity at most $O(n_0)$.
\end{proof}
\section{Error Correction Property of 1D Braid Code}
\label{section:non-product}

A multiset code consists of $m$ color symbols, but in actual operation some color symbols may be missing.  Is it possible to decode the source information in this case?   

Let $G^c=\mathbb{Z}_M$ be a cyclic grid on which a color mapping, $\Phi$, is defined. Consider the case where a multiset code with $e$ missing symbols has been received; in other words, we have to decode based only on $m-e$ color symbols.  If the received symbols are in the form:
\[
\{ c(x_i),c(x_{i+1}),\ldots, c(x_{i+m-e-1})\}
\]
for some $x_i,\ldots,x_{i+m-e+1} \in G^c$ (the arithmetic used in the index calculation is modulus $M$), the complete multiset code could be one of the following $e$ choices:
\[
\begin{array}{c}
\{c(x_{i-e}),c(x_{i-e+1}),\ldots, c(x_{i+m-e-1})\},\\
\{c(x_{i-e+1}),c(x_{i-e+1}),\ldots, c(x_{i+m-e})\},\\
\vdots\\
\{c(x_{i}),c(x_{i-e+1}),\ldots, c(x_{i+m-1})\}.
\end{array}
\]
Therefore, no matter how the color mapping $c$ is defined, it is not possible to determine the exact location of the $m$-block in this case.  However, we note that the distance between tag points of  these blocks is at most $e$.  Hence, it may be possible to decode the source with a resolution
less than $e$, that is, the distance between possible 
solution candidates are at most at a distance $e$ from each other. On the other hand, it is not clear that
it is always possible to decode within such a resolution
for a given value $e$.

\begin{definition}
Consider a cyclic grid $G^c=\mathbb{Z}_M$ with block $m$.
For $e,1 \leq e <m$, a multiset color code for $G^c$ is correctable with resolution $e$, if given any multi-subset of the image set of an $m$-block tagged at $x \in G^c $ with cardinality $m-e$, it is possible to determine $x$ with resolution up to $e$.
\end{definition}

Consider a 1D grid $G^c_M$ with a braid code $\Phi$ defined with block size $m$, $g\geq 2$ , $M=gmQ$, where $Q$ is the $lcm$ of $\{q_1,\ldots,q_m\}$.  Moreover, the repetitive code on sub-grid $i$, which has size $M/m$ is generated by a 1-distinguishable code defined on $G^c_{p_i}$ where $\ell_i=gq_i$.

For $i, 1\leq i <m$, let $\hat{q}_i$ denote the minimum of products formed by subsets of $Q$ with $i$ elements.  

\begin{theorem}
If we restrict $\Phi$ to any sub-grid $\mathbb{Z}_{M'}$ with $M' \leq gm\hat{q}_{m-e}$, then the restricted code on $\mathbb{Z}_{M'}$ is correctable within resolution $e, 0\leq e <m$.
\end{theorem}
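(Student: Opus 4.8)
The plan is to first translate each surviving colour into residue data and then bound how many tag points can reproduce that data. In a unitary braid code every sub-grid $i$ carries a repetitive code whose generator has period $\ell_i = gq_i$, so by Corollary~\ref{corol:braid code 1} the colour of a grid point lying in sub-grid $i$ pins down its time index $j=\lfloor\,\cdot/m\rfloor$ only modulo $gq_i$, and by Corollary~\ref{corol:braid code 2} the $m$ colours inside a single block are distinct and drawn from the $m$ mutually disjoint sub-alphabets. Writing a tag point as $x=jm+r$ with $0\le r<m$, Lemma~\ref{lemma:sunmao 1D} shows that the block meets sub-grid $i$ at time index $j$ when $i>r$ and at time index $j+1$ when $i\le r$. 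Hence each received symbol identifies its sub-grid $i$ and reveals a residue $p_i$ equal to that point's time index modulo $gq_i$.

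With $e$ erasures we hold $p_i$ only for the $m-e$ sub-grids in some set $I_{\mathrm{obs}}$. I would call a position $x'=j'm+r'$ a \emph{consistent candidate} if for every $i\in I_{\mathrm{obs}}$ the time index it assigns to sub-grid $i$ (namely $j'$ if $i>r'$, else $j'+1$) is congruent to $p_i$ modulo $gq_i$. The true tag point is consistent, and correctability with resolution $e$ is exactly the assertion that every consistent candidate lies within distance $e$ of it. So it suffices to compare an arbitrary consistent $x'$ with the true $x$, set $\delta=j'-j$, and bound $|x'-x|=|\delta m+(r'-r)|$.

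This comparison partitions the $m$ sub-grids into the $|r'-r|$ indices lying strictly between $r$ and $r'$, on which $x'$ and $x$ assign time indices differing by $\delta\pm1$, and the remaining sub-grids, on which they differ by exactly $\delta$. Each observed sub-grid of the second type forces $gq_i\mid\delta$, while each observed sub-grid of the first type forces $gq_i\mid(\delta\pm1)$. When $\delta=0$ the latter is impossible because $gq_i\ge g\ge 2$, so every between-index sub-grid must have been erased; this yields $|r'-r|\le e$ and hence $|x'-x|=|r'-r|\le e$, giving resolution at once. The real content is to exclude $\delta\neq 0$: the observed second-type sub-grids then make $\delta$ a nonzero common multiple of their $gq_i$, so $|\delta|\ge g\cdot\mathrm{lcm}$ of the corresponding $q_i$, and since $|x'-x|\ge m(|\delta|-1)$ cannot exceed the grid length $M'\le gm\hat q_{m-e}$, one is driven to $\mathrm{lcm}\le\hat q_{m-e}$, which I would convert into a contradiction.

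The main obstacle is precisely this $\delta\neq 0$ case, and within it the reconciliation of the least common multiple produced by the divisibility constraints with the \emph{product} quantity $\hat q_{m-e}$ appearing in the hypothesis. The number of surviving second-type sub-grids is $(m-e)$ minus the erased between-index sub-grids, so I must argue that any erasures spent suppressing a first-type contradiction shrink the second-type observed set by the same amount, keeping it effectively an $(m-e)$-element set and forcing its modulus to be at least $\hat q_{m-e}$. This bookkeeping — distributing the $e$ erasures between the two sub-grid types in the worst case and lower-bounding the induced modulus by the minimum $(m-e)$-fold product $\hat q_{m-e}$ (which matches the relevant $\mathrm{lcm}$ under the pairwise coprimality of the $q_i$ used in the construction) — is the delicate step; everything else is routine once the residue reading of the first paragraph is set up.
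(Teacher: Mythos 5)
Your residue set-up and your treatment of the $\delta=0$ case are sound, but the proof genuinely stops exactly where you say it does, and the plan you sketch for the $\delta\neq 0$ case is aimed at the wrong target. You propose to ``exclude $\delta\neq 0$'' by contradiction, but candidates with $\delta=\pm 1$ really do exist and are not contradictory: take $m=3$, $e=1$, true tag $x=2=0\cdot m+2$, and erase the symbol of the point at offset $2$; the block tagged at $x'=3=1\cdot m+0$ contains both surviving points, so $x'$ is a consistent candidate with $\delta=1$. Such candidates must be shown to lie within distance $e$ of $x$, not ruled out. Relatedly, the bookkeeping you describe (erasures ``shrinking the second-type observed set'' while still forcing its modulus up to $\hat q_{m-e}$) collapses in precisely this situation, because there all observed sub-grids are of the first type and no divisibility constraint on $\delta$ itself survives; what must be bounded is $\delta\pm 1$, yielding $\delta=\mp 1$ and then $|x'-x|=m-|r'-r|\le e$. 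The observation that makes the case analysis close, and which is absent from your sketch, is that the two constraint types are mutually exclusive: every $gq_i$ is a multiple of $g\ge 2$, and $g$ cannot divide both $\delta$ and $\delta\pm 1$, so either all observed sub-grids are second-type (then the $\lcm$ bound together with $M'\le gm\hat q_{m-e}$ forces $\delta=0$, the between-offsets must all have been erased, and $|x'-x|=|r'-r|\le e$) or all are first-type (then the same bound forces $\delta\mp 1=0$, again giving $|x'-x|\le e$).

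For comparison, the paper sidesteps this entire complication by pairing the surviving \emph{grid points} of the two blocks by color, rather than reasoning about tag points and their time-index assignments. Writing the paired points as $x'_i=j_im+x_i$, one gets $gq_{s_i}\mid j_i$ for each $i$ \emph{individually}; the constraint that all $x'_i$ lie in one $m$-block gives $|j_i-j_1|\le 1$, and divisibility by $g\ge 2$ then forces all $j_i$ to be equal; the grid-size hypothesis forces that common value to be $0$, so the two surviving point sets coincide, and the only remaining ambiguity is which $m$-block contains that common set --- at most $e+1$ tags spanning a distance of at most $e$. Your first-type/second-type split is exactly the case analysis this pairing makes unnecessary. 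Finally, your parenthetical claim that pairwise coprimality of the $q_i$ is ``used in the construction'' is inaccurate --- the braid-code construction places no such restriction; passing from divisibility by each observed $gq_{s_i}$ to the product quantity $\hat q_{m-e}$ requires the $\lcm$ of the observed $q_{s_i}$ to dominate $\hat q_{m-e}$, a point that needs to be made explicit (the paper's own write-up is also silent on it), either by assuming coprimality or by defining $\hat q$ through least common multiples.
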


\begin{proof}
Consider two sets $\mathcal{S}$ and  $\mathcal{S}'$ with $m-e$ elements each that are subsets of respective $m$-blocks in $G^c_M$ .  By rotating the index notation, we can assume without lost of generality that
\[
S=\{x_1,\ldots, x_{m-e} \}
\]
with $x_1=0$.  If $S$ and $S'$ have identical images under $\Phi$, 
then their elements can be paired by the sub-grid they belong to.  Let $x'_i$ be the element in $S'$ that pairs with $x_i$.
Then $x'_i=j_im+x_i$ for some $j_i$.  The fact $x'_i$'s belong
to the same $m$-block implies that $|j_i-j_1| \leq 1$.
Since $x'_i-x_i$ is divisible by $gq_i$, this implies
$j_i=j_1$ for all $i$ since $g>2$.,  It follows then that
$x'_1$ is divisible by all  $i \in \{1,\ldots, m-e\}$.
So if $\Phi$ is restricted to $G^c_{M'}$ this implies $x'_1=0$.
If $x_{m-e}=m-1$,  then the $m$-block containing $\mathcal{S}$ is unique and is equal to $\{0,\ldots,m-1\}$.  
If $x_{m-e}=m-i$, $1<i\leq e+1$, then there are $i$ 
candidate $m$-blocks that can serve as the superset of $\mathcal{S}$, and the maximum distance before two candidate solutions is $e$.
\end{proof}


\section{Conclusion and Open Problems}
\label{section:conclusion}

Multiset color code can serve as a natural data representation format for many applications.  Object tracking in a multi-dimensional sensor grid offers an immediate application opportunity.  
Another application is a light based positioning system as presented in \cite{OGC23}.
For more examples, consider a chain of connected parts that can be tagged by markers at regular intervals; one can envision a DNA strand of macromolecules for illustration.  When a segment of the chain, say consisting of a block of $l$ markers, is broken up into the component macromolecules, the ordering of the markers is destroyed.   However, the multiset of markers may provide useful information to locate the broken site of the segment.

By conceptualizing the set of markers as a multi-dimensional grid.
we detailed a methodology of designing multiset color codes by first decomposing a grid into sub-grids, obtain color multiset coding  solutions for the sub-grids, and then piece the solutions together to
serve the original grid.  We refer the approach as {\it sunmao construction}.  To further concretize the idea, we propose a class of codes, referred to as braid codes, first for 1-dimensional grids and then extend to multi-dimensional grids by means of product code.  We can measure coding efficiency by the number of distinct colors need to uniquely identify all blocks of a fixed size in a grid.
We prove that the coding efficiency of braid codes has asymptotic  order  equal to that of the optimal codes as the grid size tends to infinity,   We also show that fast
decoding algorithms exist for braid codes and
1-dimensional braid code has inherent error correction properties.

Despite the results presented, there are many challenging problems that remain to be addressed in color multiset coding.  We mention two areas:

First of all, although braid codes can achieve asymptotically optimal order, the problem of finding color multiset codes that require the minimal number of distinct colors for a grid of a given size is an interesting and difficult problem.  Even for low dimension grids, such as 1D or 2D grids, these investigations will lead to many new combinatorial challenges,

Second, construction of error correction color codes is of great theoretic and practical interest.  Much work remains to be done in this direction. 


\appendices

\section{Proof of Lemma~\ref{lemma:sunmao 1D}}
\label{appendix:sunmao 1D}

For $0\leq j < M/m-1$, the $m$-block $B$ consists of grid points $\{y\in \mathcal{Z}_M: jm+d_i+x_r \leq y <(j+1)m+d_i+x_r \}$,
and for $j=M/m-1$, $\{y\in \mathcal{Z}_M: jm+d_i+x_r \leq y \leq jm+m-1 \}\cup
\{y\in \mathcal{Z}_M:0\leq y <d_i+x_r \}$.

If $x_r=0$ and $i \leq l \leq I$, then $B\cap S_l=\{y\in \mathcal{Z}_M: jm+d_l \leq y  \leq jm+d_{l+1}-1 \}$; if $1 \leq l <i$, then $B\cap S_l=\{y\in \mathcal{Z}_M: (j+1)m+d_l \leq y \leq (j+1)m+d_{l+1}-1 \}$.
Statement (ii)a and (ii)b then follow.

If $x_r>0$, then $B\cap S_i=\{y\in \mathcal{Z}_M: jm+d_i+x_r \leq y \leq jm+d_{i+1}-1 \}
\cup \{y\in \mathcal{Z}_M: (j+1)m+d_i \leq y <(j+1)m+d_i+x_r \}$.  
Under $\theta_i$, this set maps to $\{y \in \mathcal{Z}_{M_i}:jm_i+x_r \leq y \leq jm_{i}+m_i-1 \} \cup
\{y \in \mathcal{Z}_{M_i}: (j+1)m_i \leq y < (j+1)m_i+x_r \}$,
which is the non-aligned $m_i$-block at $jm_i+x_r$ in $G^c_{M_i}$.

For $i<l \leq I$,
$B\cap S_l=\{y \in \mathcal{Z}_M: jm+d_l \leq y \leq jm+d_{l+1}-1 \}$,
which is mapped to the $m_l$-block at $jm_l$, an aligned block.

For $1 \leq l < i$,
$B\cap S_l=\{y \in \mathcal{Z}_M: (j+1)m+d_l \leq y \leq (j+1)m+d_{l+1}-1 \}$,
which is mapped to the $m_l$-block at $(j+1)m_l$, also
an aligned block.  Statement (ii)c holds.  
Statement (i) then follows.

Finally, suppose $x'=j'm+d_{i'}+x'_r$ is a distinct grid point from $x$ with the same sub-block decomposition.  
The two corresponding sets of sub-blocks cannot be identical unless $j=j'$ and $i=i'$. 
But then $x_r \ne x'_r$ and the blocks at $x$ and $x'$ of $G^c_{M_i}$ are different, a contradiction. 
\qed


\section{Proof of Proposition~\ref{prop:braid-code-color-number}}
\label{appendix:braid-code-color-number}
Suppose $\ell_i>\ell_j$ for some $1\leq i<j \leq I$.
Since $\boldsymbol{\ell}$ is canonical, this occurs only when $m_i\neq m_j$, and there exists an index $t$ with $i\leq t<j$ such that $\ell_t>\ell_{t+1}$.
As $m_I\leq 3$, 
It suffices to show that
\begin{equation}\label{eq:canonical}
K^c_{\ell_t}(\alpha)+K^c_{\ell_{t+1}}(\beta)>K^c_{\ell_{t+1}}(\alpha)+K^c_{\ell_{t}}(\beta),
\end{equation}
for $(\alpha,\beta)=(1,2), (1,3)$ and $(2,3)$.

We need the exact values of $K^c_\ell(m)$ for $m=1,2,3$.
Let $M^c_m(k)\triangleq\max\{\ell:\,K^c_\ell(m)\leq k\}$ denote the maximum cyclic 1D grid size achievable using $k$ symbols subject to the $m$-distinguishable condition.
The values of $M^c_{m}(k)$ for $1\leq m\leq 3$ have been established in~\cite[Section III]{Paper1}:
\begin{equation}\label{eq:largest-length-1}
    M^c_1(k)=k
\end{equation}
for $k\in\mathbb{Z}^+$;
\begin{equation}\label{eq:largest-length-2}
    M^c_2(k)=\begin{cases}
        \binom{k+1}{2} & \text{if }k\equiv 1\bmod 2,\\
        \binom{k+1}{2}-\frac{k}{2} & \text{if }k\equiv 0\bmod 2;
    \end{cases}
\end{equation}
and
\begin{equation}\label{eq:largest-length-3}
    M^c_3(k)=\begin{cases}
        \binom{k+2}{3} & \text{if }k\equiv 1,2\bmod 3,\\
        \binom{k+2}{3}-\frac{k}{3} & \text{if }k\equiv 0\bmod 3.
    \end{cases}
\end{equation}
Note that each of the three sequences $M^c_1(k), M^c_2(k)$ and $M^c_3(k)$ is strictly increasing.

First, consider the case when $(\alpha,\beta)=(1,2)$.
For notational convenience, let $k_1=K^c_{\ell_t}(1), k'_1=K^c_{\ell_{t+1}}(1), k_2=K^c_{\ell_t}(2)$ and $k'_2=K^c_{\ell_{t+1}}(2)$.
We aim to claim that $\left(k_1-k'_1\right) - \left(k_2-k'_2\right)>0$.
By assumption, one has $k_1\geq k'_1$ and $k_2\geq k'_2$.
It follows from~\eqref{eq:largest-length-1} that $k_1=\ell_t$ and $k'_1=\ell_{t+1}$.
When both $k_2$ and $k'_2$ are odd, by~\eqref{eq:largest-length-2}, we have 
\begin{align*}
\ell_{t}-\ell_{t+1} = \binom{k_2+1}{2}-\binom{k'_2+1}{2} = \sum_{i=k'_2+1}^{k_2}\binom{i}{1} = \frac{1}{2}(k_2+k'_2+1)(k_2-k'_2),
\end{align*}
which yields $k_2-k'_2=\frac{2}{k_2+k'_2+1}(\ell_t-\ell_{t+1})$.
Therefore, 
\begin{align*}
    \left(k_1-k'_1\right) - \left(k_2-k'_2\right) = \frac{k_2+k'_2-1}{k_2+k'_2+1}\left(\ell_t-\ell_{t+1}\right)>0.
\end{align*}
When $k_2$ is odd and $k'_2$ is even, it follows from~\eqref{eq:largest-length-2} that
\begin{align*}
    \ell_{t}-\ell_{t+1} &= \binom{k_2+1}{2}-\binom{k'_2+1}{2}+\frac{k'_2}{2} \\
    &= \frac{1}{2}\left((k_2+k'_2)(k_2-k'_2)+k_2\right) >\frac{1}{2}(k_2+k'_2)(k_2-k'_2), 
\end{align*}
which yields $k_2-k'_2<\frac{2}{k_2+k'_2}(\ell_t-\ell_{t+1})$.
As $k_2+k'_2\geq 3$ in this case, 
\begin{align*}
    \left(k_1-k'_1\right) - \left(k_2-k'_2\right) = \frac{k_2+k'_2-2}{k_2+k'_2}\left(\ell_t-\ell_{t+1}\right)>0.
\end{align*}
Thus, the proof for $(\alpha,\beta)=(1,2)$ is complete, since the other two cases follow by the same argument.

Then, consider the case when $(\alpha,\beta)=(1,3)$.
Let $k_3=K^c_{\ell_t}(3)$ and $k'_3=K^c_{\ell_{t+1}}(3)$.
We aim to claim that $\left(k_1-k'_1\right) - \left(k_3-k'_3\right)>0$.
When both $k_3$ and $k'_3$ are not multiples of $3$, by~\eqref{eq:largest-length-3}, we have 
\begin{align*}
\ell_{t}-\ell_{t+1} = \binom{k_3+2}{3}-\binom{k'_3+2}{3} = \sum_{i=k'_2+1}^{k_2}\binom{i}{2} > \sum_{i=k'_2+1}^{k_2}\binom{i}{1}.
\end{align*}
The rest of the proof reduces to the first case.

Finally, consider the case when $(\alpha,\beta)=(2,3)$.
Similarly, to simplify the presentation, we only consider the case that $k_2,k'_2$ are odd and $k_3,k'_3$ are not multiples of $3$, since the other cases can be dealt with in the same way.
In this case, by~\eqref{eq:largest-length-2}--\eqref{eq:largest-length-3}, we have $\binom{k_2+1}{2}=\ell_t=\binom{k_3+2}{3}$ and $\binom{k'_2+1}{2}=\ell_{t+1}=\binom{k'_3+2}{3}$.
Starting from the identity $\binom{n+1}{2}=\binom{m+2}{3}$, we simplify it to $3n^2+3n=m(m+1)(m+2)$, and consequently obtain $n=\frac{1}{2}\Big(-1+\sqrt{1+\frac{4}{3}m(m+1)(m+2)}\Big)$, which implies that $n-m = \sqrt{\frac{1}{3}\left(m+3+\frac{2}{m}+\frac{3}{4m^2}\right)}-\frac{1}{2}$.
Define a function 
\begin{align*}
    f(x)\triangleq \sqrt{\frac{1}{3}\left(x+3+\frac{2}{x}+\frac{3}{4x^2}\right)}-\frac{1}{2}.
\end{align*}
From the first derivative of $f(x)$, we can easily verify that $f(x)$ is increasing when $x>1$.
Since $k_3>k'_3$ by assumption, it follows that
\begin{align*}
    k_2-k_3 = f(k_3) > f(k'_3) = k'_2-k'_3,
\end{align*}
as desired. 
\qed

\section{Proof of Proposition \ref{prop:restriction}}
\label{appendix:restriction}

Let $C_i$ denote the color set of the $i$-th sub-grid under $\Phi$.  To facilitate our subsequent discussion, we introduce a characteristic word of length equal to the grid size, $M$.  The word summarizes information on $\Phi$ so that its $j$-th letter is $C_i$ if the $j$-th grid point maps to $C_i$.  Note that the word should be interpreted cyclically so that its first and last alphabet symbols are considered consecutive.

Since $m$ does not divide $M_r$, we can write $M_r$ as $lm+d$, with $0<d<m$.  
The unitary braid code restricted to $G^c_{lm+d}$ yields the following word:
\[
\underbrace{
C_1 C_2 \cdots C_m \cdots  C_1 C_2 \cdots C_m}_\text{{\it lm} letters}
C_1 C_2 \cdots C_d. 
\]
Given a word fragment with $m$ consecutive letters, we refer to its corresponding multiset as an {\it $m$-fragment multisets}.  Note that if two $m$-fragment multisets are distinct, the color codewords of corresponding $m$-blocks must be different.

For $i$ ranging from 1 to $M_r-m+1$ , the multisets of $m$-fragment starting at the $i$-th grid points are $\{C_1, \ldots, C_m \}$, which can be represented by the $m$-dimensional vector,  ${\bf 1}_m=(1, 1, \ldots, 1)$.  In this representation, the $j$-th entry denotes the number of set elements from $C_j$.  To proof the proposition, we only need to check the images set of $m$-blocks beginning at grid point $(M_r-m+1+j)$ for $j$  ranging from 1 to $m-1$ are distinct.  We claim that the corresponding $m$-fragment multisets are distinct.

First consider the case $d\leq m-d$. We list the $m-1$ fragment multisets below.  There are $d$ words of the form
\begin{align*}
\{C_d, C_1, \ldots, C_{m-1}\}, \{C_{d-1}, C_d, C_1, \ldots, C_{m-2}\}, \ldots, \{C_1,\ldots, C_d, C_1, \ldots, C_{m-d} \},
\end{align*}
followed by $m-2d$ words of the form
\begin{align*}
\{C_m, C_1,\ldots, C_d, C_1, \ldots, C_{m-d-1} \}, \ldots, \{C_{2d+1}, \ldots,C_m, C_1,\ldots, C_d, C_1, \ldots, C_{d}\},
\end{align*}
and finally $d-1$ words of the form
\begin{align*}
\{C_{2d}, \ldots,C_m, C_1,\ldots, C_d, C_1, \ldots, C_{d-1} \}, \ldots, \{C_{d+2}, \ldots, C_m,C_1,\ldots, C_d, C_1 \}.
\end{align*}
The vector representation for the first $d$ multisets:
\begin{align*}
&(\underbrace{1,1,...1,2}_\text{\it d},1,\ldots,1,\underbrace{1,1,\ldots,1,0}_\text{\it d}), (\underbrace{1,1,...2,2}_\text{\it d},1,\ldots,1,\underbrace{1,\ldots 1,0,0}_\text{\it d}), \\ 
&\ldots \ldots, (\underbrace{2,2,...,2}_\text{\it d},1,\ldots,1,\underbrace{0,\ldots,0}_\text{\it d});
\end{align*}
for the second group:
\begin{align*}
&(\underbrace{2,2,...,2}_\text{\it d},1,\ldots,1,0,\underbrace{0,\ldots,1}_\text{\it d}), (\underbrace{2,2,...,2}_\text{\it d},1,\ldots,1,0,0,\underbrace{0,\ldots,1,1}_\text{\it d}), \\
&\ldots\ldots, (\underbrace{2,2,...,2}_\text{\it d},\underbrace{0,\ldots,0}_\text{\it d},\underbrace{1,\ldots,1}_\text{\it m-2d});
\end{align*}
and, for the third group:
\begin{align*}
&(\underbrace{2,...,2,1}_\text{\it d},\underbrace{0,\ldots,0,1}_\text{\it d},1,\ldots,1), (\underbrace{2,...2,1,1}_\text{\it d},\underbrace{0,\ldots,0,1,1}_\text{\it d},1,\ldots,1),\\
&\ldots\ldots, (\underbrace{2,1,...,1}_\text{\it d},0,1,\ldots,1).
\end{align*}
It is clear that these vectors are distinct and not equal to ${\bf 1}_m$.

Suppose $d> m-d$.  Again we only need to check the image sets of $m$-blocks beginning at grid point $(M-m+1+j)$ for $j$  ranging from 1 to $m-1$ are distinct. There are $m-d$ multisets of the form
\begin{align*}
\{C_d, C_1, \ldots, C_{m-1}\}, \{C_{d-1},C_d, C_1, \ldots, C_{m-2}\},\ldots, \{C_{2d-m+1},\ldots, C_d, C_1, \ldots, C_{d}\};
\end{align*}
there are $2d-m$ multisets of the form
\begin{align*}
\{C_{2d-m},\ldots, C_d, C_1, \ldots, C_{d-1}\}, \ldots, \{C_1,\ldots, C_d, C_1, \ldots, C_{m-d} \},
\end{align*}
and there are $m-d-1$ multisets of the form
\begin{align*}
\{C_m,C_1,\ldots, C_d, C_1, \ldots, C_{m-d-1} \}, \ldots, \{C_{d+2}, \ldots, C_m,C_1,\ldots, C_d, C_1 \}.
\end{align*}
The vector representation of these $(m-1)$ multisets are as follows.  
For the first group:
\begin{align*}
&(\underbrace{1,...,1}_\text{\it 2d-m},\underbrace{1,\ldots,1,2}_\text{\it m-d},\underbrace{1,\ldots,1,0}_\text{\it m-d}), (\underbrace{1,...,1}_\text{\it 2d-m},\underbrace{1,\ldots,1,2,2}_\text{\it m-d},\underbrace{1,\ldots,1,0,0}_\text{\it m-d}), \\
&\ldots \ldots, (\underbrace{1,...,1}_\text{\it 2d-m},\underbrace{2,\ldots,2}_\text{\it m-d},\underbrace{0,\ldots,0}_\text{\it m-d}).
\end{align*}
For the second group:
\begin{align*}
(\underbrace{1,...,1,2}_\text{\it 2d-m},\underbrace{2,\ldots,2,1}_\text{\it m-d},\underbrace{0,\ldots,0}_\text{\it m-d}), \ldots, (\underbrace{2,2,...,2}_\text{\it 2d-m},\underbrace{1,1,\ldots,1}_\text{\it m-d},\underbrace{0,\ldots,0}_\text{\it m-d}).
\end{align*}
For the third group:
\begin{align*}
(\underbrace{2,...,2,1}_\text{\it 2d-m},\underbrace{1,\ldots,1}_\text{\it m-d},\underbrace{0,\ldots,0,1}_\text{\it m-d}), \ldots, (\underbrace{2,1,...,1}_\text{\it d},0,1,\ldots,1).
\end{align*}
These vectors are distinct among each other and are not equal to ${\bf 1}_m$.  
\qed

\section{Proof of Proposition~\ref{prop:unitary code}}
\label{appendix:unitary code}

The if part is trivial.  
To establish the only if statement, suppose that $\mathcal{P}_1(\Phi(B({\bf x})))$ cannot uniquely determine $x_1$. 
This means that there exist ${\bf x}=(x_1,{\bf y})$ and ${\bf x'}=(x'_1,{\bf y}')$ with $\mathcal{P}_1(\Phi(B({\bf x})))=\mathcal{P}_1(\Phi(B({\bf x'})))$ and $x_1\ne x'_1$.

Let ${\bf z}=(x_1,{\bf y}')$ and consider the $\bf m$-block, $B(\bf z)$. 
We claim that $\mathcal{P}_1(\Phi(B({\bf x})))=\mathcal{P}_1(\Phi(B( {\bf z})))$.
To show this, note that $B({\bf x})$ consists of grid points of the form ${\bf u}=(x_1+a,{\bf y+b})$, where $0\leq a <\mi{1}$ and ${\bf 0}_{n-1}\leq {\bf b} <(\mi{2},\ldots,\mi{n})$.
Similarly, $B({\bf z})$ consists of grid points of the form ${\bf v}=(x_1+a,{\bf y'+b})$.
If $\bf u,v$ both belong to the sub-grid, $G^c_{\bf J}$, then
\[
\mathcal{P}_1(\Phi({\bf u}))=\mathcal{P}_1(\Phi({\bf v}))
=\Phi^{(1)}_{\bf J}(x_1+a).
\]
So $\mathcal{P}_1(\Phi(B({\bf x})))=\mathcal{P}_1(\Phi(B({\bf z})))$, and it follows from the assumption that $\mathcal{P}_1(\Phi(B({\bf z})))=\mathcal{P}_1(\Phi(B({\bf x}')))$. 

Consider two points ${\bf v}=(x_1+a,{\bf y'+b})\in B({\bf z})$ and ${\bf w}=(x_2+a,{\bf y'+b})\in B({\bf x'})$ that belong to the same sub-grid $\bf J$.  
Recall that each $\bf m$-block contains only one point in sub-grid $\bf J$.  
Since the color sets for different sub-grids are distinct, $\mathcal{P}_1(\Phi(B({\bf z})))$ and $\mathcal{P}_1(\Phi(B({\bf x}')))$ are identical, which implies that
\[
\mathcal{P}_1(\Phi({\bf w}))=\Phi^{(1)}_{\bf J}(x_2+a)=
\mathcal{P}_1({\Phi(\bf v}))
=\Phi^{(1)}_{\bf J}(x_1+a).
\]
For $2\leq i \leq n$, $\mathcal{P}_i(\Phi({\bf w}))=\mathcal{P}_i({\Phi(\bf v}))$ by definition of $\Phi$.
Hence, $\Phi({\bf v})=\Phi({\bf w})$. 
This holds for all similar pairs of grid points in $B(\bf x')$ and $B(\bf z)$, so $\Phi(B({\bf x}'))=\Phi(B({\bf z}))$.  
But the two $m$-blocks are distinct, a contradiction.  
Similar arguments can be applied to coordinate $l> 1$ to complete the proof of the only if statement.
\qed


\bibliographystyle{IEEEtran}

\bibliography{IEEEabrv,reference,reference_vlp}
\end{document}